\documentclass[twocolumn,prb,amsmath,amssymb]{revtex4-2}
\usepackage{amsthm}
\usepackage{booktabs}
\usepackage{array}[=2016-10-06]
\usepackage{graphicx}
\usepackage{lineno}
\usepackage[colorlinks=true,linkcolor=blue,citecolor=blue,urlcolor=blue]{hyperref}

\newtheoremstyle{phys}
  {\medskipamount}{\medskipamount}{\upshape}{0pt}{\bfseries}{.}{0.5em}
  {\thmname{#1}\thmnumber{ #2}\thmnote{ (#3)}}
\theoremstyle{phys}
\newtheorem{theorem}{Theorem}
\newtheorem{lemma}{Lemma}
\newtheorem{corollary}{Corollary}
\newtheorem{proposition}{Proposition}
\newtheorem{result}{Result}
\newtheorem{definition}{Definition}
\newtheorem{remark}{Remark}

\begin{document}

\title{Superconducting Pairing Symmetry on Geometric-Algebra Foundations\\
via the Scalar-Projection Method}

\author{Youping Dai}
\email{daiyouping@cuhk.edu.cn}
\affiliation{School of Science and Engineering, The Chinese University of Hong Kong, Shenzhen, China}


\begin{abstract}
We formulate the low-energy theory of superconducting pairing in the
geometric algebra $Cl_{3,0}$. Conventional BCS mean-field theory and
its extensions ($s$-, $p$-, $d$-wave, etc.) have algebraic counterparts
in an eight-dimensional real multivector space. The formulation proceeds from the geometric-algebra scalar-projection
method; it is an algebraic reconstruction rather than a change of
representation, and is physically equivalent to the standard
complex-matrix formulation. The construction is a two-level projection:
symmetry-space matrix blocks are represented as Clifford multivectors, and the grade-0 projection of
their geometric products extracts the scalar blocks from which Hamiltonians
are built. In this reconstruction, the dichotomy of pairing channels acquires an
algebraic root: it follows from the fermionic identity
$B^\dagger(V)=-B^\dagger(V^{\mathsf T})$, and the Fierz rearrangement
is realized by the map $\Phi:G\mapsto Ge_{31}$ (with $e_{31}^2=-1$). This yields a kernel/matrix dictionary: at the effective
kernel level grades $0\oplus3$ correspond to spin singlets and grades
$1\oplus2$ to triplets; at the pairing-matrix level the singlet plane is
$\mathrm{span}\{e_2,e_{31}\}$. The BCS limit has a double grade identity:
kernel grade 0 with a scalar glue, and vertex grade 2, a bivector
($i\sigma_2\leftrightarrow e_{31}$). We state three algebraic
results and mark their model-level uses: (i) a decomposition theorem for
BdG quantum geometry with a null-texture criterion, motivating a comparative
stiffness discussion of the chiral-state controversy in CsV$_3$Sb$_5$;
(ii) a Fierz sign duality in the four-dimensional Minkowski algebra, with a
chirality-reversal extension to minimal two-node Weyl-semimetal models;
(iii) an error bound and condition-number structure for tomographic
reconstruction under a specified linear-response probe model. Model-level
illustrations follow: a glue-generation mechanism, a one-loop
algebraic renormalization flow whose feeding constant coincides in
magnitude with the $\mathfrak{su}(2)$ Casimir, closed-form $\mu=0$ inter-node spectra with
node splitting, and a UTe$_2$ two-phase analysis.\\

\noindent\textbf{Keywords:} Geometric algebra, Fierz antisymmetrization, scalar projection, superconducting pairing symmetry, signature structure, external-field mixing, dimensional extension, pairing density matrix, BdG quantum geometry, superfluid stiffness, Dirac/Weyl semimetal pairing, tomographic reconstruction.
\end{abstract}
\maketitle

\section{Introduction}
\label{sec:intro}

The symmetry of the superconducting pair wavefunction is a long-standing
problem in condensed matter physics. BCS theory~\cite{BCS1957} explained conventional
superconductors through the electron-phonon coupling. The subsequent
discoveries of heavy-fermion superconductors~\cite{Steglich1979},
cuprates~\cite{Bednorz1986}, iron pnictides~\cite{Kamihara2008},
UTe$_2$~\cite{Ran2019}, and magic-angle graphene~\cite{Cao2018} showed that
the electronic collective fluctuations themselves (spin, nematic, charge) can
also serve as the pairing glue. Each material class acquired its own
microscopic model: the RVB/$t$-$J$ lineage~\cite{Anderson1987}, multiorbital
Hubbard models~\cite{Kuroki2009}, the periodic Anderson
model~\cite{Doniach1977},
each proposed for a specific material, with the pairing mechanism usually
attributed after the fact. Nickelate superconductors (the infinite-layer
structure NdNiO$_2$~\cite{Li2019}), kagome metals (the AV$_3$Sb$_5$
family~\cite{Jiang2021,Ortiz2020}), and high-pressure hydrides
(H$_3$S~\cite{Drozdov2015}, LaH$_{10}$~\cite{Somayazulu2019}) have further
extended the map of pairing materials. In the kagome systems, charge density
wave, time-reversal breaking, and superconductivity compete, and the
microscopic character of these orders---the charge order as well as the
pairing symmetry ($s_\pm$, $d$-wave, $d+id$)---remains
contested~\cite{Duan2021,
Saykin2023,Xu2022,Farhang2023,Yoshida2025}; Sec.~\ref{sec:kagome} provides a
comparative quantum-geometry perspective on this controversy. High-pressure hydrides push
$T_c$ toward room temperature, and their strong anharmonicity and multiphase
structure put the traditional Eliashberg theory~\cite{Eliashberg1960} under
pressure. Finally, the development of topological platforms (semiconductor-
superconductor heterostructures~\cite{Mourik2012}, Majorana zero modes in
Fe-based systems~\cite{Wang2018}) has made the connection between pairing
symmetry and topological invariants experimentally accessible~\cite{Read2000,
Qi2011}.

The language of pairing theory is correspondingly fragmented. Electron-boson
vertices are written with Pauli matrices, pairing kernels with complex
functions, order parameters with irreducible representations~\cite{Sigrist1991},
and topological invariants with yet another formalism. The same physical
structure is rewritten in different notations.

The Sigrist--Ueda classification~\cite{Sigrist1991}
enumerates the possible pairing states based on lattice symmetry, but it does not address the dynamics
of mixed states. BCS mean-field theory is weak coupling, whereas
Eliashberg theory~\cite{Eliashberg1960} extends it to stronger coupling
under the Migdal premise; their common structural premise is that the pairing kernel
factorizes into a boson propagator times two fermion vertices. In
strong-correlation regions the four-fermion vertex is not a linear
superposition of single-boson-exchange channels, so that a description
in which distinct microscopic mechanisms are treated as independent
additive contributions is not available.

Clifford algebra, as a mathematical tool, has many applications in condensed
matter physics. Kitaev's spinless $p$-wave chain~\cite{Kitaev2001} and honeycomb
model~\cite{Kitaev2006} brought Majorana (Clifford) representations into
the study of topological phases, and the tenfold
classification~\cite{Kitaev2009} built directly on the classifying spaces
of Clifford algebras. Herbut~\cite{Herbut2010}
identified a topological-insulator sector in graphene superconducting vortex
cores using a Clifford/Dirac matrix structure. Geometric algebra (GA), the tradition founded by Clifford and developed
by Hestenes~\cite{Hestenes1966,Doran2003}, provides a uniform algebraic
grammar for the same matrix content in grade-structured form. We apply it to the pairing channels of superconductivity.

Our strategy is to define the Hamiltonian at the low-energy level through
the \emph{scalar-projection method} rather than through a
microscopic model. Since the microscopic level cannot be unified and no
separable bosonic glue exists in the strong-correlation region, the basic
object of superconducting theory need not be the microscopic Hamiltonian
alone; it can be the most general algebraic structure of a low-energy
effective pairing kernel in spin space, with the grade-0 projection
$\langle\cdot\rangle_0$ as the basic computational rule.

The construction is governed by a two-level projection rule. The first level
is a symmetry-space projection: raw Fock operators are not themselves
elements of $Cl_{3,0}$; their spin-space bilinear and matrix blocks are
losslessly projected into multivectors whose coefficients carry all
remaining content---numbers, functions, or Fock operators kept in their
written order. The second level is grade projection: after the geometric
product implements the internal coupling, $\langle\cdot\rangle_0$
extracts the Hamiltonian scalar. Before mean-field expectation, this
grade-0 scalar may still be an operator on Fock space. Throughout the
paper, what follows from these two operations alone is algebraic;
statements requiring gap-equation dynamics, band structure, or material
interpretation are labeled as model-level or comparative.

The technique is a grade decomposition of the complex pairing matrix:
the same $2\times2$ complex matrix that the Pauli-basis notation
represents in four complex entries is expanded on the eight real blades
of $Cl_{3,0}$, so that the matrix operations (transpose, trace,
Hermitian decomposition, Fierz transformation) become geometric
operations on blades: the channel dichotomy, the Hermitian/anti-Hermitian
split, and the Fierz rearrangement, which the matrix notation handles
through independent algebraic identities, appear as consequences of
a single grade structure. The parameter space is the same as the standard
one; the structure becomes explicit.

Compared with standard methods, the new content of this paper is: (1) the fermionic
identity $B^\dagger(V)=-B^\dagger(V^{\mathsf T})$ as the root of channel
dichotomy, with the Fierz map $\Phi$ made explicit
(Sec.~\ref{sec:channels}); (2) the signature of the pairing form identified
systematically with the $\mathfrak{u}(2)$ trace form (Sec.~\ref{sec:signature});
(3) three results with explicit algebraic/model boundaries: the BdG
quantum-geometry decomposition
(Sec.~\ref{sec:qgt}), the sign duality and its chirality reversal in the
stated minimal models (Sec.~\ref{sec:weyl}), and the conditional
tomographic reconstruction error bound (Sec.~\ref{sec:tomography});
(4) the null-texture criterion and a comparative stiffness discussion for
the kagome flat band
(Sec.~\ref{sec:kagome}) and the two-phase analysis
of UTe$_2$ (Sec.~\ref{sec:ute2}); (5) the algebra of dimensional
extension (Sec.~\ref{sec:extension}). The parameter space itself is equivalent to the standard complex-matrix formulation: the real coordinates are unfolded, but the physical degrees of freedom are not.

The argument of the paper has four layers, and each result below states
which layer it draws on. \emph{Layer A (dictionary)}: the fermionic
identity $B^\dagger(V)=-B^\dagger(V^{\mathsf T})$
(Lemma~\ref{lem:anti}), the Fierz map $\Phi$
(Definition~\ref{def:fierz}), and the channel theorem
(Theorem~\ref{thm:channels}) assign spin and parity to each grade; the
diagonal pairing form (Theorem~\ref{thm:diag}) supplies the metric.
\emph{Layer B (signs)}: the Casimir spectrum (Theorem~\ref{thm:casimir})
and the block decomposition (Theorem~\ref{thm:block}) convert the
signature into interaction signs and channel-selection rules. \emph{Layer
C (realization)}: the scalar projection builds $H_{\mathrm{pair}}$
(Eq.~\eqref{eq:hpair}), and the route through the gap equation
(Eq.~\eqref{eq:gapeq}) to the BdG Hamiltonian turns the dictionary into
spectra. \emph{Layer D (applications)}: each application is anchored to a
layer---the quantum-geometry analysis
(Secs.~\ref{sec:qgt}--\ref{sec:kagome}) uses layer C together with the
channel labels of layer A; the Weyl-semimetal sign duality
(Sec.~\ref{sec:weyl}) uses the blade squares of layer A and the gap
equation of layer C; the tomographic reconstruction
(Sec.~\ref{sec:tomography}) uses the eight-dimensional parameterization of
layer A; and the external-field selection rules (Sec.~\ref{sec:fields})
use the center and commutator structure of layer A. Results that
additionally use a stated dynamical assumption---the selection rules of
Sec.~\ref{sec:block}, the one-loop flow of Sec.~\ref{sec:rg}, and the
material analyses of Secs.~\ref{sec:kagome} and \ref{sec:ute2}---say so
where they appear.

Section~\ref{sec:math} establishes the $Cl_{3,0}$ foundations.
Section~\ref{sec:vertex} classifies electron operators and bosonic vertices.
Section~\ref{sec:channels} builds the Fierz antisymmetrization and the
complex-channel theorem at the kernel and matrix levels. Section~\ref{sec:scalar} generates
the effective pairing interaction and gives the route to BdG.
Sections~\ref{sec:signature} and~\ref{sec:block} develop the signature
structure and channel selection. Section~\ref{sec:bcs} compares the framework
with BCS theory. Section~\ref{sec:examples} collects applications, including
quantum geometry (Sec.~\ref{sec:qgt}), the comparative kagome discussion
(Sec.~\ref{sec:kagome}), and the UTe$_2$ analysis (Sec.~\ref{sec:ute2}).
Section~\ref{sec:phase} discusses the phase structure of the order parameter
and the multiple roles of the pairing glue. Section~\ref{sec:fields} derives
selection rules for external-field-induced channel mixing.
Section~\ref{sec:diagnosis} builds the diagnostic protocol and the tomography
algorithm. Section~\ref{sec:extension} covers dimensional extension,
including the sign duality in Weyl-semimetal pairing.
Section~\ref{sec:transfer} comments on cross-domain transfer, and
Sec.~\ref{sec:conclusion} concludes.

\section{Mathematical foundations: $Cl_{3,0}$ and the diagonal pairing form}
\label{sec:math}

The two-level projection rule fixes the scope of the mathematical results.
At the first level, finite-dimensional symmetry-space blocks (spin, Nambu,
or projected band blocks) are represented as Clifford multivectors, with
all remaining content---numbers, functions, or Fock operators---compressed
into the coefficients and kept in their written order. At the second
level, geometric products among those blocks are reduced by grade-0
projection to the relevant scalar. Results stated purely in terms of
blades are algebraic theorems; results that invoke a gap equation, band
structure, response model, or material assignment are conditional at that
stated level.

\subsection{$Cl_{3,0}$}

The GA~\cite{Hestenes1966,Doran2003} $Cl_{3,0}$ is generated by the orthonormal basis
$\{e_1,e_2,e_3\}$ with
\begin{equation}
\label{eq:clifford}
e_ie_j+e_je_i=2\delta_{ij},\qquad i,j=1,2,3.
\end{equation}
As a real vector space it is eight-dimensional, $2^3=8$, with the grade
decomposition: grade 0 (scalar): $1$; grade 1 (vectors): $e_1,e_2,e_3$;
grade 2 (bivectors): $e_{12}=e_1\wedge e_2$, $e_{23}$, $e_{31}$; grade 3
(pseudoscalar): $I=e_1e_2e_3$. From Eq.~\eqref{eq:clifford} one derives directly
$e_i^2=1$, $e_{ij}^2=-1$ ($i\neq j$), and $I^2=-1$. Any multivector $A$
decomposes uniquely as
\begin{equation}
\begin{split}
A=\langle A\rangle_0+\langle A\rangle_1+\langle A\rangle_2+\langle A\rangle_3\\
=d_0'\cdot1+\mathbf d'\cdot\mathbf e+\mathbf d''\cdot(\mathbf e\wedge
\mathbf e)+d_0''\cdot I,
\end{split}
\end{equation}
where $d_0'$, $\mathbf d'=(d_1',d_2',d_3')$, $\mathbf d''=(d_1'',d_2'',
d_3'')$, and $d_0''$ are the expansion coefficients. The projection $\langle\cdot\rangle_k$ is the standard grade-projection
notation of GA~\cite{Hestenes1966,Doran2003} and extracts the grade-$k$
component (no Dirac bracket is intended). Because
Hamiltonians and free energies must be scalars, we always take them to be the
grade-0 projection.

\begin{theorem}[Parity theorem]
A geometric product of $n$ vectors contains only grades of the same parity
as $n$; in particular
$\langle v_1v_2\cdots v_{2m+1}\rangle_0=0$.
\end{theorem}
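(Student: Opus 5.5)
The plan is to prove the statement by induction on $n$, using only the relation~\eqref{eq:clifford} and the grade decomposition. The key observation is a $\mathbb Z_2$-grading. Call a multivector \emph{even} if it lies in grades $0\oplus2$, and \emph{odd} if it lies in grades $1\oplus3$. The claim then says that a product of $n$ vectors is even for even $n$ and odd for odd $n$.

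First I expand each vector $v_k=\sum_i a_{ki}e_i$ in the orthonormal basis. By multilinearity of the geometric product, $v_1\cdots v_n$ is a real linear combination of words $e_{i_1}e_{i_2}\cdots e_{i_n}$, so it suffices to treat such basis words. Next I reduce a word to a signed blade. Using $e_ie_j=-e_je_i$ for $i\neq j$ (from Eq.~\eqref{eq:clifford}), I reorder the factors into nondecreasing index order, which introduces only an overall sign. Using $e_i^2=1$, I then cancel adjacent equal pairs. Each cancellation removes exactly two factors, so the surviving blade $\pm e_{j_1}\cdots e_{j_r}$ with $j_1<\dots<j_r$ has $r\equiv n \pmod 2$. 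Since a product of $r$ distinct orthonormal generators is precisely a grade-$r$ basis blade ($1$, $e_i$, $e_{ij}$, or $I$), every word lies in a single grade of parity $n$. Summing over words preserves this, and that proves the first assertion.

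For the corollary, take $n=2m+1$. The product then has no component in grades $0$ or $2$, and in particular $\langle v_1\cdots v_{2m+1}\rangle_0=0$ by uniqueness of the grade decomposition.

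The only step needing care is that the reordering-and-cancellation procedure really lands on a basis blade of the stated grade. Equivalently, a product of distinct $e_j$ in increasing order must be identified with the wedge-defined blade $e_{j_1}\wedge\cdots\wedge e_{j_r}$ used in the grade decomposition. This holds because the generators are orthogonal, so the geometric product of distinct generators equals their outer product. Alternatively, one can avoid basis words altogether: observe that the grade automorphism $\hat A$ (with $\hat e_i=-e_i$, extended as an algebra automorphism since it preserves~\eqref{eq:clifford}) acts as $(-1)^k$ on grade $k$. Then $\widehat{v_1\cdots v_n}=(-1)^n v_1\cdots v_n$, which forces all components into grades of parity $n$. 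I would present this automorphism argument as the cleaner main proof and keep the basis-word argument as the explicit check.
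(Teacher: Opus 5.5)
Your proof is correct, but it takes a different route from the paper. The paper inducts on $n$. It uses the standard GA splitting of a homogeneous element times a vector, $A_ka=A_k\cdot a+A_k\wedge a$, whose two pieces have grades $k-1$ and $k+1$, so each new factor flips the grade parity. You bypass inner and outer products and prove the underlying $\mathbb Z_2$-grading directly, in one of two ways. The first reduces basis words to signed blades: sort with $e_ie_j=-e_je_i$, then cancel equal adjacent pairs two factors at a time, so the surviving grade satisfies $r\equiv n \pmod 2$. The second uses the grade involution $e_i\mapsto-e_i$, which acts as $(-1)^k$ on grade $k$ and multiplies a product of $n$ vectors by $(-1)^n$.

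The paper's proof is shorter and stays in the inner/outer-product idiom used elsewhere in the paper, e.g.\ the triple-product lemma. However, it takes the grade-$(k\pm1)$ splitting as known, and that fact is usually proved by essentially your basis-word computation. Your version needs nothing beyond $e_ie_j+e_je_i=2\delta_{ij}$. The involution argument is the most conceptual of the three: it shows the statement amounts to vectors being odd in a $\mathbb Z_2$-graded algebra. It also goes through unchanged with any metric $\eta_{ij}$ in place of $\delta_{ij}$. Extending $e_i\mapsto-e_i$ to an automorphism is legitimate because the eight-dimensional $Cl_{3,0}$ is the universal algebra on these relations. Alternatively, you can define the map blade-wise by $(-1)^k$ and check multiplicativity with your word argument.

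One presentational point: your opening sentence announces an induction on $n$, but neither argument you give is inductive, so drop that framing.
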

\begin{proof}
Induction on $n$. For $n=1$ the statement is immediate (a single vector is
pure grade 1). Suppose it holds for $n$. Multiplication by a new vector is a
sum of an inner product, which lowers the grade by one, and an outer
product, which raises it by one; either way the grade parity flips, and the
parity of a product of $n+1$ vectors is the opposite of that of a product of
$n$. Hence a product of an odd number of vectors contains only odd grades
and has no grade-0 part.
\end{proof}

\subsection{Diagonal pairing form}

\begin{lemma}[Blade orthogonality]
\label{lem:ortho}
Write $e_A$ for an individual basis blade and $g(A)$ for its grade. The
scalar products obey
\begin{equation}
\langle e_Ae_B\rangle_0=\epsilon_{g(A)}\,\delta_{AB},\qquad
\epsilon_k=\begin{cases}+1,&k=0,1,\\-1,&k=2,3,\end{cases}
\label{eq:ortho}
\end{equation}
and each blade squares to a pure scalar,
\begin{equation}
e_A^2=\epsilon_{g(A)},\qquad\text{i.e., } e_i^2=+1,\ \ e_{ij}^2=-1,\ \
I^2=-1 .
\label{eq:square}
\end{equation}
\end{lemma}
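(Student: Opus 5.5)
The plan is to work directly from the defining relation Eq.~\eqref{eq:clifford} and the explicit list of eight basis blades $\{1,e_1,e_2,e_3,e_{12},e_{23},e_{31},I\}$. Each blade is a product $e_A=e_{a_1}e_{a_2}\cdots e_{a_k}$ of distinct generators with $a_1<\dots<a_k$, up to the fixed orientation convention $e_{31}=e_3e_1$. I will prove Eq.~\eqref{eq:square} first and then deduce Eq.~\eqref{eq:ortho} from it.

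For the squares, I reverse a product of $k$ anticommuting, self-squaring generators. Anticommutation ($e_ie_j=-e_je_i$ for $i\neq j$) gives
\[
e_Ae_A=(-1)^{k(k-1)/2}\,e_{a_1}e_{a_1}e_{a_2}e_{a_2}\cdots=(-1)^{k(k-1)/2}.
\]
This uses $e_i^2=1$, obtained from Eq.~\eqref{eq:clifford} with $i=j$. The sign $(-1)^{k(k-1)/2}$ equals $+1,+1,-1,-1$ for $k=0,1,2,3$, which is exactly $\epsilon_k$. The orientation choice in $e_{31}$ does not matter here, since a sign flip of a blade leaves its square unchanged. This already gives $\langle e_Ae_A\rangle_0=\epsilon_{g(A)}$, the diagonal case of Eq.~\eqref{eq:ortho}.

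For the off-diagonal case $A\neq B$, I identify each blade with its index set $S_A\subseteq\{1,2,3\}$. Applying anticommutation and $e_i^2=1$ repeatedly shows that $e_Ae_B=\pm e_C$, where $S_C=S_A\,\triangle\,S_B$ is the symmetric difference: shared indices cancel in pairs, and only reorderings produce signs. If $A\neq B$, then $S_C\neq\emptyset$, so $e_C$ is a blade of grade $\ge1$. Its grade-0 projection therefore vanishes, by uniqueness of the grade decomposition. Together with the diagonal case, this gives $\langle e_Ae_B\rangle_0=\epsilon_{g(A)}\delta_{AB}$.

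The only step needing care is the symmetric-difference claim. The cleanest route is induction on $|S_B|$: multiplying $\pm e_C$ on the right by a single generator $e_j$ either removes $j$ from the index set, if present (anticommute $e_j$ leftward to its partner and use $e_j^2=1$), or adds it, if absent. Either way the result is $\pm$ a single blade. Since the algebra has only three generators, one could instead simply tabulate the $8\times8$ products as a fallback check, but the induction avoids case-by-case work.
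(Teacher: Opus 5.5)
Your proof is correct and follows the same route as the paper. Both derive the blade squares by swap counting from $e_i^2=1$ and anticommutativity, and both obtain the off-diagonal zeros from the fact that $e_Ae_B$ is a nonscalar blade when $A\neq B$; your general sign $(-1)^{k(k-1)/2}$ and the symmetric-difference induction justify what the paper only asserts, and they show that $e_Ae_B$ is always $\pm$ a single blade rather than a sum of blades.
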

\begin{proof}
Both statements follow from $e_i^2=1$ and the anticommutativity
$e_ie_j=-e_je_i$ of distinct basis vectors. For $k=1$:
$\langle e_1e_1\rangle_0=\langle1\rangle_0=1=\epsilon_1$. For $k=2$:
$\langle e_{12}^2\rangle_0=\langle e_1e_2e_1e_2\rangle_0
=\langle-e_1^2e_2^2\rangle_0=-1=\epsilon_2$, and the same swap counting gives
$e_{ij}^2=-1$ and $I^2=-1$, each a pure scalar. For $A\neq B$ the product
$e_Ae_B$ is a nonscalar blade (or a sum of such blades) and its grade-0 part
vanishes; this includes distinct blades of the same grade, for example
$\langle e_1e_2\rangle_0=0$ and $\langle e_{12}e_{23}\rangle_0=0$. This
establishes the Kronecker delta on individual blades in
Eq.~\eqref{eq:ortho}.
\end{proof}

The basis blades fall into four classes by grade: grade 0, the blade $1$
(one blade); grade 1, $e_1,e_2,e_3$ (three); grade 2, $e_{12},e_{23},e_{31}$
(three, with $e_{31}\equiv-e_{13}$); grade 3, $I$ (one). Equivalently, blades of
different grades are orthogonal (zero scalar part), blades of the same grade
are normalized, and the sign depends only on the grade.

\begin{theorem}[Diagonal pairing form]
\label{thm:diag}
For real-coefficient multivectors $X=\sum_Ax_Ae_A$ and $Y=\sum_Ay_Ae_A$,
\begin{equation}
\langle XY\rangle_0=x_0y_0+\mathbf x_1\cdot\mathbf y_1-\mathbf x_2\cdot
\mathbf y_2-x_3y_3,
\label{eq:diag}
\end{equation}
that is, the scalar projection pairs the grades diagonally with signature
$(+,+,-,-)$.
\end{theorem}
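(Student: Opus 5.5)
The plan is to reduce the statement to Lemma~\ref{lem:ortho} by bilinearity. Because the coefficients $x_A,y_A$ are real scalars, they commute with every blade. The geometric product is bilinear over $\mathbb R$, and so is the grade projection $\langle\cdot\rangle_0$. Expanding therefore gives
\begin{equation*}
\langle XY\rangle_0=\sum_{A,B}x_Ay_B\,\langle e_Ae_B\rangle_0 .
\end{equation*}
The first step is to justify this expansion. It uses only the linearity of $\langle\cdot\rangle_0$, which comes from the uniqueness of the grade decomposition stated after Eq.~\eqref{eq:clifford}, together with the distributivity of the geometric product.

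The second step substitutes Eq.~\eqref{eq:ortho}, namely $\langle e_Ae_B\rangle_0=\epsilon_{g(A)}\delta_{AB}$. This collapses the double sum to a single sum,
\begin{equation*}
\langle XY\rangle_0=\sum_A\epsilon_{g(A)}x_Ay_A .
\end{equation*}
We then group the eight blades by grade:
\begin{itemize}
\item grade 0: the blade $1$ contributes $+x_0y_0$;
\item grade 1: the blades $e_1,e_2,e_3$ contribute $+\mathbf x_1\cdot\mathbf y_1$;
\item grade 2: the blades $e_{12},e_{23},e_{31}$ contribute $-\mathbf x_2\cdot\mathbf y_2$;
\item grade 3: the blade $I$ contributes $-x_3y_3$.
\end{itemize}
This reproduces Eq.~\eqref{eq:diag} with signature $(+,+,-,-)$.

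The only point needing care is the off-diagonal vanishing $\langle e_Ae_B\rangle_0=0$ for $A\neq B$, which the lemma asserts somewhat briskly. I would back it up explicitly. For distinct basis blades, $e_Ae_B$ equals $\pm e_{A\triangle B}$, where $A\triangle B$ is the symmetric difference of the index sets; this follows by anticommuting repeated basis vectors together and using $e_i^2=1$. When $A\neq B$ the symmetric difference is nonempty, so the product is a single blade of positive grade and has no scalar part.

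The same computation confirms the diagonal signs $e_A^2=\epsilon_{g(A)}$. Reversing a grade-$k$ blade costs a sign $(-1)^{k(k-1)/2}$, which gives $+,+,-,-$ for $k=0,1,2,3$. I expect no genuine obstacle here. The result is a direct corollary of the lemma once bilinearity is made explicit, and the basis choice $e_{31}$ rather than $e_{13}$ affects only the labeling of the coefficients, not the squares.
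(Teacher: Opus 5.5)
Your proposal is correct and follows the paper's proof: expand $\langle XY\rangle_0$ by bilinearity, collapse the double sum with Lemma~\ref{lem:ortho}, and group the surviving diagonal terms by grade. Two details you add are sharpenings of the lemma's brisker justification rather than a different route: the symmetric-difference argument $e_Ae_B=\pm e_{A\triangle B}$ for the off-diagonal vanishing, and the reversion sign $(-1)^{k(k-1)/2}$ for the blade squares.
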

\begin{proof}
The scalar projection is bilinear. Expanding $X$ and $Y$ in the blade basis
and applying Lemma~\ref{lem:ortho},
\begin{equation}
\langle XY\rangle_0=\sum_{A,B}x_Ay_B\langle e_Ae_B\rangle_0
=\sum_A\epsilon_{g(A)}x_Ay_A,
\end{equation}
which written out grade by grade is Eq.~\eqref{eq:diag}.
\end{proof}

\begin{remark}[Two metrics]
\label{rem:twometrics}
Throughout, $X$ and $Y$ denote real-coefficient multivectors as in
Theorem~\ref{thm:diag}; a hat $\hat X$ denotes the $2\times2$ matrix image
of $X$ under the isomorphism $e_i\leftrightarrow\sigma_i$,
$I\leftrightarrow i$; and a tilde $\tilde X$ denotes its reversion, the
anti-automorphism that reverses factor order (grade-0 and grade-1 elements
unchanged, grade-2 and grade-3 elements changing sign), so that
$\hat{\tilde X}=\hat X^\dagger$. The signature $(+,+,-,-)$ of
Theorem~\ref{thm:diag} is the sign of
$\langle XY\rangle_0=\mathrm{Re}[\tfrac12\mathrm{Tr}(\hat X\hat Y)]$ and
governs channel arithmetic. The positive-definite norm of the order
parameter is a different, reversion-based object,
$\langle X\tilde X\rangle_0=\tfrac12\mathrm{Tr}(\hat X\hat X^\dagger)\ge0$;
the two are not in contradiction, and only the latter enters the free energy.
\end{remark}

\begin{remark}
\label{rem:trace}
Under the isomorphism $e_i\leftrightarrow\sigma_i$, $I\leftrightarrow i$,
Eq.~\eqref{eq:diag} is the geometric-algebra form of the trace orthogonality
$\mathrm{Tr}(\Gamma_a\Gamma_b)=\pm2\delta_{ab}$ of the Pauli basis~\cite{Peskin1995}: for
real-coefficient multivectors,
\begin{equation}
\tfrac12\mathrm{Tr}(\hat X\hat Y)=\langle XY\rangle_0+i\langle XY\rangle_3^{(\mathrm{coeff})},
\end{equation}
where the superscript ``coeff'' means the real factor in front of $I$ in the
grade-3 component (rather than the blade $I$ itself, so as to match the
complex imaginary unit). Hence
$\langle XY\rangle_0=\mathrm{Re}[\tfrac12\mathrm{Tr}(\hat X\hat Y)]$.
The hatted bold symbols $\hat{\mathbf d}$ and
$\hat n$ of Secs.~\ref{sec:chern} and~\ref{sec:qgt} are unit vectors, as
stated where they are defined.
\end{remark}

\subsection{Central complex structure and Hodge duality}

\begin{lemma}
The pseudoscalar $I$ commutes with all multivectors: $IM=MI$, and $I^2=-1$.
\end{lemma}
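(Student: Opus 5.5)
By bilinearity of the geometric product, it suffices to prove $Ie_A=e_AI$ for each of the eight basis blades $e_A$. Reducing further, it is enough to show that $I$ commutes with each generator $e_i$. A product of generators then commutes with $I$ by moving $I$ past one factor at a time, and the scalar $1$ is trivial.

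For a fixed $e_i$, I will move $e_i$ through the three factors of $I=e_1e_2e_3$ using Eq.~\eqref{eq:clifford}. It anticommutes with the two factors $e_j$, $j\neq i$, and commutes with the one factor equal to itself. For instance, $e_1I=e_1e_1e_2e_3$, while
\[
Ie_1=e_1e_2e_3e_1=(-1)^2e_1e_1e_2e_3 .
\]
In general, two transpositions with distinct vectors each contribute a sign $-1$, so $e_iI=(-1)^2Ie_i=Ie_i$. The cases $i=2,3$ follow by the same count, or by cyclic symmetry of $I=e_1e_2e_3=e_2e_3e_1=e_3e_1e_2$, since cyclic permutations of three anticommuting factors are even.

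The statement $I^2=-1$ is already recorded in Lemma~\ref{lem:ortho}, Eq.~\eqref{eq:square}. For a direct check, write $I^2=e_1e_2e_3e_1e_2e_3$ and move the second $e_1$ left past $e_3,e_2$; this gives two signs, leaving $e_2e_3e_2e_3$. Moving the remaining $e_2$ left past $e_3$ gives one more sign, leaving $-e_3e_3=-1$.

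The only step needing care is the sign count, and the key point is that the dimension is odd. A pseudoscalar in $n$ dimensions picks up a factor $(-1)^{n-1}$ when a vector is moved through it, and $n=3$ makes this factor $+1$. I will remark that this centrality, combined with $I^2=-1$, is exactly what justifies the identification $I\leftrightarrow i$ used in Remarks~\ref{rem:twometrics} and~\ref{rem:trace}.
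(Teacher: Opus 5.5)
Your proof is correct and follows the same route as the paper: reduce to the generators $e_i$, note that moving $e_i$ through $I=e_1e_2e_3$ takes two anticommutations with distinct vectors (so $Ie_i=e_iI$), and obtain $I^2=-1$ by counting swaps. Your explicit reduction from blades to generators and your $(-1)^{n-1}$ remark, which the paper states separately as its center-drift proposition, add detail but do not change the argument.
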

\begin{proof}
It suffices to verify $Ie_i=e_iI$. Take $i=1$:
$Ie_1=e_1e_2e_3e_1=-e_1e_2e_1e_3=e_1^2e_2e_3=e_{23}$, while
$e_1I=e_1e_1e_2e_3=e_{23}$. The other basis vectors follow identically. For
the second claim, $I^2=e_1e_2e_3e_1e_2e_3=(-1)^3e_1^2e_2^2e_3^2=-1$, the sign
coming from reversing the order of the last three factors.
\end{proof}

\begin{theorem}[Central complex structure]
The center of $Cl_{3,0}$ is $\{1,I\}$, and the map $a+bI\mapsto a+bi$ gives
an algebraic isomorphism with $\mathbb C$. Phase rotations are implemented by
central rotors $e^{I\varphi}=\cos\varphi+I\sin\varphi$.
\end{theorem}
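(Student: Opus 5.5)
The plan has three parts: first show that the center is spanned by $1$ and $I$, then check that $a+bI\mapsto a+bi$ is an algebra isomorphism, and finally read off the rotor formula from the power series of the exponential.

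\emph{Center.} The preceding lemma already gives $IM=MI$ for all $M$, so $\mathrm{span}_{\mathbb R}\{1,I\}$ lies in the center. For the reverse inclusion, take a general multivector $M=\sum_A m_Ae_A$ and impose $e_iM=Me_i$ for $i=1,2,3$.

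\begin{itemize}
\item Each basis vector $e_i$ either commutes or anticommutes with each blade $e_A$. It anticommutes exactly when $e_A$ contains an odd number of basis vectors different from $e_i$, counting the possible $e_i$ factor appropriately.
\item Since left and right multiplication by $e_i$ permute the blades up to sign, the condition $e_iM=Me_i$ splits blade by blade. It forces $m_A=0$ for every blade that anticommutes with some $e_i$.
\item Grade 1: $e_j$ anticommutes with $e_i$ for $i\neq j$.
\item Grade 2: $e_{ij}$ anticommutes with $e_i$.
\item Grades 0 and 3: these commute with every $e_i$, the grade-3 case by the lemma.
\end{itemize}

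So only the coefficients of $1$ and $I$ survive. Commuting with the generators $e_i$ suffices, because they generate the whole algebra.

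\emph{Isomorphism with $\mathbb C$.} The map $a+bI\mapsto a+bi$ is real-linear and bijective onto $\mathbb C$. It preserves products because
\[
(a+bI)(c+dI)=ac+bd\,I^2+(ad+bc)I=(ac-bd)+(ad+bc)I,
\]
using $I^2=-1$ and the centrality of $I$. This is exactly complex multiplication, and the unit $1$ maps to $1$. The center is therefore a field isomorphic to $\mathbb C$.

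\emph{Central rotors.} Since $I$ is central with $I^2=-1$, the power series of $e^{I\varphi}$ splits into even and odd powers. The even powers sum to $\cos\varphi$ and the odd powers to $I\sin\varphi$, which gives $e^{I\varphi}=\cos\varphi+I\sin\varphi$. Because $I$ is central, $e^{I\varphi}$ commutes with every $M$, so multiplication by it acts as a global phase; under the isomorphism it corresponds to $e^{i\varphi}$.

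The only step needing care is the reverse inclusion for the center, namely the per-blade commutation bookkeeping. The plan is to organize it by the parity rule: $e_ie_A=(-1)^{g(A)}e_Ae_i$ when $e_i\notin A$, and $e_ie_A=(-1)^{g(A)-1}e_Ae_i$ when $e_i\in A$. For $g(A)\in\{1,2\}$ one can always choose an $e_i$ that gives the sign $-1$.
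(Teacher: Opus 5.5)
Your proof is correct. The isomorphism with $\mathbb C$ and the rotor formula are argued exactly as in the paper, but the one nontrivial step, the reverse inclusion for the center, takes a different route.

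The paper argues structurally. Since $Cl_{3,0}\cong M_2(\mathbb C)$ as a real algebra, its center is $\mathbb C\cdot\mathbf 1$, of real dimension two, so the two-dimensional $\mathrm{span}\{1,I\}$ already exhausts it.

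You instead compute the commutant of the generators directly from the anticommutation relations, via the sign rule $e_ie_A=(-1)^{g(A)}e_Ae_i$ for $e_i\notin A$ and $(-1)^{g(A)-1}e_Ae_i$ for $e_i\in A$. Your blade-by-blade splitting is cleanest written as $e_iMe_i=\sum_A s_{iA}m_Ae_A$ with $s_{iA}=\pm1$. Then $e_iM=Me_i$, equivalently $e_iMe_i=M$, forces $m_A=0$ whenever $s_{iA}=-1$.

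The paper's route is a one-liner, but it imports the matrix isomorphism and the fact that a full matrix algebra has scalar center, and it is specific to $n=3$. Yours is self-contained and dimension-free. The same rule shows that in any $Cl_{p,q}$ a blade of grade $0<r<n$ anticommutes with some generator: an $e_i\notin A$ for odd $r$, an $e_i\in A$ for even $r$. Meanwhile $I_n$ commutes with every generator iff $n$ is odd. So the center is $\mathrm{span}\{1,I_n\}$ in odd dimensions and $\mathbb R$ in even dimensions. This is the exhaustion half of the paper's center-drift proposition, which its printed proof leaves unproved, since it only checks how $I_n$ itself commutes with vectors.
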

\begin{proof}
By the lemma, $\{1,I\}$ lies in the center; by $I^2=-1$,
$(a+bI)(c+dI)=(ac-bd)+(ad+bc)I$, which reproduces complex multiplication.
Since $Cl_{3,0}\cong M_2(\mathbb C)$ as a real algebra, its center is
two-dimensional, so $\{1,I\}$ is the whole center. The rotor formula follows
from the Taylor series of the exponential using $I^2=-1$.
\end{proof}

\begin{remark}[The center as the algebraic carrier of gauge symmetry]
The center $\{1,I\}\cong\mathbb C$ provides a natural algebraic carrier for
the U(1) gauge symmetry of pairing theory. Because $I$ is central, the rotor
$e^{I\varphi}$ commutes with all multivectors, and the pairing Hamiltonian
is invariant under the global phase transformation $\psi\to
e^{I\varphi/2}\psi$ of the fermion field. Gauge symmetry is realized by the
central complex structure rather than being added to the algebra from
outside. The phase rotations of the channel theorem
(Theorem~\ref{thm:channels}) are this operation, translating the
phase rotations of the traditional complexification into central rotor
operations.
\end{remark}

\begin{corollary}[Hodge duality]
\label{cor:hodge}
$e_\mu\wedge e_\nu=I\varepsilon_{\mu\nu\rho}e_\rho$.
\end{corollary}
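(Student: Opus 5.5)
The plan is a direct computation on basis blades, using only the Clifford relation \eqref{eq:clifford}, the definition $I=e_1e_2e_3$, and the fact, proved just above, that $I$ is central with $I^2=-1$. Both sides are antisymmetric in $(\mu,\nu)$: the wedge product by definition, and the right-hand side through $\varepsilon_{\mu\nu\rho}$. The case $\mu=\nu$ is therefore trivial, since both sides vanish. For $\mu\neq\nu$, anticommutativity gives $e_\mu\wedge e_\nu=e_\mu e_\nu$, and only the single term with $\rho$ equal to the third index survives on the right. So it suffices to check one ordered pair from each cyclic orbit, namely $(1,2)$, $(2,3)$ and $(3,1)$; the reversed pairs then follow by antisymmetry.

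For the check I would use the same swap-counting already used in the proof of the centrality lemma. First, $Ie_3=e_1e_2e_3e_3=e_1e_2=e_{12}$, which matches $\varepsilon_{123}=+1$. Second, $Ie_1=e_{23}$ was computed explicitly in the lemma's proof, and this matches $\varepsilon_{231}=+1$. Third, $Ie_2=e_1e_2e_3e_2=-e_1e_3e_2e_2=-e_1e_3=e_3e_1=e_{31}$, which matches $\varepsilon_{312}=+1$ under the paper's convention $e_{31}\equiv-e_{13}$.

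An equivalent and more uniform route multiplies $e_\mu e_\nu$ on the left by $I^{-1}=-I$. One shows $-Ie_\mu e_\nu=\varepsilon_{\mu\nu\rho}e_\rho$ and then uses the centrality of $I$ to move it freely. I would present the explicit three-case check, since it is shorter.

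The only real obstacle is bookkeeping of signs. One must fix the orientation convention $e_{31}=e_3e_1$ rather than $e_1e_3$, and confirm that the cyclic labeling of bivectors agrees with the Levi-Civita sign. With that fixed, every case is a single transposition count.
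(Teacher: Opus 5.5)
Your proposal is correct and follows essentially the same approach as the paper, which also reduces the corollary to the three checks $Ie_3=e_{12}$, $Ie_1=e_{23}$, $Ie_2=e_{31}$ using $I=e_1e_2e_3$ and anticommutation. Your treatment of the $\mu=\nu$ and reversed-pair cases by antisymmetry, and of the orientation $e_{31}=e_3e_1$, makes explicit what the paper leaves implicit.
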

\begin{proof}
This is the content of $Ie_3=e_{12}$, $Ie_1=e_{23}$, and $Ie_2=e_{31}$,
verified directly from the definition $I=e_1e_2e_3$ and the anticommutation
of the basis vectors.
\end{proof}

\section{Electron operators and bosonic vertices}
\label{sec:vertex}

Electrons are spin-$1/2$ fermions. Denoting by $c_{\mathbf k\alpha}$ the
annihilation operator for momentum $\mathbf k$ and spin $\alpha$, the
two-component spinor is $\psi_{\mathbf k}=(c_{\mathbf k\uparrow},
c_{\mathbf k\downarrow})^{\mathsf T}$, and the charge and spin density
operators are the bilinears
\begin{equation}
\rho_{\mathbf q}=\sum_{\mathbf k}\psi_{\mathbf k+\mathbf q}^\dagger
\psi_{\mathbf k},\qquad
\mathbf S_{\mathbf q}=\sum_{\mathbf k}\psi_{\mathbf k+\mathbf q}^\dagger
\tfrac12\boldsymbol\sigma\psi_{\mathbf k}.
\end{equation}
These are the standard second-quantized objects of many-body
theory~\cite{Negele1988,AGD1975}; the spin density $\mathbf S_{\mathbf q}$
is the vertex through which spin fluctuations enter both the particle-hole
and the Cooper channels~\cite{Monthoux1991,Scalapino1986}. In a Majorana
basis, fermion operators do generate the large canonical-anticommutation
Clifford algebra on Fock space; that algebra is distinct from the
spin-space $Cl_{3,0}$ used here. The first-level projection of this paper is therefore not a denial of the
fermionic Clifford structure, but the reduction of spin-space matrix blocks
to $Cl_{3,0}$ multivectors, with residual creation/annihilation dependence
compressed into their coefficients and kept in its written order; raw
operators are never elements of that symmetry-space algebra.

Bosonic collective excitations couple through the vertex $\psi^\dagger M\psi$.
The vertex matrix $M$ acts on spin space, and Hermiticity of the Hamiltonian
requires $M$ to be Hermitian~\cite{Negele1988}. In the geometric-algebra language, reversion
(the tilde operation of Remark~\ref{rem:twometrics}) maps grade-0 and
grade-1 basis elements to themselves and grade-2 and grade-3 elements to
their negatives. The Hermitian sector is therefore grades $0\oplus1$:
under the isomorphism, $1$ and $\boldsymbol\sigma$ are Hermitian, while
$i\boldsymbol\sigma\leftrightarrow e_{\mu\nu}$ and $i\leftrightarrow I$ are
anti-Hermitian. The physical vertices accordingly divide into two classes:
\begin{itemize}
\item \textbf{Charge channel} ($M\propto1$): density-type coupling,
applicable to phonons, plasmons, and all spin-independent fluctuations;
\item \textbf{Spin channel} ($M\propto\boldsymbol\sigma$): spin-type
coupling, applicable to antiferromagnetic and ferromagnetic spin
fluctuations.
\end{itemize}
The grade-2/3 basis elements have not disappeared; they return as the
imaginary-part directions of the pairing field (Sec.~\ref{sec:channels}):
at the pairing-matrix level the imaginary directions are $e_{\mu\nu}$ and
$I$.

The geometric type of a boson in real space is labeled by the irreducible
representation $(l,\text{parity})$ of the rotation group, and its orbital
content enters the vertex through a momentum form factor $f(\mathbf k)$.
Table~\ref{tab:boson} summarizes the classification of common bosonic
collective excitations. The classification rests on the following principle:
the real-space geometric type is fixed by the angular momentum $l$ and the
parity of the boson field, the spin channel is fixed by the grade of the
vertex matrix in $Cl_{3,0}$, and the orbital content is tied to the lattice
symmetry through the form factor $f(\mathbf k)$~\cite{Sigrist1991,Aoki2019}.

\begin{table*}[t]
\caption{\label{tab:boson}Classification of bosonic collective excitations
(particle-hole channel).}
\begin{ruledtabular}
\begin{tabular}{llll}
Boson & Real-space geometry & Vertex grade & Orbit $(l,p)$, vertex\\
\hline
Plasmon & scalar & 0 & $0,+$, $1$\\
Longitudinal phonon & polar vector & 0 (charge) & $1,-$,
$(\mathbf q\cdot\mathbf e)$\\
Magnon & axial vector ($\propto I\mathbf e$) & 1 (spin) & $1,+$,
$\boldsymbol\sigma$\\
Nematic fluctuation & symmetric tensor & 0 & $2,+$,
$(\cos k_x-\cos k_y)$-type\\
Quadrupole fluctuation & symmetric tensor & 0 & $2,+$, $f$-space Stevens
operator\\
Octupole fluctuation & rank-3 tensor & 0 & $3,-$, $f$-space multipole
operator\\
Chiral fluctuation & scalar, $T$-odd & 0 & $0,+$, chiral form factor\\
\end{tabular}
\end{ruledtabular}
\end{table*}

\begin{remark}[Two spaces]
Table~\ref{tab:boson} records labels in two spaces at once. The real-space
geometric type is the transformation property of the boson field under the
three-dimensional rotation group; the spin channel is the position of the
vertex matrix in the spin algebra. For the magnon, which is an axial vector
in real space (Hodge dual to a bivector), the spin vertex is $\boldsymbol\sigma$ (grade 1). The two labels are
distinct but coupled by the vertex. Symmetric tensors with $l\geq2$ (nematic,
quadrupole fluctuations) do not belong to any grade of the exterior algebra
(which contains only antisymmetric tensors); their algebraic carrier is an
orbital-space matrix or a momentum form factor, and their spin channel is
always charge type. The grade classification covers all bosonic
excitations: antisymmetric types correspond to grades directly, and
symmetric types enter through orbital form factors.
The real-space geometry column is the rotation--parity representation
$J^{P}$ of the collective field (its multipole order), written in the
geometric language of the multipole literature: polar vector $=1^{-}$,
axial vector $=1^{+}$, symmetric traceless tensor $=2^{+}$, rank-3 tensor
$=3^{-}$, and pseudoscalar $=0^{-}$. For elementary relativistic bosons
this reduces to the integer-spin classification; for a collective mode in
a crystal the sharp label is the representation of the field rather than
the spin of a quasiparticle---the longitudinal-phonon displacement field
is a polar vector ($1^{-}$), while the phonon quantum itself does not
carry spin-1 angular momentum. Each row names a measurable field:
the charge density $\rho(\mathbf x)$ (plasmon), the displacement
$\mathbf u(\mathbf x)$ (longitudinal phonon), the spin density
$\mathbf S(\mathbf x)$ (magnon), the symmetric traceless quadrupole tensor
$Q_{ij}$ (nematic and quadrupole fluctuations), the octupole moment, and
the scalar spin chirality $\mathbf S_1\cdot(\mathbf S_2\times\mathbf S_3)$
(chiral fluctuation; built from axial vectors it is inversion-even but
time-reversal-odd, unlike the positional scalar chirality
$\mathbf r_1\cdot(\mathbf r_2\times\mathbf r_3)$, which is $0^{-}$). The
electron coupling proceeds through the standard
combinations listed in the vertex column, such as $\nabla\cdot\mathbf u$
for the longitudinal phonon and the dynamical spin susceptibility
$\chi(\mathbf q,\omega)$ for the spin channel. The multipole
representation is an input to the framework: the only algebraic output
concerning the boson is the vertex grade.
\end{remark}

\begin{remark}[Vertex grade versus kernel grade]
The grades in Table~\ref{tab:boson} refer to particle-\emph{hole} vertices.
The quantity that enters the pairing Hamiltonian is the Fierz-rearranged
particle-\emph{particle} (Cooper-channel) effective kernel, and the bridge
between the two languages is the fixed blade $e_{31}\leftrightarrow i\sigma_2$
(Lemma~\ref{lem:anti} and Definition~\ref{def:fierz}). The vertex
classification itself is correct; the channel-attribution theorems of this
paper concern the effective pairing kernel.
\end{remark}

\section{Pairing operators and the complex-channel structure}
\label{sec:channels}

\subsection{Fermionic statistics and Fierz antisymmetrization}

\begin{lemma}[Antisymmetry identity]
\label{lem:anti}
For a $2\times2$ matrix $V$, the pair bilinear
$B^\dagger(V)\equiv\sum_{\mathbf k}\psi_{\mathbf k}^\dagger V\psi_{-\mathbf k}^{\dagger\mathsf T}
=\sum_{\mathbf k}\sum_{\alpha\beta}c^\dagger_{\mathbf k\alpha}V_{\alpha\beta}c^\dagger_{-\mathbf k\beta}$
satisfies
\begin{equation}
B^\dagger(V)=-\,B^\dagger(V^{\mathsf T}),
\end{equation}
that is, the symmetric-matrix part vanishes identically by fermionic
statistics.
\end{lemma}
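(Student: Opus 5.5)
The identity follows from the canonical anticommutation relations combined with a relabeling of the momentum sum; the algebra of Sec.~\ref{sec:math} is not needed.

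\emph{Step 1 (anticommute).} Write $B^\dagger(V)$ in components,
\[
B^\dagger(V)=\sum_{\mathbf k}\sum_{\alpha\beta}c^\dagger_{\mathbf k\alpha}V_{\alpha\beta}c^\dagger_{-\mathbf k\beta}.
\]
Treat $V_{\alpha\beta}$ as a c-number; if $V$ carries a momentum form factor, write it as $V(\mathbf k)$. Apply $\{c^\dagger_{\mathbf k\alpha},c^\dagger_{\mathbf k'\beta}\}=0$ to swap the two creation operators:
\[
c^\dagger_{\mathbf k\alpha}c^\dagger_{-\mathbf k\beta}=-c^\dagger_{-\mathbf k\beta}c^\dagger_{\mathbf k\alpha}.
\]
This holds even when $\mathbf k=-\mathbf k$ and $\alpha=\beta$, because then both sides vanish since $(c^\dagger)^2=0$. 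No contact term appears, because the anticommutator of two creation operators is zero rather than a delta function.

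\emph{Step 2 (relabel).} Rename the dummy indices $\alpha\leftrightarrow\beta$ and substitute $\mathbf k\to-\mathbf k$. The sum over the Brillouin zone is invariant under this reflection. The result is
\[
B^\dagger(V)=-\sum_{\mathbf k}\sum_{\alpha\beta}c^\dagger_{\mathbf k\alpha}V_{\beta\alpha}c^\dagger_{-\mathbf k\beta}=-B^\dagger(V^{\mathsf T}).
\]
With a $\mathbf k$-dependent matrix, the same steps give $B^\dagger(V(\mathbf k))=-B^\dagger(V^{\mathsf T}(-\mathbf k))$. In that general case the stated identity should be read as applying to the combined spin--momentum transpose, or to $\mathbf k$-independent blocks. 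The only delicate point is to state this scope explicitly.

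\emph{Step 3 (corollary).} Split $V=V_S+V_A$ with $V_S=(V+V^{\mathsf T})/2$. Linearity of $B^\dagger$ gives $B^\dagger(V_S)=-B^\dagger(V_S)$, so $B^\dagger(V_S)=0$. Hence only the antisymmetric part survives, which proves the "symmetric part vanishes" clause. For constant $2\times2$ matrices the antisymmetric part is proportional to $i\sigma_2$, which corresponds to $e_{31}$ and anticipates the singlet plane.
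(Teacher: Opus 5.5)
Your proof is correct and is essentially the paper's argument (Supplemental Material Sec.~S5). Both anticommute the two creation operators and relabel $\mathbf k\to-\mathbf k$, $\alpha\leftrightarrow\beta$ to reach $-B^\dagger(V^{\mathsf T})$; the paper only performs the momentum relabeling before the anticommutation, and your remark that a $\mathbf k$-dependent block gives $B^\dagger(V(\mathbf k))=-B^\dagger(V^{\mathsf T}(-\mathbf k))$ is a correct refinement that the paper leaves implicit by taking $V$ constant.
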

\begin{proof}
The proof is a relabeling argument given in Supplemental Material Sec.~S5.
Two consequences are worth stating immediately. The surviving part of
$B^\dagger(V)$ is $\tfrac12(V-V^{\mathsf T})\propto i\sigma_2$, so
$B^\dagger(\mathbf 1)\equiv0$ while $B^\dagger(i\sigma_2)$ is the
spin-singlet pair operator~\cite{Sigrist1991}. A constant (momentum-independent) kernel
therefore carries only the singlet at the operator level, and triplet pairing
lives in the momentum structure of the kernel.
\end{proof}

\begin{remark}[$i\sigma_2$ versus time reversal]
Here $i\sigma_2$ is the antisymmetric pairing matrix selected by fermionic
statistics. Time reversal is instead the antiunitary operation
$i\sigma_2K$ (complex conjugation $K$ included), so the pairing vertex matrix
and the symmetry operation should not be conflated.
\end{remark}

\begin{definition}[Fierz map]
\label{def:fierz}
The effective pairing vertex is obtained by right multiplication with the
fixed blade:
\begin{equation}
\Phi:\;G\longmapsto V\equiv Ge_{31},\qquad e_{31}=e_3\wedge e_1
\leftrightarrow i\sigma_2 .
\end{equation}
Since $e_{31}^2=-1$, $\Phi^2=-\mathrm{id}$, so $\Phi$ is a linear bijection
of the eight-dimensional real space; the completeness of the
eight-dimensional parameterization is unaffected. Because
$e_{31}\leftrightarrow i\sigma_2$, $\Phi$ also implements a quarter-turn
(a $\pi/2$ phase rotation) between the kernel and pairing-matrix
conventions: a kernel-level real direction can therefore map to the
matrix-level imaginary direction, so the real/imaginary labels at the two
levels should be read in their own conventions. The blade-by-blade action
is tabulated in Supplemental Material Sec.~S1.
\end{definition}

\subsection{Complex-channel theorem}

\begin{theorem}[Complex-channel theorem, kernel level]
\label{thm:channels}
The effective pairing kernel $G\in Cl_{3,0}$ splits into two eigenspaces.
(i) The \textbf{spin-singlet sector} (even-parity at the gap level) consists of grades
$0\oplus3$, namely the kernel singlet plane
$\mathrm{span}\{1, I\}$; its Fierz image
$\Phi(\mathrm{span}\{1, I\})=\mathrm{span}\{e_2,e_{31}\}$ is the space of
antisymmetric matrices.
(ii) The \textbf{spin-triplet sector} (odd-parity at the gap level) consists of grades
$1\oplus2$ (six real dimensions), whose image is the space of symmetric
matrices.
(iii) The central rotor $e^{I\varphi}$ implements, by left multiplication,
the same U(1) phase rotation within each sector.
\end{theorem}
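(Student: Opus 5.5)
The plan is to work entirely at the matrix level through the isomorphism $e_i\leftrightarrow\sigma_i$, $I\leftrightarrow i$, and pull each statement back to blades. The central step is a blade-by-blade computation of $\Phi(G)=Ge_{31}$, the image of each blade of $G$ under right multiplication by $e_{31}\leftrightarrow i\sigma_2$, followed by a check of whether each image is symmetric or antisymmetric as a $2\times2$ complex matrix. With the standard Pauli matrices, $\sigma_1,\sigma_3,\mathbf 1$ are symmetric, while $\sigma_2$ and $i\sigma_2$ are antisymmetric.

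For part (i) the computation is direct. We have $1\cdot e_{31}=e_{31}\leftrightarrow i\sigma_2$, which is antisymmetric. For the pseudoscalar, $Ie_{31}=-e_2$ by Corollary~\ref{cor:hodge}, since $Ie_2=e_{31}$ gives $I e_{31}=I^2e_2=-e_2$, and $e_2\leftrightarrow\sigma_2$ is antisymmetric. Hence $\Phi(\mathrm{span}\{1,I\})=\mathrm{span}\{e_2,e_{31}\}$. Because $I$ is central, this plane is also $\mathrm{span}_{\mathbb C}\{i\sigma_2\}$, which is exactly the one-complex-dimensional space of antisymmetric $2\times2$ complex matrices. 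Equality therefore holds, not just inclusion.

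For part (ii) I compute the remaining six images. Using $e_3e_1=e_{31}$ and $e_1e_3e_1=-e_3$, these are $e_1e_{31}=-e_3$, $e_3e_{31}=e_1$ and $e_2e_{31}=e_2e_3e_1=I$, together with the analogous products for $e_{12},e_{23},e_{31}$. The images land in $\mathrm{span}\{1,e_1,e_3,I,e_{12},e_{23}\}$, which corresponds to $\mathrm{span}_{\mathbb C}\{\mathbf 1,\sigma_1,\sigma_3\}$. This is the three-complex-dimensional space of symmetric matrices. A simpler route is to note that $\Phi$ is a bijection (Definition~\ref{def:fierz}) and that the symmetric and antisymmetric spaces are complementary of real dimensions $6$ and $2$. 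Since the singlet plane maps onto the antisymmetric space, the complementary grade-$1\oplus2$ space must map onto a six-dimensional complement. To show that this complement is exactly the symmetric space, I still need either the explicit computation or an orthogonality argument, so I will simply tabulate the six products.

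The spin and parity labels then follow from Lemma~\ref{lem:anti}. Because $B^\dagger(V)=-B^\dagger(V^{\mathsf T})$, after relabeling $\mathbf k\to-\mathbf k$ an antisymmetric $V$ needs an even momentum form factor and a symmetric $V$ needs an odd one. The antisymmetric matrix $i\sigma_2$ is the $S=0$ singlet, and the symmetric matrices $\boldsymbol\sigma i\sigma_2$ are the $S=1$ triplets. Reading "eigenspaces" as the $\pm$ eigenspaces of $V\mapsto V^{\mathsf T}$, pulled back through $\Phi$, identifies the two sectors of the statement.

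For part (iii), $e^{I\varphi}$ is central, so $\Phi(e^{I\varphi}G)=e^{I\varphi}\Phi(G)$. Left multiplication by $e^{I\varphi}$ preserves $\mathrm{span}\{1,I\}$ and preserves grades $1\oplus2$, since $I$ maps grade 1 to grade 2 and back (Corollary~\ref{cor:hodge}). It acts as multiplication by $e^{i\varphi}$ on both sectors.

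I expect the main obstacle to be interpretive rather than computational. The phrase "eigenspaces" and the gap-level parity labels have to be tied to a definite operator, the transpose pulled back by $\Phi$, with the even/odd momentum dependence justified through Lemma~\ref{lem:anti}. I will state explicitly that the transpose corresponds under the isomorphism to the anti-automorphism $G\mapsto e_{31}^{-1}\tilde G^{*}e_{31}$, but rely on the matrix check for the proof itself.
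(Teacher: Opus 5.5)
Your proposal is correct and follows essentially the same route as the paper. Both arguments rest on the blade table of right multiplication by $e_{31}$ (Supplemental Sec.~S1), the identification of the antisymmetric matrices with the complex line $\mathbb C(i\sigma_2)=\mathrm{span}\{e_2,e_{31}\}$, and centrality of $I$ together with Hodge duality for (iii). Beyond that, your $\mathbf k\to-\mathbf k$ argument for the gap-level parities and your reading of the ``eigenspaces'' as those of the transpose pulled back through $\Phi$ spell out what the paper leaves implicit; that pulled-back involution is $G\mapsto-\bar G$, with $\bar G$ the Clifford conjugate, which is $+1$ on grades $0,3$ and $-1$ on grades $1,2$.
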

\begin{proof}
The proof is given in Supplemental Material Sec.~S5. It uses only the
survival condition of Lemma~\ref{lem:anti} and the blade table of
Definition~\ref{def:fierz}. In outline: survival requires $V=Ge_{31}$ to be
antisymmetric; the antisymmetric $2\times2$ complex matrices form the single
complex line $\mathbb C(i\sigma_2)$, whose real span is
$\mathrm{span}\{e_2,e_{31}\}$; and from the blade table,
$V\in\mathrm{span}\{e_2,e_{31}\}$ is equivalent to
$G=-Ve_{31}\in\mathrm{span}\{1,I\}$ (using $e_2e_{31}=I$ and
$e_{31}^2=-1$), which gives (i). The six symmetric directions
$\{1,e_1,e_3,e_{12},e_{23},I\}$ have pre-image $\{e_i,e_{ij}\}$ by the same
method, giving (ii). For (iii), left multiplication by the rotor rotates the pair $(g_0,g_3)$
within $\mathrm{span}\{1,I\}$, and the Hodge duality $Ie_i=e_{jk}$ keeps the
triplet sector closed; the adjoint action is trivial because $I$ is central,
so it is the left action, not the adjoint, that implements the phase
rotation.
\end{proof}

\begin{corollary}[Matrix-level attribution]
\label{cor:matrix-level}
The mean-field gap matrix, written in the standard notation as
$\Delta_s'\sigma_2 + \Delta_s''(i\sigma_2)$~\cite{Sigrist1991},
decomposes in the blade basis as
\begin{equation}
\hat\Delta_s=\Delta_s'\,e_2+\Delta_s''\,e_{31}\in\mathrm{span}\{e_2,e_{31}\}\subset\mathrm{grade}\,1\oplus2,
\end{equation}
and the triplet space is spanned by the symmetric matrices
$\{1,e_1,e_3,e_{12},e_{23},I\}$. The kernel-level and matrix-level
dictionaries are related by the Fierz map $\Phi$:
$\Phi(\mathrm{span}\{1, I\})=\mathrm{span}\{e_2, e_{31}\}$.
\end{corollary}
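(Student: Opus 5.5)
The corollary is a translation exercise, and I will do it entirely under the isomorphism $e_i\leftrightarrow\sigma_i$, $I\leftrightarrow i$ of Remark~\ref{rem:twometrics}. There are three parts: the singlet gap matrix, the triplet space, and the link to the kernel level.

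\emph{Singlet matrix.} I first identify the images of the two standard singlet matrices. Corollary~\ref{cor:hodge} gives $Ie_2=e_{31}$, so $e_{31}\leftrightarrow i\sigma_2$; also $e_2\leftrightarrow\sigma_2$. Linearity then gives
\[
\Delta_s'\sigma_2+\Delta_s''(i\sigma_2)\leftrightarrow\Delta_s'e_2+\Delta_s''e_{31}.
\]
Here $e_2$ has grade 1 and $e_{31}$ has grade 2, so the image lies in grade $1\oplus2$. This span is exactly the real span of the complex line $\mathbb C(i\sigma_2)$, because $i\cdot(i\sigma_2)=-\sigma_2$. That line is the space of antisymmetric $2\times2$ complex matrices, so Lemma~\ref{lem:anti} identifies it as the surviving (singlet) part of $B^\dagger(V)$.

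\emph{Triplet space.} The claim is that the symmetric matrices form the real span of the six blades $\{1,e_1,e_3,e_{12},e_{23},I\}$. I check transposes blade by blade:
\begin{itemize}
\item $\sigma_1^{\mathsf T}=\sigma_1$, $\sigma_3^{\mathsf T}=\sigma_3$ and $\mathbf 1^{\mathsf T}=\mathbf 1$, while $\sigma_2^{\mathsf T}=-\sigma_2$.
\item $e_{12}=Ie_3\leftrightarrow i\sigma_3$, $e_{23}=Ie_1\leftrightarrow i\sigma_1$ and $I\leftrightarrow i\mathbf 1$ are all symmetric.
\item $e_{31}\leftrightarrow i\sigma_2$ is antisymmetric.
\end{itemize}
The eight blades therefore split as 6 symmetric plus 2 antisymmetric. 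This matches the real dimensions $6+2$ of the symmetric/antisymmetric decomposition of $M_2(\mathbb C)$, which proves the claimed spans.

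\emph{Kernel link.} I compute $\Phi$ directly: $\Phi(1)=e_{31}$ and $\Phi(I)=Ie_{31}=-e_2$, using $Ie_{31}=I^2e_2=-e_2$. Hence $\Phi(\mathrm{span}\{1,I\})=\mathrm{span}\{e_2,e_{31}\}$, consistent with Theorem~\ref{thm:channels}(i).

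The only real obstacle is bookkeeping of sign conventions: $e_{31}=e_3e_1$ rather than $e_{13}$, and the orientation of the Hodge dual. The check $e_3e_1\leftrightarrow\sigma_3\sigma_1=i\sigma_2$ confirms that the orientation is correct.
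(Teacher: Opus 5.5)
Your proof is correct and follows essentially the paper's own route. The corollary rests on two ingredients from the paper: the Fierz blade table of Supplemental Sec.~S1 ($1\mapsto e_{31}$, $I\mapsto -e_2$), and the identification, in the proof of Theorem~\ref{thm:channels}, of the antisymmetric matrices with the real span of $\mathbb C(i\sigma_2)=\mathrm{span}\{e_2,e_{31}\}$, with the other six blades as the symmetric directions. Your direct transpose check of the Pauli images and the $6+2$ dimension count (valid because the blades map to a real basis of $M_2(\mathbb C)$) is a slightly more explicit rendering of that same verification.
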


\begin{corollary}[Singlet-plane signature]
\label{cor:singletsign}
Both $\mathrm{span}\{1, I\}$ and $\mathrm{span}\{e_2, e_{31}\}$ carry signature $(+,-)$:
$\langle1^2\rangle_0=+1$, $\langle I^2\rangle_0=-1$, $\langle e_2^2\rangle_0=+1$,
and $\langle e_{31}^2\rangle_0=-1$. The trace-form discussion of
Sec.~\ref{sec:signature} holds at both levels.
\end{corollary}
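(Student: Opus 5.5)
The statement to prove is Corollary~\ref{cor:singletsign}. It is a direct computation from Lemma~\ref{lem:ortho}, with Theorem~\ref{thm:diag} supplying the signature reading. The plan has three steps.

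\textbf{Step 1: the four squares.} Equation~\eqref{eq:square} gives each blade square as a pure scalar fixed by its grade:
\begin{equation}
1^2=1,\qquad e_2^2=\epsilon_1=+1,\qquad e_{31}^2=\epsilon_2=-1,\qquad I^2=\epsilon_3=-1 .
\end{equation}
Taking the grade-0 projection of these scalars yields the four displayed values.

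\textbf{Step 2: diagonal form on each plane.} Each plane consists of two distinct blades, so the cross terms vanish by the Kronecker delta in Eq.~\eqref{eq:ortho}: $\langle 1\cdot I\rangle_0=0$ and $\langle e_2e_{31}\rangle_0=\langle I\rangle_0=0$, using $e_2e_{31}=I$ from Theorem~\ref{thm:channels}. Restricting Theorem~\ref{thm:diag} to $X=x_0+x_3I$ gives $\langle X^2\rangle_0=x_0^2-x_3^2$. Restricting it to $X=ae_2+be_{31}$ gives $\langle X^2\rangle_0=a^2-b^2$. Both restricted forms are nondegenerate with signature $(+,-)$.

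\textbf{Step 3: consistency under the Fierz map.} The coincidence of the two signatures should be traced to $\Phi$ rather than left as an accident. For real multivectors,
\begin{equation}
\langle\Phi(X)\Phi(Y)\rangle_0=\langle Xe_{31}Ye_{31}\rangle_0 .
\end{equation}
This is not in general $\pm\langle XY\rangle_0$, so the check should be done on the singlet plane only. The images are $\Phi(1)=e_{31}$ and $\Phi(I)=Ie_{31}=-e_2$, using Corollary~\ref{cor:hodge} and the relation $Ie_2=e_{31}$, which gives $Ie_{31}=I^2e_2=-e_2$. Hence $\Phi$ carries the $(+)$ direction $1$ to the $(-)$ direction $e_{31}$, and the $(-)$ direction $I$ to the $(+)$ direction $-e_2$. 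On this plane, therefore,
\begin{equation}
\langle\Phi X\,\Phi Y\rangle_0=-\langle XY\rangle_0 .
\end{equation}
The signature $(+,-)$ is preserved as a set, with the roles of the two axes swapped. This is what justifies the final clause: the trace-form discussion holds at both levels, up to an overall sign that does not change the signature type.

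\textbf{Main obstacle.} The computations are immediate; the difficulty is Step 3. One should not claim that $\Phi$ is an isometry of the full pairing form, which is false off the singlet plane. I would state the anti-isometry only on $\mathrm{span}\{1,I\}$ and verify it with the two blade images computed above.
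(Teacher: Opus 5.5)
Your Steps 1 and 2 are the paper's argument: the corollary has no separate proof and is read off from the blade squares of Lemma~\ref{lem:ortho}, with Theorem~\ref{thm:diag} giving the diagonal $(+,-)$ form on each plane. Step 3 is correct but not needed, because the final clause already follows from Proposition~\ref{prop:trace}, which holds blade by blade on all of $Cl_{3,0}$: at each level the $+$ axis is Hermitian and the $-$ axis is anti-Hermitian, and $\Phi$ swaps them, as the quarter-turn remark of Definition~\ref{def:fierz} anticipates. Your anti-isometry is also immediate from centrality, since $\Phi(X)\Phi(Y)=Xe_{31}Ye_{31}=XYe_{31}^2=-XY$ for $X,Y\in\mathrm{span}\{1,I\}$.
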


\begin{remark}[Hermitian and anti-Hermitian split]
$\hat\Delta_s=\Delta_s'\sigma_2+\Delta_s''(i\sigma_2)$: the $e_2$ component
($\sigma_2$) is Hermitian and the $e_{31}$ component ($i\sigma_2$) is
anti-Hermitian, so the real/imaginary split of the order parameter coincides
with its Hermitian/anti-Hermitian split; at kernel level the corresponding
statement is that $\{1\}$ is Hermitian and $\{I\}$ is anti-Hermitian. A
non-Hermitian $\hat\Delta$ is therefore the standard situation for a complex
order parameter; overall Hermiticity of the Hamiltonian is enforced jointly
by the two operator legs and does not require the kernel itself to be
Hermitian.
\end{remark}

\begin{remark}[The physics of the relative phase]
The overall phase is a gauge freedom (the Goldstone direction), while the
relative phases between different complex fields carry real
physics~\cite{Sigrist1991}: in multicomponent order parameters a relative
phase $\pm\pi/2$ breaks time-reversal symmetry. A nonzero Chern number
requires, in addition, a winding phase texture, as in $p+ip$ or $d+id$;
an $s+id$ state can break time reversal and produce a Kerr signal while
having $C=0$ when its phase texture does not wind
(Theorem~\ref{thm:trsb}). The grade decomposition stores the real and
imaginary parts of each complex field separately, and is a natural
parameterization record of the phase structure of multicomponent order
parameters.
\end{remark}

\subsection{Grade decomposition of pairing structures}

The order parameter is the expectation value of the two-fermion operator,
$\Delta_{\alpha\beta}(\mathbf k)=\langle c_{-\mathbf k\alpha}c_{\mathbf k\beta}\rangle$.
The spin space of the two electrons decomposes under the Clebsch--Gordan
series of SU(2) as $V_{1/2}\otimes V_{1/2}\simeq V_0\oplus V_1$
($\mathbf 2\otimes\mathbf 2=\mathbf 1\oplus\mathbf 3$ in dimensions). The grade attribution of pairing
structures at the kernel and matrix levels is collected in
Table~\ref{tab:pairing-grade}.

\begin{table*}[t]
\caption{\label{tab:pairing-grade}Grade attribution of pairing structures at
the kernel and matrix levels. In the triplet rows the convention is
$\hat\Delta=i(\mathbf d\cdot\boldsymbol\sigma)\sigma_2$ with complex
$\mathbf d$; each real or imaginary component of $\mathbf d$ then carries
the blade listed by direct expansion; the Fierz-map blade table is in
Supplemental Material Sec.~S1.}
\begin{ruledtabular}
{\small\setlength{\tabcolsep}{4pt}
\begin{tabular}{llll}
Level & Channel & Grade & Component\\
\hline
Kernel $G$ & singlet $\mathrm{span}\{1, I\}$ & 0 ($1$) & real part $g_0$\\
Kernel $G$ & singlet $\mathrm{span}\{1, I\}$ & 3 ($I$) & imaginary part $g_3$\\
Kernel $G$ & triplet & 1 ($e_i$) & real part $\mathbf g_1$\\
Kernel $G$ & triplet & 2 ($e_{ij}$) & imaginary part $\mathbf g_2$\\
\hline
Matrix $\hat\Delta$ & singlet $\mathrm{span}\{e_2, e_{31}\}$ & 1 ($e_2$) & real part $\Delta_s'$\\
Matrix $\hat\Delta$ & singlet $\mathrm{span}\{e_2, e_{31}\}$ & 2 ($e_{31}$) & imag.\ part $\Delta_s''$\\
Matrix $\hat\Delta$ & triplet & 1, 3 ($e_1,e_3,I$) & real $\mathbf d$: $d_z\!\to\!e_1$, $d_x\!\to\!-e_3$, $d_y\!\to\!I$\\
Matrix $\hat\Delta$ & triplet & 0, 2 ($1,e_{12},e_{23}$) & imaginary $\mathbf d$: $i d_y\!\to\!-1$, $i d_z\!\to\!e_{23}$, $i d_x\!\to\!-e_{12}$\\
\end{tabular}
}
\end{ruledtabular}
\end{table*}

\section{Scalar projection and the effective pairing interaction}
\label{sec:scalar}

\subsection{Construction}

Every superconducting Hamiltonian involves the ``multiplication together'' of
three kinds of objects. Beneath this single word ``multiply'' hide three
completely different multiplications, summarized in
Table~\ref{tab:multiply}. At the fermion level, the many-body wavefunction
is a Slater determinant, whose algebraic root is the antisymmetry of the
wedge product: $\Psi(\mathbf x_1,\mathbf x_2)=\psi_a(\mathbf x_1)\psi_b(
\mathbf x_2)-\psi_b(\mathbf x_1)\psi_a(\mathbf x_2)\cong\psi_a\wedge\psi_b$,
and the wedge product of a state with itself vanishes identically,
$\psi\wedge\psi=0$. This is the geometric-algebra expression of the Pauli
exclusion principle: it is not pasted on as an external axiom but is built
into the structure of the algebraic multiplication. This statement concerns
the many-body wavefunction; at the Hamiltonian level of this paper the same
statistics enters through the antisymmetric-vertex selection and the
ordering of the Fock coefficients (Lemma~\ref{lem:anti}), since raw Fock
operators are not elements of $Cl_{3,0}$. At the boson level, the
many-body wavefunction is a symmetric tensor product, whose algebraic root
is the commutation structure of the Weyl algebra:
$\Psi=\psi_a\odot\psi_b=\psi_a\psi_b+\psi_b\psi_a$, and the same state can
be superposed arbitrarily, which is the mathematical origin of the statement
that one mode may contain $n$ photons or phonons. At the stitching level,
the electron part and the phonon part are multiplied together with the
tensor product, $\Psi_{\mathrm{total}}=\Psi_{\mathrm{el}}\otimes\Psi_{\mathrm{ph}}$;
strictly speaking, the two do not even live in the same algebra, and the
``multiplication'' is really ``each acts on its own space.'' This federated
structure is compatible with the fermion--boson statistics structure at the
level of many-body wavefunctions: the three multiplications are not merged into one, and the
scalar-projection method respects the federation---it unifies only the
spin-space content of the electronic sector, with the bosonic propagator
and the Fock content entering as ordered coefficients~\cite{Doran2003,Negele1988}.
\begin{table*}[t]
\caption{\label{tab:multiply}The multiplication federation of
superconductivity theory.}
\begin{ruledtabular}
\begin{tabular}{llll}
Object & Multiplication & Algebra & Key property\\
\hline
Fermions & wedge product $\wedge$ & exterior algebra (within Clifford) & antisymmetric:
$\psi\wedge\psi=0$ (Pauli)\\
Bosons & symmetric product & Weyl algebra & commuting: arbitrary powers\\
Stitching & tensor product $\otimes$ & graded stitching & each part keeps
its own product\\
\end{tabular}
\end{ruledtabular}
\end{table*}

The scalar-projection method respects this federation: the electron sector
uses the Clifford algebra (wedge product),
the boson sector uses the Weyl algebra (symmetric product), and the two are
stitched by the tensor product.

\subsection{Provenance of the compact form}
\label{sec:provenance}

Equation~\eqref{eq:hpair} is the form left behind by the standard routes
to a pairing interaction, and it is helpful to see one such route before
using it. The same three bookkeeping steps recur in the standard
derivations; they rearrange only where things sit and strip nothing from
the physics. The logical direction of this paper, however, is the reverse:
Eq.~\eqref{eq:hpair} with an arbitrary kernel is the starting assumption,
and the steps below record its provenance in the cases where a microscopic
derivation exists.

\emph{Step 1 (normal ordering).} A microscopic mechanism acting on the
itinerant band---bare Coulomb repulsion, exchange of a boson (phonon,
photon, spin fluctuation from integrated-out local moments), or the
two-particle vertex of a correlated solver---is conventionally organized
as a normal-ordered two-body interaction
\begin{equation*}
H_{\mathrm{int}}=\frac12\sum_{\mathbf k\mathbf k'\mathbf q}
\psi^\dagger_{\mathbf k+\mathbf q}\,\psi^\dagger_{\mathbf k'-\mathbf q}\;
V(\mathbf q;\mathbf k,\mathbf k')\;
\psi_{\mathbf k'}\,\psi_{\mathbf k},
\end{equation*}
with $V$ a $2\times2$ matrix in each spin index. The model-dependent
physics lives in $V$; the two-body form itself is bookkeeping.

\emph{Step 2 (pairing-channel projection).} In a superconducting state the
anomalous expectations $\langle\psi_{\mathbf k}\psi_{-\mathbf k}\rangle$
are nonzero, and the standard mean-field decoupling of $H_{\mathrm{int}}$
in that channel---equivalently, retaining the anomalous part of the
interaction under Nambu normal ordering---leaves
\begin{equation*}
H_{\mathrm{pair}}=\sum_{\mathbf k,\mathbf k'}
\psi^\dagger_{\mathbf k}\,\psi^\dagger_{-\mathbf k}\;
V^{\mathrm{pp}}(\mathbf k,\mathbf k')\;
\psi_{-\mathbf k'}\,\psi_{\mathbf k'},
\end{equation*}
in which each particle-hole vertex has been crossed into the
particle-particle channel. The crossing is a Fierz rearrangement; its sign
structure is analyzed in Theorem~\ref{thm:casimir}, and the antisymmetry of
the two-fermion vertex that any such decoupling must respect is
Lemma~\ref{lem:anti} at the Hamiltonian level.

\emph{Step 3 (spin-blade decomposition with coefficient compression).}
Each spin vertex is now a $2\times2$ matrix block; expanding it in the
blade basis of $Cl_{3,0}$ and compressing everything that is not
spin---form factors, propagator denominators, band and momentum
dependence---into ordered coefficients turns $V^{\mathrm{pp}}$ into a
$Cl_{3,0}$-valued kernel. For a boson-mediated interaction the move is
explicit: the second-order kernel is the geometric product $M\,D\,M$ of the
spin vertex $M$ with the boson propagator $D$, and
\begin{equation*}
\begin{split}
\sum_{a,b}\sigma_a D_{ab}\sigma_b
=(\mathrm{Tr}\,D)\,\mathbf{1}+i\,\varepsilon_{abc}D_{ab}\,\sigma_c \\
\ \longleftrightarrow\ (\mathrm{Tr}\,D)\,1+\varepsilon_{abc}D_{ab}\,Ie_c,
\end{split}
\end{equation*}
since $\sigma_a\sigma_b=\delta_{ab}\mathbf{1}+i\varepsilon_{abc}\sigma_c$.
The isotropic part of the propagator therefore feeds grade 0 and its
antisymmetric part feeds grade 2 (the $Ie_c$ are bivectors), while the
crossing of Step 2 places the result into the Fierz-antisymmetric kernel
subspace. For an isotropic propagator $D_{ab}=D_0\delta_{ab}$ one has
$\mathrm{Tr}\,D=3D_0$; this trace factor is the numerical origin of the
factor $3$ in the $-3\chi(\mathbf k-\mathbf k')$
singlet entry of Table~\ref{tab:mechanisms}.

After the three steps, each microscopic mechanism appears as a direction
in the eight-dimensional real kernel space together with a momentum
texture. The steps are a provenance, not a premise: where they fail---glue
that cannot be linearly separated into one boson per term, or interactions
without perturbative control---no microscopic derivation is available, yet
the spin content of the effective pairing channel remains well defined and
still expands in the blade basis, with everything that is not spin sitting
in the coefficients. Equation~\eqref{eq:hpair} is therefore adopted as the
working assumption, more general than any of its derivations. The compact
form of the effective pairing interaction is
\begin{equation}
H_{\mathrm{pair}}=\sum_{\mathbf k,\mathbf k'}
\big\langle\psi_{\mathbf k}^\dagger\,\psi_{-\mathbf k}^\dagger\;
G(\mathbf k,\mathbf k')\;\psi_{-\mathbf k'}\,\psi_{\mathbf k'}\big\rangle_{0,\mathrm{spin}}
\label{eq:hpair}
\end{equation}
The Hermitian conjugate of the double sum is the same expression with
$\mathbf k\leftrightarrow\mathbf k'$ relabeled, so overall Hermiticity is
the kernel condition $G(\mathbf k,\mathbf k')=\tilde G(\mathbf k',\mathbf k)$,
i.e.\ $\hat G(\mathbf k,\mathbf k')=\hat G(\mathbf k',\mathbf k)^\dagger$;
no separate $\mathrm{h.c.}$ term is needed.
The spinor factors are understood in their projected form:
the spin content of $\psi^\dagger\psi^{\dagger\mathsf T}$ and
$\psi\psi^{\mathsf T}$ is decomposed in the blade basis of $Cl_{3,0}$,
while the Fock-operator content is compressed into the coefficients and
kept in its written order, so $\langle\cdot\rangle_{0,\mathrm{spin}}$
contracts the spin-blade structure of the three factors. This is the same
move as the Schr\"odinger representation, in which the momentum operator
projects onto a partial derivative and the operator content is compressed
into the coefficients. The Fock-operator coefficients commute with the spin
blades and are ordered among themselves. The notation in Eq.~\eqref{eq:hpair} implements
the two-level projection rule. The fermion operators are Fock-space carriers and are not,
as raw operators, elements of the spin-space algebra $Cl_{3,0}$; the objects
entering $Cl_{3,0}$ are their projected $2\times2$ spin-matrix blocks
together with $G$. The geometric product acts among those symmetry-space
multivectors, and the subscripted projection $\langle\cdot\rangle_{0,\mathrm{spin}}$
acts linearly on their spin blades only, leaving Fock-carrier factors
untouched; before mean-field expectation the resulting grade-0 scalar is
still an operator on Fock space. For a constant kernel $G=g_0\cdot1$ the surviving vertex is
$\Phi(G)=g_0e_{31}\leftrightarrow g_0\,i\sigma_2$
(Definition~\ref{def:fierz}), and this
reduces to the classic BCS pairing form
$H_{\mathrm{pair}}=-g_0\sum_{\mathbf k,\mathbf k'}s_{\mathbf k}^\dagger
s_{\mathbf k'}$ with
$s_{\mathbf k}=c_{\mathbf k\uparrow}c_{-\mathbf k\downarrow}
-c_{\mathbf k\downarrow}c_{-\mathbf k\uparrow}$: the $\mathbf k$-summed
symmetric-spin part of the creation bilinear vanishes by
Lemma~\ref{lem:anti}, and
$\langle e_{31}^2\rangle_0=-1$ converts the kernel sign into the $V<0$
attraction convention.

The kernel $G(\mathbf k,\mathbf k')\in Cl_{3,0}$ is an arbitrary real-coefficient
multivector in the three-dimensional symmetry-space algebra, expanded
completely by the eight-dimensional real basis. The scalar-projection method
presupposes no particular boson field, no validity of second-order
perturbation theory, and no factorization assumption: phonons, spin
fluctuations, spin-orbit coupling, external fields, and all other effects
enter the projected eight real components of $G$ as coefficients. Bosons,
orbitals, and momenta participate through scalar form factors stitched by
the tensor product.

The kernel may be obtained in two ways: (1) effective-theory
parameterization, writing the expansion coefficients of $Cl_{3,0}$ directly
from symmetry and experimental constraints; (2) DMFT output~\cite{Georges1996}, mapping the
two-particle vertex of an impurity solver onto the $Cl_{3,0}$ basis by a
Fierz/projection map. Neither route involves the stripping of any bosonic
glue.

\subsection{Mixed states as native coordinates, not model patchwork}

In the traditional BCS/Eliashberg framework, a ``mixed state'' is a
theoretical burden. To describe the coexistence of, say, phonon glue plus
spin-fluctuation glue plus spin-orbit-coupling modulation, one must write a
microscopic Hamiltonian containing all the degrees of freedom, perform
perturbation theory or mean-field theory, and verify that the different
mechanisms do not cancel each other. In the strong-correlation region this
is often infeasible, because the glue cannot be stripped and different
microscopic mechanisms cannot be added cleanly. The scalar-projection method
instead takes $G\in Cl_{3,0}$ directly as an \emph{a priori} arbitrary
eight-dimensional real parameter. Different microscopic mechanisms are simply
different directions in this space, together with momentum structures
(Table~\ref{tab:mechanisms}). A mixed state does not need to be
constructed; it only needs to be parameterized. This is the
methodological conversion of this paper: from asking ``what is the
microscopic glue'' to parameterizing ``the most general structure of the
low-energy effective kernel.''

\begin{table*}[t]
\caption{\label{tab:mechanisms}Microscopic mechanisms at the kernel level.}
\begin{ruledtabular}
\begin{tabular}{lll}
Mechanism & Vertex (p-h) grade & Kernel sector / momentum structure\\
\hline
Phonon (conventional $s$) & 0 & $\mathrm{span}\{1\}$ (grade 0), isotropic
$g_0$\\
Antiferromagnetic fluctuation & 1 & singlet sector after Fierz,
$g_0\propto-3\chi(\mathbf k-\mathbf k')$\\
Nematic/orbital fluctuation & 0 $+$ form factor & singlet sector,
sign-changing $f(\mathbf k)f(\mathbf k')$\\
Ferromagnetic fluctuation & 1 & triplet sector, $\chi$ peaked
$\mathbf q\to0$\\
SOC $+$ external fields & --- & cross-sector mixing\\
Strong correlation & --- & full $Cl_{3,0}$\\
\end{tabular}
\end{ruledtabular}
\end{table*}

The grade sectors carry only spin and parity information. What
distinguishes one mechanism from another is the momentum texture and the
signs of the coefficients $g(\mathbf k,\mathbf k')$, and this is where the
Casimir analysis of Sec.~\ref{sec:signature} and the selection rules of
Sec.~\ref{sec:block} do their work.

\subsection{Complete parameterization and the route to BdG}
\label{sec:flow}

The effective kernel is expanded completely as
\begin{equation}
G(\mathbf k,\mathbf k')=g_0\cdot1+\mathbf g_1\cdot\mathbf e+\mathbf g_2
\cdot(\mathbf e\wedge\mathbf e)+g_3\cdot I .
\end{equation}
Since $Cl_{3,0}\cong M_2(\mathbb C)$ (Remark~\ref{rem:trace}), the
eight-real-dimensional basis provides a set of real coordinates for the
space of $2\times2$ complex pairing matrices. The traditional
complexification $\Delta=\Delta'+i\Delta''$ merges the grade-2/3 (imaginary)
directions with the grade-0/1 (real) directions through the global U(1)
phase; this is physically self-consistent because the overall phase is a
gauge freedom. This paper unfolds the complex structure $\{1,I\}$
explicitly, displaying the four grades on equal footing in real space. The
complex structure is separated, so the imaginary
directions handled implicitly by the traditional $i$ become independently
trackable coordinates. The two descriptions are
physically equivalent and differ only in the choice of coordinates; more
coordinates do not mean more physical freedoms.

The route from the scalar-projection method to BdG proceeds in four steps.

\paragraph{Step 1: Fermionic filtering.}
By Lemma~\ref{lem:anti}, the constant part of Eq.~\eqref{eq:hpair} carries
only the antisymmetric ($i\sigma_2$) vertex; for general momentum
structures, both channels emerge from the eigen-decomposition of the
effective kernel (Theorem~\ref{thm:channels}).

\paragraph{Step 2: Mean field and the generalized gap equation.}
The anomalous expectation
$F_{\alpha\beta}(\mathbf k)=\langle c_{-\mathbf k\alpha}c_{\mathbf k\beta}\rangle$
obeys the fermionic antisymmetry constraint
\begin{equation}
F(\mathbf k)=-F^{\mathsf T}(-\mathbf k),
\label{eq:fconstraint}
\end{equation}
and the gap equation
\begin{equation}
\hat\Delta(\mathbf k)=\sum_{\mathbf k'}G(\mathbf k,\mathbf k')\,F(\mathbf k').
\label{eq:gapeq}
\end{equation}
Here $\hat\Delta$ carries the index order of $F$ as written in the pair field
of Eq.~\eqref{eq:hpair}; transposing both sides gives the equivalent form
$\hat\Delta^{\mathsf T}(\mathbf k)=\sum_{\mathbf k'}F^{\mathsf T}
(\mathbf k')\,G^{\mathsf T}(\mathbf k,\mathbf k')$, and the Eliashberg
equation below uses the no-transpose convention.
The kernel $G$ is simultaneously the multivector whose grades carry the channel
labels of Theorem~\ref{thm:channels} and the matrix acting on $F$ in
Eq.~\eqref{eq:gapeq}; the Fierz map $\Phi$ (Definition~\ref{def:fierz})
translates between the kernel-level dictionary and the matrix-level one of
Corollary~\ref{cor:matrix-level}, while Lemma~\ref{lem:anti} enforces
operator-level survival. As a check, $G=g_0\cdot1$ with
$F=f\,i\sigma_2$ gives $\hat\Delta\propto i\sigma_2$, a singlet, while a
constant kernel cannot couple to odd-parity channels at all (their angular
projection vanishes), so the $s$-wave is selected naturally. We record the
scope of Eq.~\eqref{eq:gapeq}: the kernel acts on $F$ by left (geometric)
multiplication, i.e.\ the vertex class
$V_{\alpha\beta,\gamma\delta}=G_{\alpha\gamma}\delta_{\beta\delta}$;
vertices with right-multiplication or channel-projector structure (for
example the on-site singlet projector) lie outside this class and are not
claimed here.

The full matrix-valued gap equation follows the standard route. A
Hubbard--Stratonovich transformation~\cite{Negele1988}, exact for positive
coupling, converts the four-fermion term into a fermion-boson quadratic
form; Gaussian integration over the Nambu spinors and a saddle point then
give the Matsubara-space equation
\begin{equation}
\hat\Delta(i\omega_n)=T\sum_m\mathcal{G}_{\mathrm{GA}}(i\omega_n-i\omega_m)\,
\frac{\hat\Delta(i\omega_m)}{\sqrt{\omega_m^2+\xi^2+|\hat\Delta(i\omega_m)|^2}},
\end{equation}
and analytic continuation~\cite{AGD1975} yields, in a schematic
real-axis form (the $i0^+$ structure suppressed and the retarded
prescription understood),\begin{small}
\begin{equation}
\begin{split}
\hat\Delta(\mathbf k,\omega)=\!\int\! d\mathbf k'd\omega'\,
\mathcal{G}_{\mathrm{GA}}(\mathbf k-\mathbf k',\omega-\omega')\,\\
\times\;\frac{\hat\Delta(\mathbf k',\omega')}{\sqrt{\omega'^2-\xi_{\mathbf k'}^2-
|\hat\Delta(\mathbf k',\omega')|^2}}\tanh\frac{\beta\omega'}{2}.
\label{eq:eliashberg}
\end{split}
\end{equation}
\end{small}
Here $\mathcal{G}_{\mathrm{GA}}=\sum_Ag_Ae_A$ is the grade-structured,
matrix-valued glue kernel. No Migdal theorem~\cite{Migdal1958} is required
in this derivation, which matters because for spin, nematic, and multipolar
glues the condition $\omega_{\mathrm{glue}}\ll E_F$ fails. Two things here go
beyond the traditional equation: the kernel is an eight-component matrix
object rather than a scalar function, and the output channel is read off
from Theorem~\ref{thm:channels} rather than assumed. The full momentum
dependence is also retained, where the traditional Fermi-surface average
would wash out the angular structure required by $d$-, $p$-, and $s_\pm$-wave
states.

\paragraph{Fixed-point formulation (remark level).}
Write $\mathcal E=\mathcal K\circ\mathcal F$, with $\mathcal K$ the
kernel-linear integral operator and $\mathcal F$ the gap-denominator
nonlinearity. Under standard regularity assumptions (square-integrable,
positive kernel) $\mathcal K$ is Hilbert--Schmidt, $T_c$ is defined by
$\lambda_{\max}(T_c)=1$, and existence of a nonzero fixed point follows from
Krein--Rutman and Leray--Schauder arguments. We state this as orientation
only; a full regularity analysis is beyond the scope of the paper.

\paragraph{Step 3: BdG decomposition.}
With the Nambu spinor
$\Psi_{\mathbf k}=(c_{\mathbf k\uparrow},c_{\mathbf k\downarrow},
c^\dagger_{-\mathbf k\downarrow},c^\dagger_{-\mathbf k\uparrow})^{\mathsf T}$
(an ordering in which a singlet block reads
$h_{\mathrm{BdG}}=\xi\tau_z+\mathrm{Re}\Delta\,\tau_x-\mathrm{Im}
\Delta\,\tau_y$) and the single-particle energy
$\xi_{\mathbf k}=\epsilon_{\mathbf k}-\mu$,
the total BdG Hamiltonian is~\cite{Read2000}
\begin{equation}
\begin{split}
\mathcal H_{\mathrm{BdG}}=\sum_{\mathbf k}\Psi_{\mathbf k}^\dagger
\hat h_{\mathrm{BdG}}(\mathbf k)\Psi_{\mathbf k},\\
\hat h_{\mathrm{BdG}}(\mathbf k)=\begin{pmatrix}
\xi_{\mathbf k}\mathbf 1&\hat\Delta(\mathbf k)\\
\hat\Delta^\dagger(\mathbf k)&-\xi_{\mathbf k}\mathbf 1
\end{pmatrix},
\end{split}
\end{equation}
where $\hat\Delta(\mathbf k)$ decomposes according to
Corollary~\ref{cor:matrix-level}.

\paragraph{Step 4: Quasiparticle spectra.}
For spin-rotation-invariant systems with unitary triplet states
($\mathbf d\times\mathbf d^\ast=0$) the singlet and triplet blocks do not
couple, and the spectra are
\begin{equation}
E_{\mathbf k}^{(S)}=\sqrt{\xi_{\mathbf k}^2+|\Delta_s(\mathbf k)|^2},\qquad
E_{\mathbf k}^{(T)}=\sqrt{\xi_{\mathbf k}^2+|\mathbf d(\mathbf k)|^2}.
\end{equation}
Topological invariants are determined by the winding of $\hat\Delta(\mathbf k)$
in momentum space (Theorem~\ref{thm:chern}).

\subsection{Generation of the pairing glue}
\label{sec:glue}

The preceding sections take the kernel as given; here we give a model-level
illustration of how a projected bosonic channel can feed it. In a specified
fluctuation model, a bosonic fluctuation $\phi_{\mathbf q}$ supplies the glue
\begin{equation}
\Gamma(\mathbf q)=\langle\phi_{\mathbf q}\phi_{-\mathbf q}\rangle,
\end{equation}
and the grade content of $\Gamma$ indicates which projected kernel
components can be activated. For vector bosons,
\begin{equation}
\Gamma(\mathbf q)=\mathbf v_{\mathbf q}\mathbf v_{-\mathbf q}
=\underbrace{\mathbf v_{\mathbf q}\cdot\mathbf v_{-\mathbf q}}_{\text{grade 0}}
+\underbrace{\mathbf v_{\mathbf q}\wedge\mathbf v_{-\mathbf q}}_{\text{grade 2}} .
\end{equation}
The wedge term vanishes if and only if $\mathbf v_{-\mathbf q}\parallel
\mathbf v_{\mathbf q}$. In idealized minimal models, three standard
correspondences follow.

(i) \emph{Longitudinal phonon}: $\mathbf u_{-\mathbf q}=-\mathbf u_{\mathbf q}$
(the sign follows the longitudinal-polarization convention $\mathbf e_{-\mathbf q}=-\mathbf e_{\mathbf q}$), the wedge
vanishes identically, and the glue is pure grade 0; the component that can
be fed is $g_0$, the singlet kernel direction.

(ii) \emph{Antiferromagnetic fluctuations} peaked at
$\mathbf Q=(\pi,\pi)$: $\delta\mathbf S_{-\mathbf q}\approx-\delta\mathbf S_{\mathbf q}$,
the same geometry with a minus sign, the scalar part constitutes the glue,
and the standard model-level outcome is a singlet kernel with momentum
modulation (the $d$-wave route)~\cite{Monthoux1991,Scalapino1986,Chubukov2008}.

(iii) \emph{Ferromagnetic fluctuations} as $\mathbf q\to0$:
$\delta\mathbf S_{-\mathbf q}\approx+\delta\mathbf S_{\mathbf q}$ (even),
so the wedge again vanishes and the glue remains scalar in type; the
triplet feeding of the minimal $^3$He picture~\cite{Vollhardt1990} is
carried instead by the interaction-sign structure of the spin vertex, the
Casimir content of Theorem~\ref{thm:casimir}, not by a wedge component.

For symmetric-tensor bosons (the nematic $\mathcal Q_{\mathbf q}$), the
glue is scalar in type, $\Gamma\propto\mathcal Q_{\mathbf q}\cdot
\mathcal Q_{-\mathbf q}$, and the orbital content rides on the form
factor: the singlet sector is fed with the sign-changing
$f(\mathbf k)f(\mathbf k')$ texture (the $s_\pm$ route of the iron
pnictides)~\cite{Chubukov2008,Kuroki2009,Fang2011}. Symmetric tensors carry no grade, so no grade-2 glue
component arises from this channel. If nematic order breaks the lattice
symmetry that protects the singlet/triplet separation, the shape factor
acquires a spin-orbit admixture and a small triplet component can be
induced; this is a form-factor effect, not a grade-mixing effect. For
pseudoscalar bosons (a $T$-odd
pseudoscalar-type fluctuation $\Omega_{\mathbf q}=\omega_{\mathbf q}I$),
$\Gamma=-\omega_{\mathbf q}\omega_{-\mathbf q}$ is scalar in type with a
sign fixed by $I^2=-1$; the chiral content is carried by the vertex-side
$I$ factor and by the form factor. The response of the singlet kernel plane to a pseudoscalar drive is
treated in Sec.~\ref{sec:fields}.

At the algebraic level, the symmetry of a specified boson field under
$\mathbf q\to-\mathbf q$ fixes the grade content available to the glue and
therefore the kernel sector that can be fed. This is a model-level generation
picture rather than a theorem about materials: the grade arithmetic does not
decide which combination condenses (that is settled by the gap equation), but
once a mechanism activates a set of components, the algebra provides the
coupling rules among them.

\subsection{Cooper-channel decomposition}

At the mean-field level, writing the pairing fields in channel decomposition
gives the standard Cooper-channel form
\begin{equation}
\begin{split}
H_{\mathrm{pair}}=\sum_{\mathbf k,\mathbf k'}\Big[V_{00}(\mathbf k,\mathbf k')
\Delta_0^\dagger(\mathbf k)\Delta_0(\mathbf k')\\
+\sum_{ab}V_{ab}(\mathbf k,\mathbf k')d_a^\dagger(\mathbf k)d_b(\mathbf k')\Big],
\end{split}
\end{equation}
where $V_{00}$ is the singlet-singlet kernel and $V_{ab}$ the
triplet-triplet kernel. In spin-rotation-invariant systems the
singlet-triplet mixing term vanishes; in non-centrosymmetric systems mixing
appears and gives mixed-parity pairing~\cite{Frigeri2004}. Within the
singlet block the internal coordinates are $(g_0,g_3)$ at kernel level and
$(\Delta_s',\Delta_s'')$ at matrix level, related by $\Phi$; the kernel
metric has signature $(+,-)$ (Corollary~\ref{cor:singletsign}), while the physical free-energy metric
$|\Delta|^2$ is positive definite and gauge invariant, and the two are not
in contradiction.

\subsection{Why intermediate objects can carry anti-Hermitian components}

In the traditional BCS/Eliashberg framework, the effective pairing kernel is
habitually restricted to Hermitian matrices. This restriction does not come
from the requirement of overall Hermiticity of the Hamiltonian: overall
Hermiticity, $H_{\mathrm{pair}}^\dagger=H_{\mathrm{pair}}$, can be realized
by the joint Hermitian conjugation of $G$ and the fermion operators, and
does not exclude $G$ itself from carrying anti-Hermitian components (the
Hermitian/anti-Hermitian remark above). The root of the traditional
restriction is its physical origin: the second-order perturbation theory of
the electron-phonon coupling produces a real attractive kernel $V<0$
(that is, $G>0$ in the convention of Eq.~\eqref{eq:gapeq}), which falls
naturally in the Hermitian direction. Later extensions (spin
fluctuations and others) activate further components, but the
complexification $\Delta=\Delta'+i\Delta''$ absorbs the anti-Hermitian
directions implicitly into the global phase and never treats them as
independent parameter directions. The methodological move of the
scalar-projection method is this: presupposing no particular microscopic
mechanism, take $G\in Cl_{3,0}$ as an \emph{a priori} arbitrary
eight-dimensional real parameter. Grades 2 and 3 are then no longer
``appendages of the phase'' but independent directions on equal footing with
grades 0 and 1. The relative phase between the anti-Hermitian and Hermitian
components, such as the $\pm\pi/2$ in $d_{x^2-y^2}+i\,d_{xy}$, becomes a
directly parameterizable object. Traditional theory could write this
structure, but the real and imaginary directions are usually left folded
into the single symbol $i$; the present treatment separates them.

\subsection{Orbital structure of the glue and gap symmetry}

The pairing symmetry is decided by the linearized gap equation
\eqref{eq:gapeq}; in the traditional notation with the pairing potential
$V\equiv-G$ it reads
\begin{equation}
\Delta_{\alpha\beta}(\mathbf k)=-\sum_{\mathbf k'}V_{\alpha\beta,\gamma\delta}
(\mathbf k,\mathbf k')\langle c_{-\mathbf k'\gamma}c_{\mathbf k'\delta}\rangle,
\end{equation}
so the attractive BCS side is $V<0$, equivalently $G>0$ in the convention
of Eq.~\eqref{eq:gapeq}. Throughout this paper positive kernel eigenvalues
($\lambda_n>0$ in Sec.~\ref{sec:spectral}, $V_A>0$ in
Sec.~\ref{sec:rg}) denote attraction. The momentum dependence of the kernel function $V(\mathbf k,
\mathbf k')$ is fixed by the propagator of the glue and the vertex form
factors: the charge vertex (grade 0) gives
$V(\mathbf k,\mathbf k')=g^2D(\mathbf k-\mathbf k')f(\mathbf k)f(\mathbf k')$,
with the orbital content entirely in the form factor (for the nematic glue
$f=\cos k_x-\cos k_y$); the spin vertex (grade 1) gives
$V(\mathbf k,\mathbf k')=g^2\chi(\mathbf k-\mathbf k')$, fixed by the spin
fluctuation spectrum (antiferromagnetic peak at $\mathbf Q=(\pi,\pi)$ or
$(\pi,0)$, ferromagnetic peak at $\mathbf q\to0$)~\cite{Monthoux1991,
Scalapino1986}. In separable idealized limits, an isotropic scalar kernel
gives a nodeless singlet ($s$ wave); a strictly $(\pi,\pi)$-peaked kernel
favors a sign-changing gap, with the $d_{x^2-y^2}$ singlet as the standard
square-lattice realization; sign changes between pockets favor an $s_\pm$
singlet; and a ferromagnetic peak favors attraction in the triplet sector
(Casimir sign $+1$, Theorem~\ref{thm:casimir}). Outside those limits the
actual eigenchannel is decided by the gap equation. The conditional
selection statements are given in Sec.~\ref{sec:block}.

\subsection{Spectral decomposition of the pairing tensor and entanglement}
\label{sec:spectral}

Regarding the effective pairing kernel $G(\mathbf k,\mathbf k')$ as an
integral operator on momentum space,
\begin{equation}
(\hat G\zeta)(\mathbf k)=\sum_{\mathbf k'}G(\mathbf k,\mathbf k')\zeta(\mathbf k'),
\end{equation}
and assuming a real symmetric kernel (time-reversal invariant), one has the
spectral decomposition
\begin{equation}
G(\mathbf k,\mathbf k')=\sum_n\lambda_n\zeta_n(\mathbf k)\otimes\zeta_n(\mathbf k'),\qquad
\lambda_n\in\mathbb R,
\end{equation}
where each eigenvalue $\lambda_n$ and eigenfunction $\zeta_n$ forms a pairing
eigenchannel: $\lambda_n>0$ is an attracting channel (which can
superconduct) and $\lambda_n<0$ a repelling one; the grade decomposition of
$\zeta_n$ indicates directly the spin and orbital character of that channel.
Traditional BCS has only one attracting channel (a single $\lambda_0>0$), while under the
scalar-projection method $G$ may have several nonzero eigenvalues,
corresponding to multicomponent competing orders.

The singular value decomposition of the mean-field pairing matrix
$\hat\Delta(\mathbf k)$ gives
\begin{equation}
\hat\Delta(\mathbf k)=\sum_{n=1}^2s_n(\mathbf k)|u_n(\mathbf k)\rangle\langle v_n(\mathbf k)|,\qquad
s_1\geq s_2\geq0,
\end{equation}
with $s_1^2+s_2^2=\mathrm{Tr}(\hat\Delta^\dagger\hat\Delta)$. The ratio
$r(\mathbf k)=s_2/s_1$ measures the structural anisotropy of the pairing
tensor: $r=1$ for a pure spin singlet ($\hat\Delta\propto i\sigma_2$) or an
isotropic triplet, and $r=0$ for a pure rank-1 triplet (a single $S_z$
component); intermediate values signal mixed states with several grade
components. Defining the normalized spin-channel density matrix
$\hat\rho_{\mathrm{spin}}(\mathbf k)=\hat\Delta\hat\Delta^\dagger/
\mathrm{Tr}[\hat\Delta\hat\Delta^\dagger]$, its structure parameter
\begin{equation}
R(\mathbf k)=\mathrm{Tr}\,\hat\rho_{\mathrm{spin}}^2(\mathbf k)\in[1/2,1]
\end{equation}
satisfies $R=1$ for a rank-1 pairing matrix (a single $S_z$ component, whose
two-particle state is separable, concurrence $=0$) and $R=1/2$ for a
unitary $\hat\Delta$ (a pure singlet or a completely unpolarized triplet,
concurrence $=1$). Two cautions are in order. First, $R$ measures the rank
structure of the pairing matrix rather than a quantum purity: the pure
singlet is a \emph{maximally entangled} two-particle state, not a mixed one;
the singlet/triplet identity itself is decided by the matrix structure of
$\hat\Delta$ (its Fierz decomposition), not by $R$. Second, traditional BCS
usually assumes $\hat\Delta$ to be a pure structure (a single pairing
symmetry), whereas the $G\in Cl_{3,0}$ of the scalar-projection method
naturally allows mixed structures, and the momentum dependence of
$R(\mathbf k)$ reflects channel competition directly.

For the many-body BdG ground state, an entanglement analysis connects to the
language of quantum information. The BCS ground state is a fermionic
Gaussian state, and its entanglement is fixed by the covariance matrix; the
anomalous block of the BdG Hamiltonian is parameterized by
$\hat\Delta(\mathbf k)$, and the anomalous correlator after diagonalization
is fixed by the BdG eigenvectors. The von Neumann entropy of a region $A$ is
determined by the spectrum of the covariance matrix, that is, by the BdG gap
structure~\cite{Eisert2010}. At the algebraic level, grade-2/3 components
only provide explicit real coordinates for complex relative phases of
$\hat\Delta(\mathbf k)$. Any consequence for phase winding, entanglement
spectra, or scaling near a transition is model-level and must be obtained
from the corresponding BdG covariance matrix; it is not a theorem supplied
by the grade decomposition alone.

A note on the stability of mixed states. This paper treats
$G\in Cl_{3,0}$ as an \emph{a priori} arbitrary eight-dimensional real
parameter, which is an expansion at the level of algebraic grammar and
presupposes no microscopic dynamics. Which combinations of grade components
can actually condense in a given material is decided by the gap equation or
by the minimization of the Ginzburg--Landau free energy, and lies outside
the scope of this paper. Once a microscopic mechanism or an experimental
indication shows that a certain set of components is activated, the
algebra provides the coupling and selection rules among them; whether they are activated must be answered by the specific model or
by experiment.

\section{Signature: representation, Casimir, trace form, and flow}
\label{sec:signature}

This section establishes the threefold structure of the pairing algebra:
the channel dichotomy, the interaction signs, and the signature of the
pairing form are unified in the same Lie algebra $\mathfrak{su}(2)$ and its
trace form. The signature structure offers an observation angle:
conventionally empirical rules such as ``singlet attraction, triplet
repulsion'' can be put in correspondence with the Casimir spectrum of
$\mathfrak{su}(2)$. The correspondence is structural; it does not change the
empirical rules themselves but places them in a unified algebraic grammar.

\subsection{Representation: singlet and triplet}

The Hilbert space $\mathbb C^2\otimes\mathbb C^2$ of two
spin-$1/2$ particles decomposes under the Clebsch--Gordan series of SU(2) as
\[
V_{1/2}\otimes V_{1/2}\simeq V_0\oplus V_1,
\]
or equivalently, in dimensions, $\mathbf 2\otimes\mathbf 2
=\mathbf 1\oplus\mathbf 3$: the spin singlet
($S=0$, one-dimensional) and the spin triplet ($S=1$, three-dimensional).
The corresponding pair operators are~\cite{Sigrist1991}:

\begin{equation}
\begin{aligned}
\hat S_{\mathbf k} &= \tfrac{1}{\sqrt2}\big(c_{-\mathbf k\uparrow}c_{\mathbf k\downarrow}
  -c_{-\mathbf k\downarrow}c_{\mathbf k\uparrow}\big),
  &\quad
\hat T_+ &= c_{-\mathbf k\uparrow}c_{\mathbf k\uparrow}, \\
\hat T_0 &= \tfrac{1}{\sqrt2}\big(c_{-\mathbf k\uparrow}c_{\mathbf k\downarrow}
  +c_{-\mathbf k\downarrow}c_{\mathbf k\uparrow}\big),
  &\quad
\hat T_- &= c_{-\mathbf k\downarrow}c_{\mathbf k\downarrow}.
\end{aligned}
\end{equation}

\subsection{Casimir: the sign structure of the interaction}

\begin{theorem}[Casimir origin of Fierz signs, algebraic level]
\label{thm:casimir}
The square of the total spin $\mathbf S=(\boldsymbol\sigma_1+\boldsymbol\sigma_2)/2$
is the Casimir operator of $\mathfrak{su}(2)$, with eigenvalues $S(S+1)$:
$0$ for the singlet and $2$ for the triplet. From
\begin{equation}
\boldsymbol\sigma_1\cdot\boldsymbol\sigma_2=2\mathbf S^2-3
\end{equation}
the exchange operator $\boldsymbol\sigma_1\cdot\boldsymbol\sigma_2$ has
eigenvalues $-3$ (singlet) and $+1$ (triplet). This pair of numbers is
the sign structure (Fierz coefficient) of an exchange-type glue in
the Cooper channel: antiferromagnetic-type exchange gives the lower
eigenvalue in the singlet channel, and ferromagnetic-type exchange favors
the triplet~\cite{Monthoux1991,Scalapino1986}.
\end{theorem}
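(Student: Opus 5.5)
The argument is a direct computation in the two-particle space $\mathbb C^2\otimes\mathbb C^2$, organized into three steps.

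\emph{Step 1: the Casimir.} Write $\mathbf S=\mathbf S_1+\mathbf S_2$ with $\mathbf S_j=\boldsymbol\sigma_j/2$. Because the two factors act on different tensor slots, the components of $\mathbf S_1$ and $\mathbf S_2$ commute, and $\mathbf S$ satisfies the $\mathfrak{su}(2)$ relations $[S_a,S_b]=i\varepsilon_{abc}S_c$. Hence $\mathbf S^2$ commutes with every $S_a$, so it is the quadratic Casimir. By the Clebsch--Gordan decomposition $V_{1/2}\otimes V_{1/2}\simeq V_0\oplus V_1$ of the previous subsection, Schur's lemma makes $\mathbf S^2$ act as $S(S+1)$ on each summand. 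This gives $0$ on the singlet and $2$ on the triplet.

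\emph{Step 2: the exchange identity.} Expand the square:
\[
\mathbf S^2=\mathbf S_1^2+\mathbf S_2^2+2\,\mathbf S_1\!\cdot\!\mathbf S_2
=\tfrac34+\tfrac34+\tfrac12\,\boldsymbol\sigma_1\!\cdot\!\boldsymbol\sigma_2 .
\]
Here $\mathbf S_j^2=\tfrac14\sum_a\sigma_a^2=\tfrac34$ follows from $\sigma_a^2=1$, which is the blade square $e_i^2=+1$ of Lemma~\ref{lem:ortho}. The cross term needs no symmetrization because the two factors commute. Solving gives $\boldsymbol\sigma_1\cdot\boldsymbol\sigma_2=2\mathbf S^2-3$. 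Inserting the eigenvalues from Step 1 yields $-3$ on the singlet and $+1$ on the triplet.

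As a cross-check independent of representation theory, I would verify the identity $\boldsymbol\sigma_1\cdot\boldsymbol\sigma_2=2P_{12}-1$, where $P_{12}$ is the swap operator. The singlet is antisymmetric, so $P_{12}=-1$ there and the eigenvalue is $-3$. The triplet is symmetric, so $P_{12}=+1$ and the eigenvalue is $+1$. This also identifies the singlet with the antisymmetric $i\sigma_2$ structure of Lemma~\ref{lem:anti}.

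\emph{Step 3: the Fierz/Cooper-channel reading.} This is the only part needing care, because it is interpretive rather than purely algebraic. Take an exchange-type glue $J\,\boldsymbol\sigma_1\cdot\boldsymbol\sigma_2$, with the isotropic spin vertex of Step~3 in Sec.~\ref{sec:provenance}, where $\mathrm{Tr}\,D=3D_0$. Projected onto a Cooper pair of definite total spin, it becomes $-3J$ in the singlet channel and $+J$ in the triplet channel. The factor $3$ is the same trace factor behind the $-3\chi$ entry of Table~\ref{tab:mechanisms}.

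For antiferromagnetic-type coupling ($J>0$ in the energy convention), the singlet gets the lower eigenvalue, i.e.\ it is the attractive channel. For ferromagnetic-type coupling ($J<0$), the signs reverse and the triplet is favored. I would state explicitly that this step only assigns signs at the algebraic level. Whether either channel actually condenses is left to the gap equation~\eqref{eq:gapeq}, consistent with the paper's layer-B scope.
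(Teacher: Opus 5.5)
Your proposal is correct and follows essentially the same route as the paper: expand $\mathbf S^2=\tfrac14(\boldsymbol\sigma_1+\boldsymbol\sigma_2)^2$ using $\sigma_a^2=1$ to get $\boldsymbol\sigma_1\cdot\boldsymbol\sigma_2=2\mathbf S^2-3$, substitute $S(S+1)\in\{0,2\}$, and cross-check with the swap form $2P_{\mathrm{swap}}-1$. Your Schur-lemma justification in Step 1 and the explicitly scoped sign reading in Step 3 spell out what the paper asserts or cites without proof.
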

\begin{proof}
Expanding,
\begin{equation}
\mathbf S^2=\tfrac14(\boldsymbol\sigma_1+\boldsymbol\sigma_2)^2
=\tfrac14(3+3+2\boldsymbol\sigma_1\cdot\boldsymbol\sigma_2)
=\tfrac32+\tfrac12\boldsymbol\sigma_1\cdot\boldsymbol\sigma_2 ,
\end{equation}
so $\boldsymbol\sigma_1\cdot\boldsymbol\sigma_2=2\mathbf S^2-3$.
Substituting the eigenvalues $S(S+1)$ of $\mathbf S^2$ (singlet $0$,
triplet $2$) gives $-3$ and $+1$. An equivalent projector form is
$\boldsymbol\sigma_1\cdot\boldsymbol\sigma_2=2P_{\mathrm{swap}}-1$, with
singlet and triplet projectors $P_s=(1-\boldsymbol\sigma_1\cdot
\boldsymbol\sigma_2)/4$ and $P_t=(3+\boldsymbol\sigma_1\cdot\boldsymbol
\sigma_2)/4$, satisfying $P_s^2=P_s$, $P_t^2=P_t$, and $P_sP_t=0$.
\end{proof}

\subsection{Trace form: the signature of the pairing form}

\begin{proposition}[Signature as the $\mathfrak u(2)$ trace form]
\label{prop:trace}
The signature $(+,+,-,-)$ of the diagonal pairing form decomposes by
sectors: the Hermitian sector (grade 0: $1$; grade 1: $\sigma$, the
physical generator directions of $\mathfrak u(2)$) is positive definite,
and the anti-Hermitian sector (grade 2: $i\sigma$; grade 3: $i\cdot1$, the
Lie algebra $\mathfrak u(2)=\mathfrak u(1)\oplus\mathfrak{su}(2)$ itself)
is negative definite. This signature is the \emph{trace form} of
$\mathfrak u(2)$ in the fundamental representation:
$\mathrm{Tr}((i\sigma_a)(i\sigma_b))=-2\delta_{ab}$ and
$\mathrm{Tr}((i)(i))=-2$. On $\mathfrak{su}(2)$ it agrees with the Killing
form up to a factor; on the $\mathfrak u(1)$ center the Killing form is
degenerate (identically zero), so the accurate name for the whole form is
the trace form.
\end{proposition}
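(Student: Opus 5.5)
The plan is to transport the diagonal pairing form of Theorem~\ref{thm:diag} to matrices through the isomorphism $e_i\leftrightarrow\sigma_i$, $I\leftrightarrow i$ of Remark~\ref{rem:trace}, and then read off each sign as a trace. The identity $\langle XY\rangle_0=\mathrm{Re}[\tfrac12\mathrm{Tr}(\hat X\hat Y)]$ says that the signature $(+,+,-,-)$ is the signature of the real bilinear form $(\hat X,\hat Y)\mapsto\mathrm{Re}\,\tfrac12\mathrm{Tr}(\hat X\hat Y)$ on $M_2(\mathbb C)$, with $M_2(\mathbb C)$ regarded as an eight-dimensional real space. Because the blade basis is orthogonal for this form (Lemma~\ref{lem:ortho}), the signature can be computed one blade at a time.

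\emph{Step 1: identify the two sectors.} The reversion, i.e.\ the Hermitian adjoint (Remark~\ref{rem:twometrics}), fixes grades $0\oplus1$ and negates grades $2\oplus3$. So the Hermitian matrices are $\mathrm{span}_{\mathbb R}\{\mathbf 1,\sigma_a\}$, and the anti-Hermitian ones are $\mathrm{span}_{\mathbb R}\{i\mathbf 1,i\sigma_a\}$. The latter is exactly $\mathfrak u(2)=\mathfrak u(1)\oplus\mathfrak{su}(2)$, with $\mathfrak u(1)=\mathbb R\,i$ (grade 3, $I$) and $\mathfrak{su}(2)=\mathrm{span}\{i\sigma_a\}$ (grade 2, via Corollary~\ref{cor:hodge}, $Ie_c=\tfrac12\varepsilon_{abc}e_{ab}$).

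\emph{Step 2: compute the traces.} From $\sigma_a\sigma_b=\delta_{ab}+i\varepsilon_{abc}\sigma_c$ and $\mathrm{Tr}\,\sigma_c=0$ we get $\mathrm{Tr}(\sigma_a\sigma_b)=2\delta_{ab}$ and $\mathrm{Tr}(\mathbf 1\cdot\mathbf 1)=2$, which makes the Hermitian sector positive definite. Multiplying by $i^2=-1$ then gives $\mathrm{Tr}((i\sigma_a)(i\sigma_b))=-2\delta_{ab}$ and $\mathrm{Tr}(i\cdot i)=-2$. These traces are real, so taking the real part is harmless. The cross terms between the two sectors are purely imaginary traces, hence their real parts vanish. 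This reproduces the block-diagonal signature and identifies the negative block with the restriction of $\tfrac12\mathrm{Tr}$ to $\mathfrak u(2)$ in the fundamental representation.

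\emph{Step 3: compare with the Killing form.} On $\mathfrak{su}(2)$ I would use $[i\sigma_a,i\sigma_b]=-2\varepsilon_{abc}\,i\sigma_c$. This gives $(\mathrm{ad}\,i\sigma_a)_{cd}=-2\varepsilon_{acd}$, and so
\[
K(i\sigma_a,i\sigma_b)=4\varepsilon_{acd}\varepsilon_{bdc}=-8\delta_{ab}.
\]
The Killing form is therefore $4\,\mathrm{Tr}$, a positive multiple of the trace form. The element $i\mathbf 1$ is central, so $\mathrm{ad}(i\mathbf 1)=0$. Hence $K(i\mathbf 1,\cdot)\equiv0$ on all of $\mathfrak u(2)$: the Killing form is degenerate on the center, while $\mathrm{Tr}(i\cdot i)=-2\neq0$. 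This justifies calling the whole form the trace form.

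None of these steps is conceptually hard. The only place needing care is Step 3: the Killing form must be computed on the full $\mathfrak u(2)$, so the adjoint-matrix index contraction has to be done with consistent signs, and the center has to be included in the basis. A second point of care is keeping $\mathrm{Tr}$ and $\tfrac12\mathrm{Tr}$ straight against the normalization of Theorem~\ref{thm:diag}.
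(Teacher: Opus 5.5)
Your proposal is correct and follows the same route as the paper. Both arguments identify the Hermitian/anti-Hermitian sectors by reversion, transport the form through $\langle XY\rangle_0=\mathrm{Re}[\tfrac12\mathrm{Tr}(\hat X\hat Y)]$, and recognize the negative block as the trace form of $\mathfrak u(2)$. Your explicit checks $K(i\sigma_a,i\sigma_b)=-8\delta_{ab}=4\,\mathrm{Tr}((i\sigma_a)(i\sigma_b))$ and $\mathrm{ad}(i\mathbf 1)=0$ supply what the paper only asserts: the paper's proof says the trace form ``coincides with'' the Killing form on $\mathfrak{su}(2)$, and your factor of $4$ is the precise version of that claim.
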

\begin{proof}
By Lemma~\ref{lem:ortho}, the blade squares $\epsilon_A$ are $+1$ in grades 0 and 1 and
$-1$ in grades 2 and 3; by the reversal analysis of
Sec.~\ref{sec:vertex}, the former are the Hermitian sector and the latter
the anti-Hermitian sector. As a real Lie algebra, $\mathfrak u(2)$ is
generated by the anti-Hermitian matrices $\{i\mathbf 1,i\sigma_x,i\sigma_y,
i\sigma_z\}$, corresponding to the grade-3 basis element
($i\mathbf 1\leftrightarrow I$) and the grade-2 basis elements
($i\sigma_a\leftrightarrow e_\mu\wedge e_\nu$); the Hermitian sector
$\{1,\sigma_a\}$ is the faithful representation space of $\mathfrak u(2)$
(that is, $i\cdot\mathfrak u(2)$) rather than the algebra itself. Under the
isomorphism $e_i\leftrightarrow\sigma_i$,
$\langle XY\rangle_0=\mathrm{Re}[\tfrac12\mathrm{Tr}(\hat X\hat Y)]$
(Remark~\ref{rem:trace}), and restricted to the anti-Hermitian sector this
is the trace form of $\mathfrak u(2)$, which on the $\mathfrak{su}(2)$ part
coincides with the Killing form and on the $\mathfrak u(1)$ part supplies
one negative direction.
\end{proof}

The dichotomy of pairing channels is the irreducible representation of
$\mathfrak{su}(2)$, the interaction signs are the Casimir eigenvalues of
$\mathfrak{su}(2)$, and the metric of the pairing form is the trace form
of $\mathfrak u(2)$. The signature
$(+,+,-,-)$ is not an added structure but the direct appearance of a Lie
algebra invariant in the pairing algebra; Sec.~\ref{sec:extension} shows
the same correspondence becoming the Killing form of the Lorentz algebra
$\mathfrak{so}(1,3)$ in the four-dimensional Minkowski algebra.

\begin{corollary}[Casimir reading of singlet and triplet signs]
The charge vertex projects in the Cooper channel only onto the singlet; the
Casimir sign ($-3$) of an antiferromagnetic-type exchange channel assigns
the lower eigenvalue to the singlet direction, while the $+1$ sign of a
ferromagnetic-type exchange channel favors the triplet direction. These are
sign assignments only. Whether a singlet or triplet condenses additionally
depends on the momentum texture, available glue, disorder, and the gap
equation; the Casimir spectrum does not by itself derive the empirical
abundance of singlet superconductors.
\end{corollary}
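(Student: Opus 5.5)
The plan is to derive the corollary directly from three earlier results: the channel dichotomy of Lemma~\ref{lem:anti} and Theorem~\ref{thm:channels}, the Casimir spectrum of Theorem~\ref{thm:casimir}, and the projector identities recorded in its proof. There are three claims to establish, in this order.

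\emph{First claim: the charge vertex projects only onto the singlet.} A charge vertex has spin matrix $M\propto1$, i.e.\ grade 0 (Sec.~\ref{sec:vertex}). After the Cooper-channel crossing, a momentum-independent spin-trivial kernel is the constant kernel $G=g_0\cdot1$. By Lemma~\ref{lem:anti}, $B^\dagger(\mathbf 1)\equiv0$, and the only surviving vertex is $\Phi(g_0)=g_0e_{31}\leftrightarrow g_0\,i\sigma_2$ (Definition~\ref{def:fierz}). By Theorem~\ref{thm:channels}(i) this lies in the singlet plane. Equivalently, in two-particle language, the spin operator $1\otimes1=P_s+P_t$ is diagonal in the two sectors. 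The Cooper-channel survival condition removes the symmetric (triplet) spin structure of a spin-trivial, momentum-even kernel, so the charge vertex feeds only the singlet. Any triplet admixture must come from an odd form factor, which lies outside the claim.

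\emph{Second claim: the sign assignments for exchange vertices.} An exchange-type (grade-1) vertex enters the Cooper channel as $J\,\boldsymbol\sigma_1\cdot\boldsymbol\sigma_2$. Using $\boldsymbol\sigma_1\cdot\boldsymbol\sigma_2=-3P_s+P_t$ from Theorem~\ref{thm:casimir}, the kernel acts as $-3J$ on the singlet and $+J$ on the triplet. I would then fix the sign convention so that the antiferromagnetic case places the lower eigenvalue, $-3|J|$ relative to $+|J|$, on the singlet, and the ferromagnetic case (opposite sign of $J$) makes the triplet eigenvalue the favored one. With attraction identified as positive $G$-eigenvalue, this reproduces the $-3\chi$ singlet entry of Table~\ref{tab:mechanisms}. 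This sign-convention bookkeeping, reconciling the $V=-G$ convention with ``lower eigenvalue'' and ``favored'', is the step most likely to introduce an error, and I would check it against the BCS limit of Eq.~\eqref{eq:hpair}.

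\emph{Third claim: these are sign assignments only.} This part is a scope statement, so I would argue it by exhibiting the missing ingredients rather than proving a negative. The Casimir factor multiplies a momentum kernel $\chi(\mathbf k-\mathbf k')$, and which channel condenses is determined by the largest positive eigenvalue of the full kernel in the gap equation~\eqref{eq:gapeq} (Sec.~\ref{sec:spectral}). That eigenvalue depends on the form-factor projections, so a favorable sign can be outweighed by an unfavorable angular projection. Hence the spectrum $\{-3,+1\}$ alone cannot determine the condensed channel, nor the empirical prevalence of singlets.
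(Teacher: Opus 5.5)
Your route is the paper's own: it gives no separate proof, and the corollary is read off from Lemma~\ref{lem:anti} (a constant charge kernel survives only through $i\sigma_2$) and the eigenvalues $-3,+1$ of Theorem~\ref{thm:casimir}, with the last sentence a scope remark. Restricting the first claim to the momentum-independent part is the right move, since a momentum-dependent charge kernel does feed odd-$l$ triplets. One slip there: $1\otimes1=P_s+P_t$ is not an equivalent argument. It shows the charge vertex carries spin factor $+1$ on both blocks. The singlet-only projection therefore comes entirely from antisymmetry applied to the even (constant) part, as your next sentence says.

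The sign bookkeeping of your second claim does not close as written; it fails at exactly the step you flagged. Suppose the kernel is $G=J\,\boldsymbol\sigma_1\cdot\boldsymbol\sigma_2$ with attraction $=$ positive $G$, which is what you need to reproduce the $-3\chi$ entry of Table~\ref{tab:mechanisms}. Then taking the ferromagnet as the opposite sign $J<0$ gives the singlet $-3J>0$ (attractive) and the triplet $J<0$ (repulsive), so the ferromagnet would favor the singlet. Suppose instead $J\,\boldsymbol\sigma_1\cdot\boldsymbol\sigma_2$ is an energy $V$ and ``lower $=$ favored.'' Then both magnet claims hold, but $G=-V$ gives $+3\chi$, not the table entry.

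The consistent reading is the one the paper uses in Sec.~\ref{sec:block}, with a single positive spectral weight: $G\propto\chi(\mathbf k-\mathbf k')\,\boldsymbol\sigma_1\cdot\boldsymbol\sigma_2$ with $\chi>0$. The blocks then carry $-3\chi$ and $+\chi$ for both magnets, and ``antiferromagnetic'' versus ``ferromagnetic'' refers to where $\chi(\mathbf q)$ peaks, not to a sign flip. The $-3$ places the lower eigenvalue on the singlet; it is repulsive at small momentum transfer and is inverted by a shift-odd gap at a $(\pi,\pi)$ peak. The $+1$ makes the triplet attractive where $\chi$ peaks at $\mathbf q\to0$. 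Alternatively, keep the Heisenberg reading with opposite signs of $J$, but stay in the energy convention throughout and drop the table match.
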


\begin{remark}[Algebraic origin of the Fierz coefficients]
The numerical coefficients that recur in rearrangements of interactions
(tables of rearrangement coefficients between spin channels) have a unified
origin in this framework: they are generated by the Casimir spectrum of
$\mathfrak{su}(2)$ (the $-3$ and $+1$ of Theorem~\ref{thm:casimir}) together
with the completeness contractions (operations of the form
$\sum_a\sigma_aX\sigma_a$ produce the $\pm3,\pm1$ combinations~\cite{Peskin1995}). Every
legitimate Fierz rearrangement table is composed from this set of numbers;
conversely, a rearrangement table that cannot be decomposed into a
combination of the Casimir spectrum must contain a computational error.
This provides a rapid criterion for checking the algebra of pairing kernels.
\end{remark}

\begin{remark}[Signature and dynamics]
The signature carries the algebraic origin of the sign structure; which
channel actually condenses is decided by the momentum structure of the
kernel and the gap equation (the physical metric of the order-parameter
space is $|\Delta|^2$, positive definite and gauge invariant). The Casimir
sign $-3$ is by itself a marker of the repulsive direction, and the
$(\pi,\pi)$ peak of the momentum modulation inverts it into a condensation
energy gain of the $d$-wave; this inversion mechanism is the block
decomposition and selection rule of Sec.~\ref{sec:block}.
\end{remark}

\subsection{Flow of the kernel: an algebraic renormalization group}
\label{sec:rg}

Static signs and selection rules leave one question open: how do the kernel
components compete under the energy-scale flow? This subsection is therefore
model-level: its Clifford constants are algebraic, but the one-loop
momentum-shell equation is an explicit dynamical truncation. That truncation
for the kernel-vertex strengths $V_A$ reads
\begin{equation}
\frac{dV_g}{d\ell}=\sum_{g_1g_2}C^{g}_{g_1g_2}V_{g_1}V_{g_2},
\end{equation}
where the structure constants come from the $Cl_{3,0}$ multiplication table
through the identification of fermion-loop traces with cyclic scalar
projections, $\frac12\mathrm{Tr}_{\mathrm{loop}}[\Gamma_1\Gamma_2]
=\langle\Gamma_1\Gamma_2\rangle_0$, valid for the two-point contractions
below because they pair blades within the same grade sector, whose products
carry no grade-3 part (in general
$\frac12\mathrm{Tr}\,A=\langle A\rangle_0+i\langle A\rangle_3$). The constants that follow from
two-point contractions (which feed one sector from another) are
\begin{equation}
\begin{split}
C^0_{00}=1,\qquad C^0_{11}=\sum_i\langle e_ie_i\rangle_0=3,\qquad\\
C^0_{22}=\sum_{i<j}\langle e_{ij}e_{ij}\rangle_0=-3,\qquad
C^0_{33}=\langle I\cdot I\rangle_0=-1.
\end{split}
\end{equation}
With $\ell=\ln(\Lambda_0/\Lambda)$ flowing toward lower energy and
$V>0$ denoting attraction, the feeding equation for the
singlet--singlet vertex is
\begin{equation}
\frac{dV_0}{d\ell}=V_0^2+3V_1^2-3V_2^2-V_3^2 .
\label{eq:rg-v0}
\end{equation}
Here the $V_g$ are the dimensionless vertices $N(0)V_g^{\rm phys}$; $N(0)$ is
restored only when comparing with the BCS $T_c$ scale. An attractive constant kernel then grows toward the infrared,
$V_0(\ell)=V_0(0)/[1-V_0(0)\ell]$, reproducing the Cooper instability at
$\ell^\ast=1/V_0(0)$. The coefficients are the blade traces $1,3,3,1$
with the signs of Lemma~\ref{lem:ortho} (grades 2 and 3 negative): the attractive spin
channel feeds the singlet ($+3V_1^2$), while the triplet and
chiral-singlet directions compete with it ($-3V_2^2$, $-V_3^2$). The
derivation of these constants from the multiplication table is given in
Supplemental Material Sec.~S2.

The feeding pattern is then fixed algebraically: the signs follow
mechanically from the blade traces above. The asymmetry is the blade-square
signature at the coupling level, not a gauge artifact (gauge rotations mix
the grade-1 and grade-2 components of the gap, not the kernel couplings). Which
combination of bare signs wins is decided by the gap equation together
with the microscopic signs of the vertices, as in the Casimir analysis
of Sec.~\ref{sec:signature}; the schematic one-loop pattern is compatible
with the empirical singlet dominance, and the full beta system is left
to future work. The established exception is $^3$He, where strong
short-range repulsion disfavors the scalar channel and the triplet survives.

\begin{remark}[The same three]
\label{rem:three}
The $3$ in $C^0_{11}=\sum_i\langle e_ie_i\rangle_0=3$---the trace of
the identity on the three-dimensional triplet subspace---is the same $3$ as
the Casimir eigenvalue $-3$ of
Theorem~\ref{thm:casimir}, in absolute value. Static signs and dynamical
flows share a common algebraic root.
\end{remark}

\begin{remark}[Positioning]
Channel-decomposed renormalization groups are an established method, from
the Kohn--Luttinger mechanism~\cite{Kohn1965} to the functional
renormalization group~\cite{Metzner2012,Raghu2010}. What the present
version adds is the parameter-free determination of the feeding constants
from the multiplication table and the identification of
Remark~\ref{rem:three}. The full beta system, with three-point contractions,
is deliberately not claimed here.
\end{remark}

\section{Block decomposition and channel selection}
\label{sec:block}

The signature structure supplies the algebraic origin of the signs; to land
it on concrete pairing symmetries one needs the channel-selection theorem.
This section gives the block-decomposition theorem and the selection
rules in the antiferromagnetic and ferromagnetic limits.

\begin{theorem}[Singlet-triplet block decomposition]
\label{thm:block}
Let the pairing kernel be spin-rotation invariant, so that the linearized
gap operator $\hat V$ commutes with the total-spin SU(2). Then:
(i) the pairing-matrix space $\mathbb C^{2\times2}$ decomposes into a
singlet block (one complex dimension) and a triplet block (three complex
dimensions), and by the Schur lemma $\hat V$ is a scalar on each block
(triply degenerate within the triplet block);
(ii) the two blocks exhaust the whole pairing-matrix space
($1+3=4=\dim_{\mathbb C}\mathbb C^{2\times2}$); there is no third block;
(iii) the sign of each block is carried by the Casimir eigenvalue
(Theorem~\ref{thm:casimir}): an exchange-type glue has block sign $-3$ on
the singlet block and $+1$ on the triplet block;
(iv) channel selection is decided jointly by the block Casimir sign and the
momentum structure of the kernel: the sign gives the direction, and the
Fourier structure of the kernel gives the optimal momentum texture within
that direction.
\end{theorem}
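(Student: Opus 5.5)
The plan is to treat the pairing-matrix space $\mathbb C^{2\times2}$ as the two-particle spin space $\mathbb C^2\otimes\mathbb C^2$, and then use the Clebsch--Gordan decomposition together with Schur's lemma.

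First I fix the SU(2) action. A spin rotation $U$ acts on a pair matrix by $\hat\Delta\mapsto U\hat\Delta U^{\mathsf T}$, since both legs of $c^\dagger_{\alpha}\Delta_{\alpha\beta}c^\dagger_{\beta}$ rotate as spinors. Identifying $\hat\Delta$ with $\sum\Delta_{\alpha\beta}\,|\alpha\rangle\otimes|\beta\rangle$ makes this exactly the tensor-product representation $V_{1/2}\otimes V_{1/2}$. Transposition commutes with the action, because $(U\hat\Delta U^{\mathsf T})^{\mathsf T}=U\hat\Delta^{\mathsf T}U^{\mathsf T}$. Hence the antisymmetric and symmetric subspaces are invariant, of complex dimensions $1$ and $3$.

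The antisymmetric line is $\mathbb C(i\sigma_2)$. It is invariant because $Ui\sigma_2U^{\mathsf T}=\det U\,i\sigma_2=i\sigma_2$, so it carries $V_0$, the singlet $\mathrm{span}\{e_2,e_{31}\}$ of Corollary~\ref{cor:matrix-level}. The symmetric space is spanned by $(\mathbf d\cdot\boldsymbol\sigma)i\sigma_2$, on which $U$ acts by the SO(3) rotation of $\mathbf d$. That is the irreducible $V_1$, the triplet of Theorem~\ref{thm:channels}(ii).

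This gives (ii) immediately: $V_{1/2}\otimes V_{1/2}\simeq V_0\oplus V_1$ and $1+3=4=\dim_{\mathbb C}\mathbb C^{2\times2}$, so there is no complement and no third block. For (i), the hypothesis that $\hat V$ commutes with SU(2) means that at fixed $(\mathbf k,\mathbf k')$ the spin part of $\hat V$ is an intertwiner. Since $V_0$ and $V_1$ are inequivalent irreducibles (dimensions $1\ne3$), Schur's lemma forbids off-diagonal maps between them and forces $\hat V$ to be a scalar on each. The triplet scalar is therefore triply degenerate.

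The momentum dependence rides along as a scalar-valued integral operator on each block: $\hat V=v_s(\mathbf k,\mathbf k')P_s+v_t(\mathbf k,\mathbf k')P_t$.

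For (iii), I specialize to an exchange-type glue, $\hat V\propto\chi(\mathbf k-\mathbf k')\,\boldsymbol\sigma_1\cdot\boldsymbol\sigma_2$, acting on the two pair legs. Substituting $\boldsymbol\sigma_1\cdot\boldsymbol\sigma_2=-3P_s+P_t$ from Theorem~\ref{thm:casimir} gives $v_s=-3\chi$ and $v_t=+\chi$. This is also the projector form recorded in the proof of that theorem.

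For (iv), I make the definite mathematical content the statement that the linearized gap equation \eqref{eq:gapeq} decouples into two scalar eigenproblems, $\sum_{\mathbf k'}v_{s,t}(\mathbf k,\mathbf k')\,\phi(\mathbf k')=\lambda\,\phi(\mathbf k)$. The overall Casimir sign multiplies the kernel on each block, and the leading eigenfunction is then determined by the Fourier harmonics of $\chi$ within that block. Fermionic antisymmetry \eqref{eq:fconstraint} restricts the singlet to even and the triplet to odd $\phi$. I will illustrate this with the $(\pi,\pi)$ sign inversion rather than prove a general selection rule, because (iv) is interpretive.

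The main obstacle is the first step: the identification of the pair-matrix action as $U\hat\Delta U^{\mathsf T}$ rather than conjugation $U\hat\Delta U^\dagger$. Under conjugation the decomposition would instead be $1\oplus\mathfrak{su}(2)$, with the identity as the invariant line, and it would mismatch the Fierz-frame dictionary. I will resolve this by writing $\hat\Delta=\hat D\,i\sigma_2$. The identity $i\sigma_2U^{\mathsf T}=U^\dagger i\sigma_2$ turns the transpose action into conjugation on $\hat D$, and this reconciles the matrix-level statement with the kernel-level Fierz picture of Definition~\ref{def:fierz}.
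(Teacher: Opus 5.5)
Your proposal is correct and follows the paper's own route. You use the Clebsch--Gordan split $V_{1/2}\otimes V_{1/2}\simeq V_0\oplus V_1$ with Schur's lemma for (i), the count $1+3=4$ for (ii), the exchange eigenvalues $-3$ and $+1$ of Theorem~\ref{thm:casimir} for (iii), and a scalar integral operator on each block for (iv). Your explicit identification of the spin action as $\hat\Delta\mapsto U\hat\Delta U^{\mathsf T}$ (equivalently $\hat D\mapsto U\hat D U^\dagger$ for $\hat\Delta=\hat D\,i\sigma_2$) makes precise a step that the paper's proof leaves implicit.
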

\begin{proof}
(i) Spin-rotation invariance gives $[\hat V,S_a]=0$ with $S_a$ the total
spin generators; $\mathbb C^2\otimes\mathbb C^2\simeq V_0\oplus V_1$
is an irreducible decomposition, and the Schur lemma gives the scalar
property on each block. (ii) is the dimension count $1+3=4$. (iii) The
eigenvalues of the exchange vertex $\boldsymbol\sigma_1\cdot\boldsymbol
\sigma_2$ on the two blocks are the $-3$ and $+1$ of
Theorem~\ref{thm:casimir}. (iv) Within a block the momentum part is an
integral operator with a scalar kernel, whose eigenfunctions are fixed by
the Fourier coefficients of the kernel; see the next two propositions.
\end{proof}

\begin{proposition}[$B_{1g}$ selection in the strong-antiferromagnet
limit]
Take the idealized limit: square lattice, nearest-neighbor hopping, and an
antiferromagnetic spin-channel kernel with its peak strictly at
$\mathbf Q=(\pi,\pi)$, that is
$(T_{\mathbf Q}\Delta)(\mathbf k)=-\lambda\bar\chi\,\Delta(\mathbf k+\mathbf Q)$
with $\lambda>0$ and $\bar\chi$ the peak value. In this limit the singlet
block carries the Casimir eigenvalue $-3$. For the $B_{1g}$ gap function
$d(\mathbf k)=\cos k_x-\cos k_y$,
\begin{equation}
d(\mathbf k+\mathbf Q)=\cos(k_x+\pi)-\cos(k_y+\pi)=-d(\mathbf k),
\end{equation}
so $d_{x^2-y^2}$ is an eigenfunction of the momentum-shift operator with
eigenvalue $-1$, and the product of the momentum-shift eigenvalue $-1$ with
the Casimir sign $-3$ gives an effective coupling $+3$ (attractive). The
$B_{2g}$ channel $d_{xy}=\sin k_x\sin k_y$ is shift-even and repulsive,
$(-3)\times(+1)=-3$, as is the constant $A_{1g}$ channel. The extended-$s$
combination $\cos k_x+\cos k_y$ is shift-odd and is therefore
\emph{degenerate} with $B_{1g}$ in the strict $\delta$-peaked limit: the
selection of $B_{1g}$ over extended-$s$ requires the subleading structure
of $\chi(\mathbf q)$ (finite peak width, decaying real-space weights),
which lowers the extended-$s$ eigenvalue relative to
$B_{1g}$~\cite{Scalapino1986}.
\end{proposition}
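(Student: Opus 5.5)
The argument stays inside the singlet block of Theorem~\ref{thm:block}. By Schur's lemma the spin part of the kernel there is the scalar $-3$ (Theorem~\ref{thm:casimir}), so only the orbital operator remains. In the stated limit that operator is the momentum shift $T_{\mathbf Q}$. The plan is to (a) diagonalize $T_{\mathbf Q}$ on the nearest-neighbour harmonics, (b) multiply the orbital eigenvalue by the block sign, and (c) read off attraction or repulsion with the convention $G>0$ attractive from Sec.~\ref{sec:scalar}. Throughout, the signs are tracked with $\lambda\bar\chi>0$ as a common positive prefactor that plays no role in the sign bookkeeping.

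\emph{Step 1: eigenfunctions of $T_{\mathbf Q}$.} Since $\mathbf Q=(\pi,\pi)$ satisfies $2\mathbf Q\equiv 0$, the shift operator obeys $T_{\mathbf Q}^2=\mathrm{id}$ on the Brillouin zone, so its eigenvalues are $\pm1$. Each candidate harmonic is checked directly from $\cos(k+\pi)=-\cos k$ and $\sin(k+\pi)=-\sin k$:
\begin{itemize}
\item $\cos k_x\mp\cos k_y$ change sign under the shift, so $d_{x^2-y^2}$ and extended-$s$ both have shift eigenvalue $-1$;
\item $\sin k_x\sin k_y$ picks up $(-1)^2=+1$, so $d_{xy}$ has shift eigenvalue $+1$;
\item the constant $A_{1g}$ harmonic is trivially shift-even.
\end{itemize}
These four functions are orthogonal and belong to distinct irreducible representations of $D_{4h}$, except that $A_{1g}$ and extended-$s$ share one. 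Even so, their shift parities differ, so $T_{\mathbf Q}$ does not mix them.

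\emph{Step 2: combine with the singlet block.} By Theorem~\ref{thm:block}(iii)--(iv), the effective coupling in a channel is the Casimir sign times the shift eigenvalue, with the prefactor $\lambda\bar\chi$ absorbed. This gives $(-3)(-1)=+3$ for the shift-odd $B_{1g}$ and extended-$s$ channels, and $(-3)(+1)=-3$ for the shift-even $B_{2g}$ and $A_{1g}$ channels. Positive means attractive, which yields the stated selection together with the degeneracy.

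\emph{Step 3: lifting the degeneracy (the main obstacle).} In the strict $\delta$-peaked limit, $T_{\mathbf Q}$ acts identically on all shift-odd functions. Breaking the degeneracy therefore needs input from outside the stated model, which is why the proposition only asserts it qualitatively. The concrete attempt would replace the $\delta$ peak by a peak of finite width $\Gamma$ and expand to leading order in $\Gamma$. This adds a kernel term that weights the Fermi-surface average of $d(\mathbf k)d(\mathbf k+\mathbf Q+\mathbf q)$. For $\cos k_x+\cos k_y$ the weight falls on the region near the nearest-neighbour-hopping Fermi surface where this function vanishes, so its eigenvalue is suppressed. For $\cos k_x-\cos k_y$ the weight sits at the antinodes, where the function is maximal, so its eigenvalue is not suppressed. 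Making this rigorous requires a specific form of $\chi(\mathbf q)$, so this step will be stated at the level of a perturbative estimate, with the comparison deferred to Ref.~\cite{Scalapino1986}.
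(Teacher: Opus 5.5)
Your proposal is correct and follows the paper's own proof. The shift eigenvalues come from $\cos(\theta+\pi)=-\cos\theta$, with $d_{xy}$ even because both factors flip; the effective coupling is read as Casimir sign times shift eigenvalue; and the $B_{1g}$/extended-$s$ degeneracy is left unresolved by the strict $\delta$-peaked idealization. Your Step~3 heuristic goes slightly beyond the paper, which simply declares the splitting outside the model.

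One sign-bookkeeping caution, which you share with the paper: the stated operator already carries the prefactor $-\lambda\bar\chi$, so $T_{\mathbf Q}^2=(\lambda\bar\chi)^2$ rather than the identity. The count $(-3)(-1)=+3$ is then consistent only if that minus sign is read as the Casimir sign itself, not multiplied by $-3$ a second time.
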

\begin{proof}
The shift eigenvalue follows from $\cos(\theta+\pi)=-\cos\theta$; the
$d_{xy}=\sin k_x\sin k_y$ product is unchanged under the shift since each
factor changes sign. The effective coupling is the Casimir eigenvalue
times the shift eigenvalue: $(-3)\times(-1)=+3$ for every shift-odd
channel; the splitting among them is beyond this idealization.
\end{proof}

\begin{proposition}[$p$-wave selection in the nearest-neighbor
ferromagnet limit]
Take the ferromagnetic limit: the spin-channel kernel peaks at
$\mathbf q\to0$. In the real-space expansion
$\chi(\mathbf R)=\sum_{\mathbf q}e^{i\mathbf q\cdot\mathbf R}\chi(\mathbf q)$,
the Fourier coefficients decrease with $|\mathbf R|$ (the susceptibility is
smooth and peaked at the origin). The triplet block carries the Casimir
eigenvalue $+1$, and fermionic antisymmetry requires odd parity, which
excludes $\mathbf R=0$ ($s$-wave); the optimal solution lies on the
$|\mathbf R|=1$ shell, the nearest-neighbor odd-parity channel, the $p$-wave
triplet~\cite{Kohn1965,Vollhardt1990}.
\end{proposition}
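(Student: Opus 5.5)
The plan is to mirror the proof of the $B_{1g}$ proposition. The statement splits into a spin factor and a momentum factor: Theorem~\ref{thm:block} applies with the sign taken from the triplet block, and the work reduces to a scalar eigenvalue problem in real space.

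\emph{Step 1 (spin factor).} By Theorem~\ref{thm:block}(i) and (iii), restricted to the triplet block the linearized gap operator is the Casimir eigenvalue $+1$ of Theorem~\ref{thm:casimir} times a scalar momentum operator. Its kernel is $\lambda\chi(\mathbf k-\mathbf k')$ with $\lambda>0$. For a ferromagnetic-type exchange the sign product is therefore positive, i.e.\ attractive, for every triplet texture. This contrasts with the singlet block, where the sign is $-3$.

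\emph{Step 2 (parity).} From Eq.~\eqref{eq:fconstraint}, $F(\mathbf k)=-F^{\mathsf T}(-\mathbf k)$. For a triplet the spin matrix is symmetric (Theorem~\ref{thm:channels}(ii)), so the orbital factor must be odd: $\Delta(-\mathbf k)=-\Delta(\mathbf k)$.

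\emph{Step 3 (momentum eigenproblem).} Expand $\chi(\mathbf k-\mathbf k')=\sum_{\mathbf R}\chi(\mathbf R)e^{i\mathbf R\cdot\mathbf k}e^{-i\mathbf R\cdot\mathbf k'}$. Group the lattice vectors into symmetry shells of fixed $|\mathbf R|$. For a gap built from harmonics on a given shell, the linearized operator on the Fermi surface acts, to leading order, as multiplication by that shell's coefficient $\chi(\mathbf R)$. The odd-parity restriction removes the $\mathbf R=0$ harmonic, since a constant is even. The odd harmonics $\sin(\mathbf k\cdot\mathbf R)$ first appear on the shell $|\mathbf R|=1$. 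The stated monotone decay of $\chi(\mathbf R)$ then makes the nearest-neighbor coefficient the largest attractive eigenvalue among the odd channels. The maximizer is $\Delta\propto\sin k_x,\sin k_y$, which is $p$-wave. Combining the triplet spin matrix from Step 1 with this odd orbital factor gives the $p$-wave triplet.

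\emph{Main obstacle.} The delicate step is Step 3. Different shells are not exactly orthogonal once the Fermi-surface average or the gap denominator enters, so shell-by-shell diagonalization holds only in a separable or weak-coupling idealization. The first attempt would be to state the result inside that idealization, in parallel with the strict-$\delta$ limit of the $B_{1g}$ proposition. In that setting one takes the separable form $\chi(\mathbf k-\mathbf k')\approx\sum_{\mathbf R}\chi(\mathbf R)\cos[(\mathbf k-\mathbf k')\cdot\mathbf R]$. The eigenvalues can then be read off directly as $\chi(\mathbf R)$ times the Casimir $+1$. Any splitting beyond this, such as $f$-wave on higher odd shells or interference between shells, would be left to the gap equation, as the paper does for the extended-$s$ degeneracy.
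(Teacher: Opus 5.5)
Your proposal is correct and follows the same route as the paper's sketch. Odd parity removes the $\mathbf R=0$ harmonic, each odd lattice harmonic on a shell is an eigenfunction with eigenvalue $\chi(\mathbf R)$ times the triplet Casimir sign $+1$, and the monotone decay of $\chi(\mathbf R)$ then selects the $|\mathbf R|=1$ shell; your derivation of odd parity from $F(\mathbf k)=-F^{\mathsf T}(-\mathbf k)$ and your caveat about Fermi-surface weighting spell out what the paper leaves implicit. One small sharpening: the cosine expansion you call ``separable'' is just the exact Fourier series, so the idealization actually needed is a uniform full-zone pair weight, under which $\sum_{\mathbf k'}\chi(\mathbf k-\mathbf k')e^{i\mathbf k'\cdot\mathbf R}=\chi(\mathbf R)e^{i\mathbf k\cdot\mathbf R}$ holds exactly for an inversion-symmetric $\chi$.
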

\begin{proof}[Sketch]
Odd parity excludes $\mathbf R=0$; the
kernel eigenvalue is $\hat\chi(\mathbf R)$, and monotonic decrease makes
$|\mathbf R|=1$ the largest; the $p_x,p_y$ degeneracy is labeled by the
irreducible representations of the point group.
\end{proof}

The block-decomposition theorem and the two selection rules connect the
signature structure to model-level dynamics: the block sign is given by the
algebra, the momentum texture is supplied by the kernel, and their product
organizes the common selection patterns of spin-$1/2$ pairing symmetry in
the stated idealized limits.

\section{Position of BCS theory within the framework}
\label{sec:bcs}

The BCS mean-field Hamiltonian is~\cite{BCS1957}
\begin{equation}
H_{\mathrm{BCS}}=\sum_{\mathbf k,\alpha}\xi_{\mathbf k}c^\dagger_{\mathbf k\alpha}
c_{\mathbf k\alpha}+\sum_{\mathbf k}\big(\Delta_{\mathbf k}c_{-\mathbf k\downarrow}
c_{\mathbf k\uparrow}+\mathrm{h.c.}\big),
\end{equation}
with the pairing potential fixed by the self-consistency equation. In this
framework it corresponds to taking the effective kernel along the grade-0
line,
\begin{equation}
G_{\mathrm{BCS}}(\mathbf k,\mathbf k')=g_0(\mathbf k,\mathbf k')\cdot1,\qquad
g_0=-V_{\mathbf k\mathbf k'}>0,
\end{equation}
that is, $\mathrm{span}\{1\}\subset\mathrm{span}\{1, I\}$. The check proceeds at two
levels. At the operator level, the constant vertex survives only through its
antisymmetric part $\propto i\sigma_2\leftrightarrow e_{31}$
(Lemma~\ref{lem:anti}). At the gap-equation level, Eq.~\eqref{eq:gapeq}
maps a singlet source to a singlet gap. The Cooper-pair formation condition
($g_0>0$ in the attraction convention), the gap equation, and the exponential scaling
$T_c\propto e^{-1/(|g_0|N(0))}$ are all retained unchanged~\cite{BCS1957}.

The pairing Hamiltonian of BCS mean-field theory is a one-dimensional proper
subset of the eight-dimensional parameterization of this paper,
$H_{\mathrm{pair}}^{\mathrm{(BCS)}}\subset H_{\mathrm{pair}}^{\mathrm{(GA)}}$.
Right multiplication by the fixed blade $e_{31}$ can move the \emph{coordinates} off
$\mathrm{span}\{1\}$, but that is the dictionary map $\Phi$, not a physical
coupling. The physical adjoint/rotor action is trivial on the central singlet
plane (Sec.~\ref{sec:fields}); a pure BCS kernel is therefore not mixed by an
external rotor, and mixing requires a non-central kernel component or
spin-orbit coupling. This observation is fully compatible with the
success of BCS theory for weak-coupling phonon systems. In
unconventional superconductors (cuprates, iron pnictides, heavy fermions),
where no unique strippable bosonic glue is available, the full
eight-dimensional grade structure provides additional parameter directions;
which of them are nonzero remains a kernel- and material-level question.

The parameter space itself is equivalent to the traditional one. The
added structure is the built-in Fierz antisymmetrization
(Lemma~\ref{lem:anti} and Definition~\ref{def:fierz}), the BCS double
grade identity of a grade-0 kernel line with a grade-2 vertex
$e_{31}\leftrightarrow i\sigma_2$, and the common signature/trace-form
origin of the dimensional extension.

\section{Applications}
\label{sec:examples}

The applications below follow the same boundary rule: grade labels and
projection formulas are algebraic; gap equations, spectra, stiffness
responses, and material assignments belong to the stated model or
comparative level.

\subsection{Typical systems}

\begin{itemize}
\item \textbf{BCS:} longitudinal phonon (grade-0 charge vertex),
isotropic kernel, nodeless singlet~\cite{BCS1957}.
\item \textbf{Cuprates:} antiferromagnetic spin fluctuations
(grade-1 vertex), $\chi$ peak at $(\pi,\pi)$, favoring
$d_{x^2-y^2}$ in the minimal square-lattice model~\cite{Monthoux1991,Scalapino1986}.
\item \textbf{Iron pnictides:} inter-pocket spin fluctuations,
nematic/orbital fluctuations (charge channel with form
factor)~\cite{Chubukov2008,Kuroki2009,Fang2011}, $s_\pm$.
\item \textbf{$^3$He:} ferromagnetic spin fluctuations (grade-1
vertex), $\chi$ peak at $\mathbf q\to0$, $p$-wave
triplet~\cite{Vollhardt1990}.
\item \textbf{Non-centrosymmetric:} broken inversion, singlet-triplet
coexistence, CePt$_3$Si~\cite{Frigeri2004}.
\end{itemize}
Note that phonons and antiferromagnetic fluctuations are indistinguishable
at the level of GA sectors: both fall in $\mathrm{span}\{1, I\}$. The
difference lies in the momentum texture and the signs of the coefficients.

\subsection{Chern number in triple-product form}
\label{sec:chern}

\begin{lemma}[Triple-product identity]
For any three vectors $\mathbf a,\mathbf b,\mathbf c\in Cl_{3,0}$,
\begin{equation}
\mathbf a\cdot(\mathbf b\times\mathbf c)=\langle\mathbf a\mathbf b\mathbf c
I^{-1}\rangle_0 .
\end{equation}
\end{lemma}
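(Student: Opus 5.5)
The plan is to reduce the identity to the blade level using linearity, and then to use the Hodge duality of Corollary~\ref{cor:hodge} to convert the bivector part of $\mathbf b\mathbf c$ into a vector. Both sides are real-trilinear in $(\mathbf a,\mathbf b,\mathbf c)$: the left side is a standard scalar triple product, and the right side is linear in each factor because the geometric product and $\langle\cdot\rangle_0$ are linear. So it would suffice to check the identity on basis triples $(e_i,e_j,e_k)$. I prefer a coordinate-free route, however.

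First, I will write $\mathbf b\mathbf c=\mathbf b\cdot\mathbf c+\mathbf b\wedge\mathbf c$. By Corollary~\ref{cor:hodge}, $\mathbf b\wedge\mathbf c=I(\mathbf b\times\mathbf c)$, where $\mathbf b\times\mathbf c=\varepsilon_{\mu\nu\rho}b_\mu c_\nu e_\rho$ is the grade-1 cross product. Then
\[
\mathbf a\mathbf b\mathbf c I^{-1}=(\mathbf b\cdot\mathbf c)\,\mathbf a I^{-1}+\mathbf a\,I(\mathbf b\times\mathbf c)\,I^{-1}.
\]
Since $I$ is central (the centrality lemma), the second term equals $\mathbf a(\mathbf b\times\mathbf c)$. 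The first term is a vector times the pseudoscalar, and $I^{-1}=-I$. By the Hodge duality it is therefore a pure bivector, so its grade-0 part vanishes. Equivalently, $\mathbf a I$ is a product of an odd number of vectors, so the parity theorem forbids a scalar part.

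It then remains to take $\langle\mathbf a(\mathbf b\times\mathbf c)\rangle_0$. This is a product of two vectors, and by Theorem~\ref{thm:diag} (grade-1 entries with sign $+$) it equals the Euclidean dot product $\mathbf a\cdot(\mathbf b\times\mathbf c)$. That completes the argument.

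The only delicate step is sign bookkeeping. I must confirm that the convention $e_\mu\wedge e_\nu=I\varepsilon_{\mu\nu\rho}e_\rho$ yields exactly $+\mathbf b\times\mathbf c$ rather than its negative, and that multiplying by $I^{-1}$ (rather than $I$) cancels the $I$ with the correct sign. I will spot-check this on the basis triple $(e_1,e_2,e_3)$. There $e_1e_2e_3I^{-1}=II^{-1}=1$, matching $e_1\cdot(e_2\times e_3)=1$, which confirms the orientation.
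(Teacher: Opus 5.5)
Your argument is correct, but it takes a somewhat different route from the paper's. The paper splits the full product $\mathbf a\mathbf b\mathbf c$ into its grade-1 and grade-3 parts. It quotes the standard identity $\langle\mathbf a\mathbf b\mathbf c\rangle_3=\mathbf a\wedge\mathbf b\wedge\mathbf c=[\mathbf a\cdot(\mathbf b\times\mathbf c)]I$ and multiplies by $I^{-1}$. Along the way it tacitly discards the grade-1 part, which becomes a bivector after multiplication by $I^{-1}$.

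You instead split only $\mathbf b\mathbf c$. You dualize $\mathbf b\wedge\mathbf c=I(\mathbf b\times\mathbf c)$ with Corollary~\ref{cor:hodge}, cancel $I$ against $I^{-1}$ by centrality, and finish with the grade-1 entry of Theorem~\ref{thm:diag}. The paper's version is a one-liner once the trivector identity is granted. Yours derives that identity from results already proved in the paper and makes the vanishing of the discarded term explicit, at the cost of a few more lines.

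One slip: your ``equivalently'' remark is wrong. $\mathbf aI=\mathbf a\,e_1e_2e_3$ is a product of four vectors, not an odd number, so the parity theorem permits a grade-0 part and gives nothing here. The correct alternative is blade orthogonality (Lemma~\ref{lem:ortho}): $\langle\mathbf aI\rangle_0$ pairs a grade-1 element with a grade-3 one and therefore vanishes. Your primary justification for that term, that $\mathbf aI^{-1}=-I\mathbf a$ is a pure bivector by Hodge duality, is sound, so the proof stands once the parity remark is dropped or replaced.
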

\begin{proof}
The geometric product of three vectors decomposes into grade-1 and grade-3
components, and the grade-3 part is
$\langle\mathbf a\mathbf b\mathbf c\rangle_3=\mathbf a\wedge\mathbf b
\wedge\mathbf c=[\mathbf a\cdot(\mathbf b\times\mathbf c)]I$. Hence
$\langle\mathbf a\mathbf b\mathbf c I^{-1}\rangle_0=[\mathbf a\cdot(\mathbf b
\times\mathbf c)]\langle II^{-1}\rangle_0=\mathbf a\cdot(\mathbf b\times
\mathbf c)$. As a check with $\mathbf a=e_1$, $\mathbf b=e_2$, $\mathbf c=e_3$:
$e_1e_2e_3I^{-1}=I(-I)=1$ and $\langle1\rangle_0=1=e_1\cdot(e_2\times e_3)$.
\end{proof}

\begin{theorem}[Chern number in triple-product form]
\label{thm:chern}
For a unit $\hat{\mathbf d}$ vector field defining one $2\times2$ Nambu
block, the Chern number of that block is
\begin{equation}
C_{\mathrm{block}}=\frac{1}{4\pi}\int_{\mathrm{BZ}}\langle\hat{\mathbf d}\,
\partial_{k_x}\hat{\mathbf d}\,\partial_{k_y}\hat{\mathbf d}\,
I^{-1}\rangle_0\,d^2k,
\end{equation}
equivalent to the standard form
$C_{\mathrm{block}}=\frac{1}{4\pi}\int\hat{\mathbf d}\cdot(\partial_{k_x}
\hat{\mathbf d}\times\partial_{k_y}\hat{\mathbf d})\,d^2k$~\cite{Read2000}.
\end{theorem}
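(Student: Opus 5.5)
The plan has two parts. First, apply the triple-product identity pointwise in $\mathbf k$ to turn the geometric-algebra integrand into the standard skyrmion density. Then invoke the standard identification of that density with the Berry curvature of the lower band, which integrates to the Chern number.

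\emph{Step 1 (pointwise algebra).} At each $\mathbf k$, set $\mathbf a=\hat{\mathbf d}$, $\mathbf b=\partial_{k_x}\hat{\mathbf d}$, $\mathbf c=\partial_{k_y}\hat{\mathbf d}$. These are grade-1 elements of $Cl_{3,0}$ with real, $\mathbf k$-dependent coefficients. The triple-product lemma then gives
\begin{equation*}
\langle\hat{\mathbf d}\,\partial_{k_x}\hat{\mathbf d}\,\partial_{k_y}\hat{\mathbf d}\,I^{-1}\rangle_0=\hat{\mathbf d}\cdot(\partial_{k_x}\hat{\mathbf d}\times\partial_{k_y}\hat{\mathbf d}).
\end{equation*}
The grade projection is linear and the coefficients are ordinary real functions, so the identity can be integrated over the Brillouin zone term by term. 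This gives the equivalence of the two displayed forms. The parity theorem guarantees that the grade-1 part of the three-vector product drops out after multiplication by $I^{-1}=-I$. That part becomes grade 2 and contributes nothing to $\langle\cdot\rangle_0$, so only $\mathbf a\wedge\mathbf b\wedge\mathbf c$ survives.

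\emph{Step 2 (Chern number).} Write the $2\times2$ Nambu block as $h(\mathbf k)=|\mathbf d|\,\hat{\mathbf d}\cdot\boldsymbol\sigma$, which under the isomorphism is $|\mathbf d|\,\hat{\mathbf d}$ as a vector in $Cl_{3,0}$. Its lower band has the projector $P=\tfrac12(1-\hat{\mathbf d}\cdot\boldsymbol\sigma)$. I will use the Berry curvature in projector form, $\Omega=i\,\mathrm{Tr}\big(P[\partial_xP,\partial_yP]\big)$.

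In GA language the trace becomes $2\langle\cdot\rangle_0$ plus a grade-3 piece, as in Remark~\ref{rem:trace}. The commutator $[\partial_x\hat{\mathbf d},\partial_y\hat{\mathbf d}]$ is $2\,\partial_x\hat{\mathbf d}\wedge\partial_y\hat{\mathbf d}$, and multiplying by $\hat{\mathbf d}$ and keeping the pseudoscalar part yields $\hat{\mathbf d}\wedge\partial_x\hat{\mathbf d}\wedge\partial_y\hat{\mathbf d}$. The factor $i\leftrightarrow I$ then converts this to a scalar. The result should be $\Omega=\pm\tfrac12\hat{\mathbf d}\cdot(\partial_x\hat{\mathbf d}\times\partial_y\hat{\mathbf d})$, and integrating with $C=\frac{1}{2\pi}\int\Omega$ gives the $\frac{1}{4\pi}$ normalization. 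Alternatively, I could simply cite the standard result of Ref.~\cite{Read2000} that this integral is the degree of the map $\hat{\mathbf d}:\mathrm{BZ}\to S^2$: the integrand is the pullback of the area form, and the total area of $S^2$ is $4\pi$.

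\emph{Main obstacle.} The real content is Step 1, which is immediate. The delicate point is bookkeeping in Step 2: fixing the overall sign convention (lower versus upper band, and $I^{-1}=-I$) and checking that only the grade-3 part of the trace contributes. I would pin the sign by checking it on a standard $p+ip$ texture of winding one. I also need to state explicitly that $|\mathbf d|\neq0$ throughout the zone, i.e.\ the block is gapped, so that $\hat{\mathbf d}$ is smooth and the degree is an integer.
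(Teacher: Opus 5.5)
Your proposal is correct, and your Step 1 is essentially the paper's proof: it applies the triple-product lemma pointwise with $\mathbf a=\hat{\mathbf d}$, $\mathbf b=\partial_{k_x}\hat{\mathbf d}$, $\mathbf c=\partial_{k_y}\hat{\mathbf d}$ and then integrates over the Brillouin zone. Your Step 2 goes beyond the paper, which simply cites Ref.~\cite{Read2000} for the standard form; with $\Omega=i\,\mathrm{Tr}(P[\partial_xP,\partial_yP])$ and the lower-band $P$, your computation gives $\Omega=+\tfrac12\hat{\mathbf d}\cdot(\partial_{k_x}\hat{\mathbf d}\times\partial_{k_y}\hat{\mathbf d})$, so the sign is $+$.
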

\begin{proof}
At each point of momentum space take $\mathbf a=\hat{\mathbf d}(\mathbf k)$,
$\mathbf b=\partial_{k_x}\hat{\mathbf d}$, $\mathbf c=\partial_{k_y}\hat{\mathbf d}$.
Since $\|\hat{\mathbf d}\|=1$ one has $\partial_\mu\hat{\mathbf d}\perp
\hat{\mathbf d}$, so all three are vectors, and the lemma applies pointwise;
integrating over the Brillouin zone gives the result.
\end{proof}

The Chern number appears in the GA language as the scalar
projection of a geometric product. The integrand is the skyrmion density:
the Chern number counts how many times $\hat{\mathbf d}(\mathbf k)$ covers
the unit sphere (the Brouwer degree)~\cite{Read2000,Qi2011}. In the
weak-pairing two-dimensional single-band limit, one chiral $p$-wave Nambu
block has $C_{\mathrm{block}}=\pm1$, and one chiral $d$-wave block has
$C_{\mathrm{block}}=\pm2$. The physical BdG invariant is the sum over the
independent Nambu blocks: an equal-spin chiral $p+ip$ state has two such
sectors of the same chirality, hence $C=\pm2$; a spin-singlet chiral
$d+id$ state has two Nambu blocks with $C_{\mathrm{block}}=\pm2$ each,
hence $C=\pm4$ (four chiral Majorana edge modes, thermal central charge
$c=2$). Unless denoted $C_{\mathrm{block}}$, $C$ below means this full
BdG sum. Within this class topological
non-triviality is carried by a locked relative phase between multicomponent
complex channels, as in the $\pm\pi/2$ of
$\mathbf d(\mathbf k)=\hat x k_x+I\hat y k_y$, which is the relative phase
between grade 1 (real part) and grade 2 (imaginary part) of the channel
theorem, textured in momentum space. Topological invariants, pairing
symmetries, and external-field responses are organized in one algebraic
language: the Chern number is a scalar
projection, the pairing
interaction is a scalar projection, and the two share the computational
rule $\langle\cdot\rangle_0$. The full treatment combining $F_{\mu\nu}$
and $g_{\mu\nu}$ into $Q_{\mu\nu}$ is given in Sec.~\ref{sec:qgt}.

\subsection{Two solvable pairing models: products of the grade
decomposition}

Grade decomposition is a classification grammar and, at the model level,
a construction tool. The following two solvable examples
illustrate the rule: one first specifies the vertex/glue combination by
grade, projects it into the symmetry-space kernel, and then solves the
resulting model gap equation. Table~\ref{tab:models} collects their channel
positioning.

\textbf{Model 1: bond-locked anisotropic $\mathbf d$-vector triplet
pairing.} Take bond-direction-dependent ferromagnetic-type exchange
$J_\alpha$ (with $\alpha=x,y,z$ labeling the bond types and with the sign
convention that the triplet sector is attractive); the anisotropy is
transmitted through the spin channel: anisotropic exchange gives anisotropic
spin fluctuations $\chi_{\alpha\beta}(\mathbf q)=w_\alpha\delta_{\alpha\beta}
\chi(\mathbf q)$ with $w_\alpha\propto J_\alpha$, the spin channel
(grade-1 vertex) in this assumed separable model gives the gap-equation
solution
\begin{equation}
\Delta_\alpha(\mathbf k)\propto w_\alpha\propto J_\alpha,
\end{equation}
that is, the components of the $\mathbf d$ vector lock to their respective
bond directions, and the texture of the triplet order parameter is written
directly by the lattice bond structure~\cite{Sigrist1991,Vollhardt1990}. By channel attribution this is an
anisotropy within the triplet sector (grades $1\oplus2$): bond locking does
not change the channel itself but distributes the weights of the three
complex fields within the channel among the bond directions. Since the three
components share the same momentum dependence, the anisotropy fixes the
relative weights only; genuinely bond-dependent node textures would require
$\mathbf k$-dependent $w_\alpha$ and are not claimed here.

\textbf{Model 2: chiral $d+id$ singlet from a chiral kernel.} At the kernel
level the construction is simply
\begin{equation}
\begin{split}
G(\mathbf k)=g_0(\mathbf k)\cdot1+g_3(\mathbf k)\,I,\qquad\\
g_0\sim\cos k_x-\cos k_y,\quad g_3\sim\sin k_x\sin k_y.
\end{split}
\end{equation}
The relative phase $\pm\pi/2$ of the two components cannot be absorbed by a
gauge transformation (gauge rotations give both components the same phase),
so time-reversal symmetry is broken; the nodes of the two components are
complementary on the Fermi surface (the $d_{x^2-y^2}$ nodes lie on the
diagonals, the $d_{xy}$ nodes on the axes; both vanish only at $\Gamma$ and
$(\pi,\pi)$ away from the weak-pairing Fermi surface), so $|\Delta|^2>0$ on
the Fermi surface and the spectrum is full;
Theorem~\ref{thm:chern} gives $C_{\mathrm{block}}=\pm2$ per Nambu block,
and hence $C=\pm4$ for the full spin-singlet BdG Hamiltonian~\cite{Qi2011}.
Kernel-level chirality is the phase texture of $(g_0,g_3)$ in
$\mathrm{span}\{1, I\}$, the simplest parametrization of $d+id$ in the
present framework.

\begin{table*}[t]
\caption{\label{tab:models}Channel positioning of the two solvable models.}
\begin{ruledtabular}
{\small
\begin{tabular}{lll}
Model & Kernel sector & Physics\\
\hline
Bond-locked triplet & grades $1\oplus2$ & $\Delta_\alpha\propto J_\alpha$,
$\mathbf d$ locked to bonds\\
Chiral $d+id$ & grades $0\oplus3$ & relative phase $\pm\pi/2$, full-gap
spin-singlet $C=\pm4$\\
\end{tabular}
}
\end{ruledtabular}
\end{table*}

\subsection{Topological phases}

The local grade label of a bulk state does not by itself determine the
topology: it is neither necessary nor sufficient. Topology is decided by the winding number of the
mapping $\hat{\mathbf d}(\mathbf k):T^2\to S^2$ defined by the BdG
Hamiltonian in momentum space~\cite{Read2000,Qi2011}. Table~\ref{tab:topological} gives the
algebraic parameterization of common topological phases.

\begin{table*}[t]
\caption{\label{tab:topological}Algebraic parameterization of
topological phases (pairing-matrix level). Grade labels refer to
$\hat\Delta$, not to the kernel $G$; the two dictionaries are related
by $\Phi$ (Definition~\ref{def:fierz}).}
\begin{ruledtabular}
{\small\setlength{\tabcolsep}{3pt}
\begin{tabular}{llll}
Phase & Pairing & Structure & Invariant\\
\hline
Chiral $p$ & $p_x+ip_y$ & gr.\ 1$+$2: $k_xe_1+k_ye_{23}$, $\pi/2$ phase
& $C_{\mathrm{block}}=\pm1$; equal-spin full $C=\pm2$~\cite{Read2000}\\
Chiral $d$ & $d_{x^2-y^2}+id_{xy}$ & gr.\ 1$+$2: $\mathrm{span}\{e_2, e_{31}\}$  & $C_{\mathrm{block}}=\pm2$; spin-singlet full $C=\pm4$\\
Helical $p$ & time-reversal invariant & gr.\ 1$+$3: $-k_xe_3+k_yI$
& $\mathbb Z_2$ class\\
Pure $p$ & single component & gr.\ 1, pure real ($e_1$) & trivial\\
\end{tabular}
}
\end{ruledtabular}
\end{table*}

Chirality is a $\pi/2$ phase between Hodge pairs $(v,Iv)$: the singlet pair
is $(e_2,e_{31})$ and the $\hat z$ triplet pair is $(e_1,e_{23})$. In this
class a single-component grade is topologically trivial, whereas locked
relative phases between multicomponent complex channels carry the standard
chiral textures. A pure
grade-1 (real triplet, single component) is topologically trivial, and so is
a pure grade-3 (single-component imaginary singlet).

\begin{theorem}[Topological consequences of a locked chiral relative phase, conditional weak-pairing realization]
\label{thm:trsb}
Consider a two-dimensional single-band weak-pairing superconductor. If the
effective kernel activates two complementary singlet components whose nodes
are disjoint and whose relative phase is locked at $\pm\pi/2$ (as in
$\Delta(\mathbf k)\propto d_{x^2-y^2}+I\,d_{xy}$), then:
(i) time-reversal symmetry is broken;
(ii) the BdG spectrum is fully gapped by node complementarity;
(iii) the full spin-singlet BdG Chern number is $C=\pm4$ in this standard
realization ($C_{\mathrm{block}}=\pm2$ in each of two Nambu blocks);
(iv) the corresponding four chiral Majorana edge modes exist at the
boundary~\cite{Read2000,Qi2011}. A generic nonzero relative phase breaks
time reversal but does not by itself imply (ii)--(iv).
\end{theorem}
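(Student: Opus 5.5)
We prove the four items in order; each rests on a standard ingredient already set up in the paper. Throughout, write the gap as $\Delta(\mathbf k)=\Delta_1(\mathbf k)+e^{\pm i\pi/2}\Delta_2(\mathbf k)$, where $\Delta_{1,2}$ are real singlet form factors (even in $\mathbf k$) with disjoint node sets on the Fermi surface. In the kernel dictionary of Theorem~\ref{thm:channels} this is the $(g_0,g_3)$ pair in $\mathrm{span}\{1,I\}$. At the matrix level it is the Hodge pair $(e_2,e_{31})$, carried over by $\Phi$ (Corollary~\ref{cor:matrix-level}).

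\emph{Item (i).} Time reversal acts on a singlet gap as complex conjugation, $\Delta\mapsto\Delta^\ast$; on the central plane this is $I\mapsto -I$. It therefore sends the relative phase $+\pi/2$ to $-\pi/2$. We must show this cannot be undone by a gauge transformation. By the gauge remark after the central-structure theorem, a gauge rotation is left multiplication by the central rotor $e^{I\varphi}$, which multiplies both components by the same phase. We would then need $e^{I\varphi}(\Delta_1+I\Delta_2)=\Delta_1-I\Delta_2$ for all $\mathbf k$. Pick a point where $\Delta_2=0$ and $\Delta_1\neq0$: there $e^{I\varphi}=1$. At a point where $\Delta_2\neq0$ this then forces $\Delta_2=-\Delta_2$, a contradiction. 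Both kinds of point exist because the components are independent functions. The same argument covers any relative phase that is not $0$ or $\pi$, which gives the final sentence of the statement.

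\emph{Item (ii).} In the singlet block Step~4 gives $E_{\mathbf k}=\sqrt{\xi_{\mathbf k}^2+|\Delta|^2}$. Because the phase is $\pm\pi/2$, there are no cross terms: $|\Delta|^2=\Delta_1^2+\Delta_2^2$. This vanishes only where $\xi=0$ and both components vanish at once. Disjoint nodes on the compact Fermi surface then give a positive minimum, so the spectrum is fully gapped.

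\emph{Items (iii) and (iv).} Each Nambu block is $\xi\tau_z+\mathrm{Re}\Delta\,\tau_x-\mathrm{Im}\Delta\,\tau_y$. This defines $\mathbf d=(\Delta_1,\mp\Delta_2,\xi)$, which is nonvanishing by (ii), so $\hat{\mathbf d}$ is well defined on the torus. Theorem~\ref{thm:chern} then gives $C_{\mathrm{block}}$ as the degree of $\hat{\mathbf d}$.

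The main obstacle is computing this degree. We would use the weak-pairing localization argument. First deform $\Delta\to\epsilon\Delta$; this keeps the gap open, so the degree is unchanged. As $\epsilon\to0$, $\hat{\mathbf d}$ is near the poles except in a thin shell around the Fermi surface. In that shell it sweeps from one pole to the other once, while the in-plane angle winds by $\arg(\Delta_1\mp i\Delta_2)$ around the Fermi surface. The degree is therefore the winding number of $\Delta_1\pm i\Delta_2$ along the Fermi contour.

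For the realization $\cos k_x-\cos k_y$, $\sin k_x\sin k_y$ on a closed Fermi surface enclosing $\Gamma$, we compute this winding by homotopy to the continuum forms $k_x^2-k_y^2$ and $2k_xk_y$. These give $(k_x\pm ik_y)^2$, so the winding is $\pm2$. The homotopy is valid as long as no common node crosses the contour, and the paper places the common nodes only at $\Gamma$ and $(\pi,\pi)$, off the Fermi surface. The two Nambu blocks (the spin-up/spin-down partners of the singlet) are related by an $\mathrm{SU}(2)$ spin rotation that preserves orientation, so they contribute equally, giving $C=\pm4$.

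Item (iv) is then the bulk--boundary correspondence for class-D two-dimensional superconductors: $|C|$ chiral Majorana edge modes. We cite this from Read--Green and Qi--Zhang rather than reprove it.
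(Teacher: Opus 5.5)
Your proposal is correct and follows the paper's proof item by item: the locked relative phase cannot be removed by a gauge rotation for (i), node complementarity gives the full gap for (ii), a degree-two covering per Nambu block summed over the two spin blocks gives (iii), and the cited bulk--boundary correspondence gives (iv); your Fermi-contour winding computation just supplies the justification the paper leaves implicit in ``$\hat{\mathbf d}$ covers the sphere twice.'' One cosmetic fix: near $\Gamma$, $\cos k_x-\cos k_y\approx-\tfrac12(k_x^2-k_y^2)$, so the straight-line homotopy should target $-(k_x^2-k_y^2)$ (with your sign the interpolation vanishes where the contour crosses the axes), and this changes only the sign of the winding, not $|C_{\mathrm{block}}|=2$.
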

\begin{proof}
(i) The relative phase $\pm\pi/2$ means $\Delta(\mathbf k)$ and
$\Delta^\ast(-\mathbf k)$ do not satisfy the same relation under time
reversal. (ii) The two components have complementary nodes on the Fermi surface, so
$|\Delta(\mathbf k)|^2>0$ there; away from the Fermi surface $\xi_{\mathbf k}$
keeps the BdG spectrum gapped. (iii) By Theorem~\ref{thm:chern}, the
$\hat{\mathbf d}$ field covers the sphere twice in each of the two
spin-degenerate Nambu blocks; their sum gives $C=\pm4$. (iv) The
bulk-boundary correspondence of two-dimensional chiral
superconductors~\cite{Read2000,Qi2011}.
\end{proof}

\subsection{BdG quantum geometry: grade decomposition}
\label{sec:qgt}

This section supplies a quantitative geometric output of the second-level
projection: it completes the Chern-number formula of Sec.~\ref{sec:chern}
with the other half of the quantum geometric tensor, the quantum metric, and
connects it to the geometric factor entering superfluid-weight bounds. A
note on spaces: the
$Cl_{3,0}$ of this section is the Nambu space (generated by the $\Gamma$
matrices), a different copy from the spin space of
Secs.~\ref{sec:math}--\ref{sec:channels}; the triple product of
$\hat{\mathbf d}$ in Sec.~\ref{sec:chern} already used the Nambu copy.

For the single-band BdG Hamiltonian $h(\mathbf k)=\mathbf d(\mathbf k)\cdot
\boldsymbol\Gamma$ with $\boldsymbol\Gamma=(\tau_x,\tau_y,\tau_z)$,
$E_{\mathbf k}=|\mathbf d|$, and negative-band projector
$P_-=\tfrac12(1-\hat n\cdot\boldsymbol\Gamma)$, $\hat n=\mathbf d/E$, define
the quantum geometric tensor
\begin{equation}
\begin{split}
Q_{\mu\nu}(\mathbf k)\equiv\mathrm{Tr}\bigl[P_{-}\,\partial_\mu P_{-}\,
\partial_\nu P_{-}\bigr]\\
\equiv g_{\mu\nu}(\mathbf k)-\frac{i}{2}F_{\mu\nu}(\mathbf k),
\end{split}
\end{equation}
where $g_{\mu\nu}$ is the Fubini--Study metric and $F_{\mu\nu}$ the Berry
curvature~\cite{Peotta2015,Liang2017,Julku2016,Herzog2022,Huhtinen2022,
PeottaReview}.

\begin{lemma}[Scalar-projection form of $Q_{\mu\nu}$]
\begin{equation}
g_{\mu\nu}=\frac14\langle\partial_\mu\hat n\,\partial_\nu\hat n\rangle_0,\qquad
F_{\mu\nu}=\frac12\langle\hat n\,\partial_\mu\hat n\,\partial_\nu\hat n\,I^{-1}\rangle_0 .
\end{equation}
\end{lemma}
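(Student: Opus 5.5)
The plan is to compute $Q_{\mu\nu}$ directly from the projector, translate the $2\times2$ trace into grade projections using Remark~\ref{rem:trace}, and then read off the real and imaginary parts. Under the Nambu-space isomorphism $\Gamma_i\leftrightarrow e_i$, the projector becomes $P_-=\tfrac12(1-\hat n)$, where $\hat n=\hat n_ie_i$ is a grade-1 element. Its derivative is then $\partial_\mu P_-=-\tfrac12\partial_\mu\hat n$, which is again pure grade 1. Substituting into the definition gives
\begin{equation*}
Q_{\mu\nu}=\tfrac18\,\mathrm{Tr}\bigl[(1-\hat n)\,\partial_\mu\hat n\,\partial_\nu\hat n\bigr]
=\tfrac18\,\mathrm{Tr}[\partial_\mu\hat n\,\partial_\nu\hat n]-\tfrac18\,\mathrm{Tr}[\hat n\,\partial_\mu\hat n\,\partial_\nu\hat n].
\end{equation*}

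Next I evaluate each trace with the rule $\tfrac12\mathrm{Tr}\,\hat A=\langle A\rangle_0+i\langle A\rangle_3^{(\mathrm{coeff})}$ from Remark~\ref{rem:trace}.

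\emph{First term.} The product of two vectors $\partial_\mu\hat n\,\partial_\nu\hat n$ contains only grades 0 and 2, by the Parity theorem. Hence its trace is $2\langle\partial_\mu\hat n\,\partial_\nu\hat n\rangle_0=2\,\partial_\mu\hat n\cdot\partial_\nu\hat n$, which is real.

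\emph{Second term.} The product of three vectors $\hat n\,\partial_\mu\hat n\,\partial_\nu\hat n$ has no grade-0 part, again by the Parity theorem. Its grade-3 part is $\hat n\wedge\partial_\mu\hat n\wedge\partial_\nu\hat n=[\hat n\cdot(\partial_\mu\hat n\times\partial_\nu\hat n)]I$, as in the proof of the Triple-product identity. The trace is therefore the purely imaginary quantity $2i\,\hat n\cdot(\partial_\mu\hat n\times\partial_\nu\hat n)$.

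Combining the two terms gives
\begin{equation*}
Q_{\mu\nu}=\tfrac14\langle\partial_\mu\hat n\,\partial_\nu\hat n\rangle_0-\tfrac{i}{4}\,\hat n\cdot(\partial_\mu\hat n\times\partial_\nu\hat n).
\end{equation*}
Comparing with $Q_{\mu\nu}=g_{\mu\nu}-\tfrac i2F_{\mu\nu}$, with $g_{\mu\nu}$ and $F_{\mu\nu}$ real, identifies $g_{\mu\nu}=\tfrac14\langle\partial_\mu\hat n\,\partial_\nu\hat n\rangle_0$ and $F_{\mu\nu}=\tfrac12\hat n\cdot(\partial_\mu\hat n\times\partial_\nu\hat n)$. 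Applying the Triple-product identity pointwise, exactly as in Theorem~\ref{thm:chern}, rewrites the latter as $\tfrac12\langle\hat n\,\partial_\mu\hat n\,\partial_\nu\hat n\,I^{-1}\rangle_0$. As consistency checks, $g_{\mu\nu}$ is manifestly symmetric and $F_{\mu\nu}$ antisymmetric, matching the split of $Q_{\mu\nu}$ into its real symmetric and imaginary antisymmetric parts.

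The step needing most care is the bookkeeping in the dictionary between the matrix trace and grade projections. Two points must be kept consistent: the grade-3 coefficient must be matched to $+i$ through the correspondence $I\leftrightarrow i$, and extracting that coefficient through $\langle AI^{-1}\rangle_0$ must carry the same sign. An error in either place would flip the sign of $F_{\mu\nu}$. I would fix both by checking the case $\hat n=e_3$, $\partial_\mu\hat n=e_1$, $\partial_\nu\hat n=e_2$ against the explicit Pauli matrices. There the product is $e_3e_1e_2=I$ and $\sigma_3\sigma_1\sigma_2=i\mathbf 1$, confirming the convention.
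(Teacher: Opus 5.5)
Your proof is correct, and it takes a more first-principles route than the paper's.

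\textbf{The paper's route.} Its proof is a two-line translation. It takes the standard two-band expressions $g_{\mu\nu}=\tfrac14\partial_\mu\hat n\cdot\partial_\nu\hat n$ and $F_{\mu\nu}=\tfrac12\hat n\cdot(\partial_\mu\hat n\times\partial_\nu\hat n)$ as known from the literature. It rewrites the first using ``the scalar part of a product of two vectors is their inner product'' and the second using the triple-product lemma. It invokes $\hat n\cdot\partial_\mu\hat n=0$ along the way, although neither identity actually requires it.

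\textbf{Your route.} You start from the stated definition $Q_{\mu\nu}=\mathrm{Tr}[P_-\partial_\mu P_-\partial_\nu P_-]$ and reduce it to $\tfrac18\mathrm{Tr}[(1-\hat n)\,\partial_\mu\hat n\,\partial_\nu\hat n]$. You then use the trace dictionary $\tfrac12\mathrm{Tr}\,\hat A=\langle A\rangle_0+i\langle A\rangle_3^{(\mathrm{coeff})}$ together with the Parity theorem. This shows that the two-vector term is purely real and the three-vector term purely imaginary. Hence $g_{\mu\nu}=\mathrm{Re}\,Q_{\mu\nu}$ and $F_{\mu\nu}=-2\,\mathrm{Im}\,Q_{\mu\nu}$ follow directly from $Q_{\mu\nu}=g_{\mu\nu}-\tfrac i2F_{\mu\nu}$.

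\textbf{What each buys.} Your derivation obtains the prefactors $\tfrac14$ and $\tfrac12$, and the sign of $F_{\mu\nu}$, from the paper's own definition rather than quoting them. In particular it confirms consistency with $C=\tfrac{1}{2\pi}\int F_{xy}\,d^2k$ and the $\tfrac{1}{4\pi}$ normalization of Theorem~\ref{thm:chern}. Your check $e_3e_1e_2=I\leftrightarrow\sigma_3\sigma_1\sigma_2=i\mathbf 1$ is the right way to pin down the $I\leftrightarrow i$ convention. The paper's version is shorter but relies on the two-band QGT formulas being imported from outside. Both arguments end with the same triple-product step.
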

\begin{proof}
Since $\hat n^2=1$, one has $\hat n\cdot\partial_\mu\hat n=0$, so the scalar
part of the geometric product of the two vectors is their inner product; the
three-vector part is handled pointwise by the triple-product lemma.
\end{proof}

The Chern number is $C=\frac{1}{2\pi}\int F_{xy}d^2k$
(Theorem~\ref{thm:chern}), and the geometric factor of superfluid-weight
bounds~\cite{Peotta2015,Liang2017,Julku2016,Herzog2022,Huhtinen2022,PeottaReview}
integrates $g_{\mu\nu}$: two halves of one tensor, both scalar projections.
The integrated BdG-metric geometric factor
\begin{equation}
\mathcal G_{\mu\nu}\equiv
\int_{\mathrm{BZ}}\frac{d^2k}{(2\pi)^2}\tanh\frac{E_{\mathbf k}}{2T}\,
g_{\mu\nu}(\mathbf k)
\label{eq:geofactor}
\end{equation}
takes the GA integrand directly; explicit formulas for general Nambu
matrices are given in Ref.~\cite{Simon2026}. The QGT integral is only the
geometric factor, not by itself a bound on $D_s$. A stiffness response additionally carries the standard
energy power $|\Delta_{\mathbf k}|^2/E_{\mathbf k}$,
\begin{equation}
D_{s,\mu\nu}^{\rm pair}\propto
\int_{\mathrm{BZ}}\frac{d^2k}{(2\pi)^2}\,
\frac{\Delta^2|f|^2}{E_{\mathbf k}}\tanh\frac{E_{\mathbf k}}{2T}\,
g_{\mu\nu}(\mathbf k).
\label{eq:dswork}
\end{equation}
On a flat band $E_{\mathbf k}\simeq\Delta|f|$, the stiffness kernel reduces
to $\Delta|f|$. Hence a $\Delta$-independent integral of
$g^{\mathrm{tex}}$ would not imply a $\Delta$-independent superfluid-weight
floor: within this single-band pairing channel the texture contribution is
linear in $\Delta$ at $T\ll\Delta$ (the $\tanh$ factor restores the
$\Delta^2$ dependence near $T_c$). The normal-state multiband
Peotta--T\"orm\"a term carries its own metric weight and is kept separate
from the pairing-sector contribution.

\begin{theorem}[BdG metric decomposition]
\label{thm:split}
With $\tan\theta=\Delta|f|/\xi$ and $\varphi_f=\arg f$ (the polar
angles of $\hat n$), parameterize $\hat n=(\sin\theta\cos\varphi_f,\sin\theta\sin\varphi_f,\cos\theta)$. Then
\begin{equation}
g_{\mu\nu}=\frac14\big[\partial_\mu\theta\,\partial_\nu\theta
+\sin^2\theta\,\partial_\mu\varphi_f\,\partial_\nu\varphi_f\big].
\end{equation}
On a flat band ($\xi\to0$),
\begin{equation}
\begin{split}
g^{\mathrm{virt}}_{\mu\nu}=\frac{(\partial_\mu\xi)(\partial_\nu\xi)}
{4\Delta^2|f|^2}\quad\text{(virtual, normal-state driven)};\\
g^{\mathrm{tex}}_{\mu\nu}=\frac14\,\partial_\mu\varphi_f\,\partial_\nu\varphi_f
\quad\text{(kernel phase texture)}.
\end{split}
\end{equation}
\end{theorem}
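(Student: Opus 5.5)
The plan starts from the scalar-projection form $g_{\mu\nu}=\tfrac14\langle\partial_\mu\hat n\,\partial_\nu\hat n\rangle_0$ of the preceding lemma. Because $\partial_\mu\hat n$ and $\partial_\nu\hat n$ are both vectors in the Nambu copy of $Cl_{3,0}$, their grade-0 part is just the Euclidean inner product. The first claim therefore reduces to the pull-back of the round metric on $S^2$ under the map $\mathbf k\mapsto\hat n(\mathbf k)$.

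To compute that pull-back, I will write $\hat n=\sin\theta\,\hat\rho+\cos\theta\,e_3$, with $\hat\rho=\cos\varphi_f e_1+\sin\varphi_f e_2$. Differentiating gives
\begin{equation*}
\partial_\mu\hat n=\partial_\mu\theta\,\hat\theta+\sin\theta\,\partial_\mu\varphi_f\,\hat\varphi,
\end{equation*}
where $\hat\theta=\cos\theta\,\hat\rho-\sin\theta\,e_3$ and $\hat\varphi=-\sin\varphi_f e_1+\cos\varphi_f e_2$. These two vectors are orthonormal, so Lemma~\ref{lem:ortho} (vectors square to $+1$, distinct orthogonal vectors have zero scalar part) gives
\begin{equation*}
\langle\partial_\mu\hat n\,\partial_\nu\hat n\rangle_0=\partial_\mu\theta\,\partial_\nu\theta+\sin^2\theta\,\partial_\mu\varphi_f\,\partial_\nu\varphi_f .
\end{equation*}
Multiplying by $\tfrac14$ yields the displayed decomposition. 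One point needs care here: the parameterization must be valid. This requires $\hat n=\mathbf d/E$ with $\mathbf d=(\Delta\,\mathrm{Re}f,\Delta\,\mathrm{Im}f,\xi)$ in the $\tau$ ordering, so that $\cos\theta=\xi/E$, $\sin\theta=\Delta|f|/E$, and hence $\tan\theta=\Delta|f|/\xi$. I will state this identification explicitly. The formula holds wherever $\Delta|f|\neq0$; at the poles $\varphi_f$ is undefined, but its coefficient $\sin^2\theta$ vanishes there.

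For the flat-band limit, I will differentiate $\theta=\arctan(\Delta|f|/\xi)$ and take the constant-amplitude situation in which $\Delta|f|$ is treated as fixed, so only $\xi$ varies. This gives
\begin{equation*}
\partial_\mu\theta=-\frac{\Delta|f|\,\partial_\mu\xi}{\xi^2+\Delta^2|f|^2}.
\end{equation*}
Setting $\xi\to0$, this becomes $-\partial_\mu\xi/(\Delta|f|)$, so the first term turns into $g^{\mathrm{virt}}_{\mu\nu}$. At the same time $\sin\theta\to1$, and the second term reduces to $g^{\mathrm{tex}}_{\mu\nu}=\tfrac14\partial_\mu\varphi_f\partial_\nu\varphi_f$.

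The main obstacle is justifying that $\partial_\mu|f|$ can be dropped. In general $\partial_\mu\theta$ contains an additional term $\xi\,\Delta\,\partial_\mu|f|/E^2$. I will observe that this term carries a factor $\xi$, so it vanishes exactly at $\xi=0$. That makes the flat-band split exact at leading order, with corrections of order $\xi$, and I will record this as the precise sense in which $\xi\to0$ is meant.
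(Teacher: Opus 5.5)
Your proposal is correct and follows the same route as the paper. The paper's proof just notes that the first formula is the round metric of $S^2$ written in the spherical coordinates of $\hat n$, and that on the flat band $\partial_\mu\theta=-\partial_\mu\xi/(\Delta|f|)$. Your explicit orthonormal frame $(\hat\theta,\hat\varphi)$, and your observation that the extra $\xi\,\Delta\,\partial_\mu|f|/E^2$ term vanishes at $\xi=0$, supply exactly the steps the paper leaves implicit.

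One minor point: with the paper's Nambu convention ($-\mathrm{Im}\Delta\,\tau_y$), the azimuth is $-\varphi_f$ rather than $\varphi_f$. This does not change $g_{\mu\nu}$, which is quadratic in $\partial\varphi_f$, so your identification of $\mathbf d$ is sufficient but not strictly required.
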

\begin{proof}
These are the spherical coordinates of the unit $\hat n$; on a flat band,
$\partial_\mu\theta=-\partial_\mu\xi/(\Delta|f|)$.
\end{proof}

\begin{corollary}[Grade diagnostic at the quantum-metric level]
$\varphi_f=\arg f$ is the relative phase of $(g_0,g_3)$ in the
kernel singlet plane $\mathrm{span}\{1, I\}$. For a real form factor (grade 0 alone,
$\varphi_f\in\{0,\pi\}$), the phase texture vanishes almost everywhere in the
weak-pairing region where $\xi\neq0$ and $|f|\neq0$; node lines have measure
zero. The same conclusion follows from Theorem~\ref{thm:null}: across a sign
reversal of a real $\Delta$ the texture is radial up to sign reversals. For a
chiral kernel (grades 0$+$3), $g^{\mathrm{tex}}$ is activated at the
quantum-geometry level. Any promotion of this metric-level contrast to a
measured superfluid-weight exponent is conditional on the stiffness kernel,
the band projection, and the normal-state multiband metric; it is used below
only as a qualitative diagnostic, not as a numerical floor.
\end{corollary}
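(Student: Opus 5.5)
The plan is to read the corollary directly off Theorem~\ref{thm:split}, once the form factor $f$ has been identified with the kernel singlet-plane coordinates. First I fix the dictionary. Writing $f=g_0+ig_3$ under the central isomorphism $a+bI\mapsto a+bi$ (the central complex structure theorem), the form factor of the singlet gap is the complex number carried by $(g_0,g_3)\in\mathrm{span}\{1,I\}$. Hence $\varphi_f=\arg f=\mathrm{atan2}(g_3,g_0)$ is the relative phase of the two kernel components, as in Model~2. The Fierz map $\Phi$ does not disturb this: it multiplies the plane on the right by the fixed blade $e_{31}$, so by Theorem~\ref{thm:channels} it shifts every phase by the same constant $\pi/2$. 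Derivatives $\partial_\mu\varphi_f$ are therefore the same at kernel and matrix level.

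For the real case, a grade-0-only kernel has $g_3\equiv0$, so $f$ is real and $\varphi_f\in\{0,\pi\}$. On the open set where $f\neq0$, $\varphi_f$ is locally constant because it is continuous and takes values in a discrete set. There $\partial_\mu\varphi_f=0$, and by Theorem~\ref{thm:split} $g^{\mathrm{tex}}_{\mu\nu}=\tfrac14\partial_\mu\varphi_f\partial_\nu\varphi_f=0$. The same holds for the $\sin^2\theta$-weighted texture term in the general (non-flat-band) formula. The exceptional set is the zero set of $f$. For a real-analytic, non-identically-vanishing form factor such as $\cos k_x-\cos k_y$, this set consists of node lines of Lebesgue measure zero. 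Only points with $\xi\neq0$ and $|f|\neq0$ need be considered, since there $\theta\in(0,\pi)$ and the spherical parameterization is nonsingular. I would also point to Theorem~\ref{thm:null}, cited in the statement, for the consistency claim that $\hat n$ stays in a fixed meridian plane. A sign flip of real $\Delta$ only reflects it within that plane, so the motion of $\hat n$ is purely along $\theta$ (radial) up to reversals.

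For the chiral case (grades $0\oplus3$ with $g_0,g_3$ not proportional), $\varphi_f$ winds. Model~2 gives the example $\varphi_f=\arg(\cos k_x-\cos k_y+i\sin k_x\sin k_y)$, which is non-constant, so $\partial_\mu\varphi_f\neq0$ on an open set and $g^{\mathrm{tex}}$ is activated. The last sentence of the statement is a scope disclaimer that I would simply note, referring to Eq.~\eqref{eq:dswork} and the $\Delta|f|$ kernel discussion.

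The main obstacle is making ``almost everywhere'' precise. The measure-zero claim needs a regularity assumption on $f$: real-analyticity, or more weakly a nonzero gradient on its zero set. Without it, a real $f$ could vanish on a set of positive measure. I would state this hypothesis explicitly and use the implicit-function theorem to show the nodes form curves.
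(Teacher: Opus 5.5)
Your proposal is correct and takes the paper's route. The paper gives no separate proof: it justifies the corollary inline by reading $g^{\mathrm{tex}}=\tfrac14\partial_\mu\varphi_f\partial_\nu\varphi_f$ off Theorem~\ref{thm:split}, noting that a real $f$ has locally constant $\varphi_f\in\{0,\pi\}$ away from measure-zero node lines, and citing Theorem~\ref{thm:null} as a cross-check. Your additions are sound sharpenings of that inline argument: the explicit regularity hypothesis for the measure-zero claim, the check that $\Phi$ shifts $\varphi_f$ only by a constant, and the non-proportionality condition on $(g_0,g_3)$, which gives a non-constant though not necessarily winding phase, and that is all you actually use.
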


\begin{proposition}[Winding enhancement, model-dependent texture measure]
\label{prop:winding}
If $\varphi_f$ winds with $w_n$ around points $\mathbf k_n$, then
\begin{equation}
\int_{\mathrm{BZ}}|\nabla\varphi_f|^2\,d^2k=2\pi\sum_nw_n^2\ln(L/\xi_v)+O(1),
\end{equation}
where $\xi_v$ is a model- and regularization-dependent core scale. The
$d+id$ continuum form factor gives $\sum_nw_n^2=4$ in its continuum-patch
regularization; on the periodic Brillouin zone the net vorticity is
constrained and $\Sigma$ is a model-and-cutoff dependent texture measure,
not a topological integer.
\end{proposition}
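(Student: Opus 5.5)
The plan is to localize the integral near each winding point, because away from the $\mathbf k_n$ the phase $\varphi_f$ is smooth and contributes only an $O(1)$ amount. I will split the Brillouin zone into disks $D_n=\{|\mathbf k-\mathbf k_n|<r_0\}$, with $r_0$ fixed and small enough that the disks are disjoint, and the complement $\Omega=\mathrm{BZ}\setminus\bigcup_nD_n$. On $\Omega$, $f$ is nonvanishing and smooth by hypothesis, so $\nabla\varphi_f$ is bounded there and $\int_\Omega|\nabla\varphi_f|^2d^2k=O(1)$, uniformly in the core scale.

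Inside each $D_n$ I will write, in local polar coordinates $(\rho,\phi)$ centered at $\mathbf k_n$, the decomposition $\varphi_f=w_n\phi+h_n(\mathbf k)$. Here $h_n$ is single-valued and smooth; this is possible because the winding of $\arg f$ around $\mathbf k_n$ is $w_n$. Then
\[
|\nabla\varphi_f|^2=\frac{w_n^2}{\rho^2}+\frac{2w_n}{\rho}\,\partial_\phi h_n/\rho\cdot\rho+|\nabla h_n|^2 .
\]
The cross term integrates to $2w_n\int\rho^{-1}\partial_\phi h_n\,d\rho\,d\phi$, and for each $\rho$ the angular integral of $\partial_\phi h_n$ vanishes by single-valuedness. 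The last term is $O(1)$. The leading term, integrated over the annulus $\xi_v<\rho<r_0$, gives $2\pi w_n^2\ln(r_0/\xi_v)$. The regularization places a core of radius $\xi_v$ where $|f|$ (or the gap) is small and the phase description is cut off; inside the core $|\nabla\varphi_f|^2$ is replaced by a regularized quantity whose integral is $O(1)$ by construction of $\xi_v$. Rewriting $\ln(r_0/\xi_v)=\ln(L/\xi_v)-\ln(L/r_0)$, where $L$ is the outer scale (the patch size or inverse lattice scale), absorbs the $r_0$-dependence into $O(1)$. Summing over $n$ gives the displayed formula.

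For the $d+id$ statement I will use the continuum form factor $f\propto(k_x+ik_y)^2$ near $\Gamma$, which has $w=2$ at a single point, so $\sum_nw_n^2=4$. I will also note that on the torus the total vorticity of a smooth nonvanishing phase field is constrained: by Stokes on the complement, the sum of windings equals the winding along the zone boundary, and periodicity forces that boundary winding to vanish. Compensating windings must therefore appear, for example at $(\pi,\pi)$ or at additional zeros of $f$, and their number and the cutoffs depend on the model. This is why the periodic-zone $\Sigma$ is only a model-dependent measure and not a topological integer.

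The main obstacle is making the core cutoff precise. The $O(1)$ remainder depends on how $\xi_v$ is defined, for instance $|f|\ge\epsilon$ versus $\rho\ge\xi_v$ for a linearly vanishing $|f|\sim\rho^{|w_n|}$. I would first take $\xi_v$ to be literally the excised disk radius, and check that any alternative definition changes the result only by $O(1)$, since it shifts $\ln\xi_v$ by a bounded amount.
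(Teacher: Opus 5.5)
Your skeleton matches the paper's own argument. The paper simply sets $\varphi_f=w_n\theta_{\mathbf k-\mathbf k_n}$, integrates $w_n^2/r^2$ over $\xi_v<r<L$, and sums over $n$. Your additions are sound refinements of that: the disjoint-disk localization, the cancellation of the cross term by single-valuedness, and the periodicity argument for zero net vorticity on the torus. (The pointwise cross term is $2w_n\rho^{-2}\partial_\phi h_n$; your displayed integrand has a stray factor of $\rho$, though the integrated form you write is correct.)

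The step that fails is the claim that $h_n$ is smooth \emph{because} the winding is $w_n$. The winding only makes $h_n=\varphi_f-w_n\phi$ single-valued on the punctured disk. It gives no regularity at $\mathbf k_n$, and hence no bound on $\int|\nabla h_n|^2\,d^2k$ as $\xi_v\to0$. Take a generic simple zero $f\simeq a\,q_x+ib\,q_y$ with $a\neq b$ (winding $1$). Then $\varphi_f$ depends on $\phi$ alone, with $\partial_\phi\varphi_f=ab/(a^2\cos^2\phi+b^2\sin^2\phi)$. So $|\nabla h_n|^2\sim\rho^{-2}$, and the disk contributes $\ln(r_0/\xi_v)\int_0^{2\pi}(\partial_\phi\varphi_f)^2d\phi=\pi(a/b+b/a)\ln(r_0/\xi_v)$. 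This is strictly more than $2\pi w_n^2\ln(r_0/\xi_v)$. Your ``last term is $O(1)$'' is therefore false for anisotropic cores, and so is the displayed formula read for arbitrary winding textures.

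Your own orthogonality computation gives the correct general picture. On the annulus, $\int|\nabla\varphi_f|^2d^2k=2\pi w_n^2\ln(r_0/\xi_v)+\int|\nabla h_n|^2d^2k$, so the stated expression is in general only a lower bound. It holds up to $O(1)$ exactly when $\int|\nabla h_n|^2$ stays bounded, i.e.\ when the vortex is isotropic at leading order, $\arg f\simeq w_n\phi+\mathrm{const}$. A conformal zero $f\simeq c\,(q_x\pm iq_y)^{|w_n|}$ is the standard example.

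You must state this as an explicit hypothesis and verify it in the application; the paper imposes it implicitly by taking $\varphi_f=w_n\theta$ outright. For the $d+id$ form factor with unit relative amplitude it holds: $f\simeq-\tfrac12(k_x-ik_y)^2$ at $\Gamma$ and $f\simeq\tfrac12(q_x+iq_y)^2$ at $(\pi,\pi)$. So $w_n^2=4$ per patch is legitimate, and your sign of the winding is immaterial. For $A(\cos k_x-\cos k_y)+iB\sin k_x\sin k_y$ with $A\neq B$, however, the same calculation gives a coefficient $4\pi(A/B+B/A)$ per patch instead of $2\pi\cdot4=8\pi$.
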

\begin{proof}
With $\varphi_f=w_n\theta_{\mathbf k-\mathbf k_n}$ one has
$|\nabla\varphi_f|^2=w_n^2/r^2$, so that
$\int_{\xi_v}^{L}(w_n^2/r^2)\,2\pi r\,dr=2\pi w_n^2\ln(L/\xi_v)$;
summing over windings gives the result.
\end{proof}

\begin{corollary}[Flat-band metric limit]
For $\xi_{\mathbf k}\approx\mathrm{const}$, the pairing-sector quantum metric
is driven entirely by the kernel texture. Whether the measured stiffness
probes this contribution directly depends on the stiffness kernel and on the
normal-state multiband metric. This is the qualified sense in which the
criterion connects with the kagome family~\cite{Jiang2021,Ortiz2020} and
magic-angle graphene~\cite{Cao2018} motivations of the introduction.
\end{corollary}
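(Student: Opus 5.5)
The statement to prove is the last one shown, the Corollary (Flat-band metric limit). Its mathematical content is this: when $\xi_{\mathbf k}$ is approximately constant, the pairing-sector quantum metric reduces to the texture term $g^{\mathrm{tex}}_{\mu\nu}=\tfrac14\partial_\mu\varphi_f\partial_\nu\varphi_f$, up to a piece controlled by $|f|$. The remaining sentences are interpretive, and I would treat them as qualifications rather than claims.

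I would start from the exact decomposition of Theorem~\ref{thm:split},
\[
g_{\mu\nu}=\tfrac14\big[\partial_\mu\theta\,\partial_\nu\theta+\sin^2\theta\,\partial_\mu\varphi_f\,\partial_\nu\varphi_f\big],\qquad \tan\theta=\Delta|f|/\xi .
\]
Differentiating $\theta$ gives
\[
\partial_\mu\theta=\frac{\xi\,\Delta\,\partial_\mu|f|-\Delta|f|\,\partial_\mu\xi}{\xi^2+\Delta^2|f|^2}.
\]
Setting $\partial_\mu\xi=0$ removes the second term of the numerator. This is exactly the normal-state-driven contribution that Theorem~\ref{thm:split} labels $g^{\mathrm{virt}}$, so it vanishes identically. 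In the strict flat-band case $\xi\to0$, $\theta\to\pi/2$ wherever $f\neq0$, so $\sin^2\theta\to1$ and $\partial_\mu\theta\to0$. What survives is precisely $g^{\mathrm{tex}}_{\mu\nu}$, the phase texture of $f$.

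By the preceding corollary, $\varphi_f$ is the relative phase of $(g_0,g_3)$ in $\mathrm{span}\{1,I\}$. That identifies the surviving term as kernel-driven, which is the phrase ``driven entirely by the kernel texture.''

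The main obstacle is the case of constant but nonzero $\xi$. There the $\xi\,\partial_\mu|f|$ term persists, so the claim holds only after one of two reductions. The first option is to work in the regime $\Delta|f|\gg|\xi|$, where this term is suppressed by $\xi/(\Delta|f|)$. The second is to read ``kernel texture'' as including the modulus $|f|$ as well as its phase. Both pieces come from the kernel form factor and not from the band, and I would adopt this second reading. Node points of $f$, where $\theta$ is ill defined, have measure zero, as noted in the earlier corollary, and can be excised.

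The final sentence of the corollary, that whether stiffness probes this contribution depends on the stiffness kernel and the multiband metric, is not proved. It is justified by pointing to Eq.~\eqref{eq:dswork}: the weight $\Delta^2|f|^2/E_{\mathbf k}$ and the separately kept Peotta--T\"orm\"a term show that the geometric factor alone does not fix $D_s$.
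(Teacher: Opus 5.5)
Your proposal is correct and takes the paper's own route. The corollary is read off from Theorem~\ref{thm:split}: setting $\partial_\mu\xi\approx0$ eliminates $g^{\mathrm{virt}}_{\mu\nu}$ and leaves only $g^{\mathrm{tex}}_{\mu\nu}=\tfrac14\partial_\mu\varphi_f\partial_\nu\varphi_f$ in the flat-band limit, and the stiffness sentence is a qualification tied to Eq.~\eqref{eq:dswork}, not a proved claim. Your extra treatment of constant but nonzero $\xi$ is a sound refinement the paper leaves implicit: there a residual $\xi\,\Delta\,\partial_\mu|f|$ term survives, which still comes from the kernel rather than the band.
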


\begin{remark}[Composite quantum geometry]
Theorem~\ref{thm:split} matches the decomposition of Ref.~\cite{Simon2026}:
under superconducting fitness, orbital uniformity, and the absence of
normal-state spin-flip terms, BdG quantum geometry separates into a
normal-state part and a pairing part. Our $g^{\mathrm{tex}}$ is the
phase-texture channel of their pairing quantum geometry. Quantum geometry
encoded in pair potentials is studied in Ref.~\cite{Daido2024}; broader
pair-quantum-geometry directions are left to dedicated reviews.
\end{remark}

The connection to Sec.~\ref{sec:spectral} completes a small dictionary of
pairing geometry: $R(\mathbf k)$ measures the structure of the pairing
matrix at a point $\mathbf k$, while $g_{\mu\nu}$ measures the angular rate
of change of that structure in momentum space. A chiral $d+id$ state has
$R=1/2$ everywhere (pure singlet), but the texture of its $(g_0,g_3)$
relative phase winds $\hat n$ and activates the phase-texture metric;
$R$ does not see this texture, while $g$ does. Its contribution to measured
stiffness is then set by the energy-weighted stiffness kernel.

\subsection{Null-texture criterion on the kagome flat band}
\label{sec:kagome}

Take the nearest-neighbor kagome model (the line graph of the honeycomb
lattice) with a gap $E_g$ opened by spin-orbit coupling or a sublattice
potential~\cite{Liang2017,Herzog2022,Huhtinen2022}, and an effective kernel
with a form factor, $\Delta(\mathbf k)=\Delta f(\mathbf k)$, where $f$ is
real ($s_\pm$ or $d$-wave) or complex ($d+id$). After projection to the
single band, the BdG three-vector is that of Sec.~\ref{sec:qgt},
and Theorem~\ref{thm:split} and Proposition~\ref{prop:winding} apply
directly.

\begin{theorem}[Null-texture criterion]
\label{thm:null}
\begin{equation}
g_{\mu\nu}(\mathbf k)=\frac{1}{4E_{\mathbf k}^2}\,(\partial_\mu\mathbf d)_\perp
\cdot(\partial_\nu\mathbf d)_\perp ,
\end{equation}
so $g_{\mu\nu}(\mathbf k)=0$ if and only if $\partial_\mu\mathbf d\parallel
\mathbf d$ (radial or self-similar textures). The unique source of the
pairing-sector quantum metric is the angular rate of change of $\hat n$ on
the order-parameter sphere; only radial textures give zero. The phase
gradient entering $g_{\mu\nu}$ is the gradient of the internal relative
phase $\varphi_f$ of the pairing kernel, not the global U(1) phase (a
gauge direction) and not the superflow velocity, whose response defines
the physical stiffness.
\end{theorem}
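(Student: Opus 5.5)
The plan is to start from the scalar-projection form of the metric in the Lemma preceding Theorem~\ref{thm:split}, $g_{\mu\nu}=\tfrac14\langle\partial_\mu\hat n\,\partial_\nu\hat n\rangle_0$. By the parity theorem and the definition of the geometric product, the scalar part of a product of two vectors is their inner product. Hence $g_{\mu\nu}=\tfrac14\,\partial_\mu\hat n\cdot\partial_\nu\hat n$, which is an ordinary Euclidean dot product in the Nambu copy of $Cl_{3,0}$. The task then reduces to expressing $\partial_\mu\hat n$ in terms of $\mathbf d$.

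Differentiating $\hat n=\mathbf d/E$ with $E=|\mathbf d|$, and using $\partial_\mu E=\hat n\cdot\partial_\mu\mathbf d$, gives
\begin{equation*}
\partial_\mu\hat n=\frac{1}{E}\bigl[\partial_\mu\mathbf d-\hat n(\hat n\cdot\partial_\mu\mathbf d)\bigr]=\frac{1}{E}(\partial_\mu\mathbf d)_\perp .
\end{equation*}
Here $(\cdot)_\perp$ denotes the component orthogonal to $\hat n$. In GA language this is the rejection $(\partial_\mu\mathbf d\wedge\hat n)\hat n$, which I would mention as an equivalent form. Substituting into the dot product gives $g_{\mu\nu}=\frac{1}{4E^2}(\partial_\mu\mathbf d)_\perp\cdot(\partial_\nu\mathbf d)_\perp$. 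This step needs $E_{\mathbf k}\neq0$, which holds on the fully gapped region assumed throughout Sec.~\ref{sec:qgt}.

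For the ``if and only if'', I read $g_{\mu\nu}=0$ as vanishing of the tensor, i.e.\ of all its components. Because $g$ is a Gram matrix of the vectors $(\partial_\mu\mathbf d)_\perp$, it is positive semidefinite. Its diagonal entries are $g_{\mu\mu}=\frac{1}{4E^2}|(\partial_\mu\mathbf d)_\perp|^2$. These vanish exactly when $(\partial_\mu\mathbf d)_\perp=0$, i.e.\ when $\partial_\mu\mathbf d\parallel\mathbf d$. By Cauchy--Schwarz the off-diagonal entries then vanish as well. Conversely, if $\partial_\mu\mathbf d\parallel\mathbf d$ for every $\mu$, all perpendicular parts vanish and so $g\equiv0$.

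The remaining physical statements are interpretive. The first is that the only source of the metric is the angular rate of change of $\hat n$; this is immediate because $g=\tfrac14|\partial\hat n|^2$ and only angular motion of $\hat n$ contributes. I would make it concrete through Theorem~\ref{thm:split}, whose $\theta$ and $\varphi_f$ terms are exactly the angular components. The claim about which phase enters is shown in two steps:
\begin{itemize}
\item A global U(1) rotation $\Delta\to e^{i\alpha}\Delta$ with constant $\alpha$ rotates $\hat n$ rigidly about the $\tau_z$ axis. This leaves $\partial_\mu\hat n\cdot\partial_\nu\hat n$ invariant, so the global phase drops out.
\item The superflow enters only through a twist of the pairing by the center-of-mass momentum. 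That twist is a separate response parameter, not a $\mathbf k$-derivative of $\hat n$ at fixed $\mathbf q$.
\end{itemize}

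The main obstacle is precisely this last point: separating the intrinsic $\mathbf k$-space phase gradient $\partial_\mu\varphi_f$ from a superflow-induced phase. I would handle it by stating explicitly that $g_{\mu\nu}$ is defined at zero pair momentum, so the superflow enters only through Eq.~\eqref{eq:dswork} and not through $\hat n(\mathbf k)$.
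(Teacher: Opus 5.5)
Your proposal is correct and follows essentially the same route as the paper: differentiate $\hat n=\mathbf d/E$ to get $\partial_\mu\hat n=(\partial_\mu\mathbf d)_\perp/E$, insert this into $g_{\mu\nu}=\tfrac14\,\partial_\mu\hat n\cdot\partial_\nu\hat n$, and note that the perpendicular parts vanish for all $\mu$ exactly when the texture is radial. Your Gram-matrix/Cauchy--Schwarz remark only makes explicit the paper's reading of $g_{\mu\nu}=0$ as ``vanishes for all $\mu$,'' and the clauses about the global phase and the superflow are interpretive statements in the paper as well.
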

\begin{proof}
Differentiate $\hat n=\mathbf d/E$:
$\partial_\mu\hat n=\partial_\mu\mathbf d/E-(\mathbf d\cdot\partial_\mu
\mathbf d)\mathbf d/E^3$; taking squared norms gives the formula, and the
perpendicular part $(\partial_\mu\mathbf d)_\perp=\partial_\mu\mathbf d-
(\hat n\cdot\partial_\mu\mathbf d)\hat n$ vanishes for all $\mu$ exactly
when the texture is radial.
\end{proof}

For CsV$_3$Sb$_5$ we take $T_c\simeq3$ K, hence $\Delta\simeq0.45$ meV from
BCS scaling~\cite{BCS1957}. The separation $E_g$ is \emph{not} identified
with a measured CsV$_3$Sb$_5$ gap; it is an assumed splitting in an idealized
flat-band model, chosen only to expose the regime $\kappa\equiv E_g/\Delta\gg1$.
The material motivation is the kagome family~\cite{Jiang2021,Ortiz2020}, not a
claim that this $E_g$ is the measured gap of CsV$_3$Sb$_5$. The diagnostic use
is comparative only: modulate $\Delta$ by field or pressure and ask whether the
stiffness response is consistent with a radial form factor or with an activated
phase texture, after the normal-state multiband contribution has been accounted
for. Nodeless superfluid-density/$\mu$SR data exist in Ref.~\cite{Duan2021}; we
use them as a stiffness constraint, not as evidence for nodes. Existing Kerr
signals in CsV$_3$Sb$_5$ are associated with the charge-ordered state rather
than superconductivity~\cite{Saykin2023,Xu2022,Farhang2023}, and the observed
anomalous thermal Hall effect is consistent with an extrinsic impurity
mechanism~\cite{Yoshida2025}; intrinsic chirality of the superconducting state
remains unconfirmed. Triplet chiral channels, whose texture is carried by the
$\hat{\mathbf d}$ vector of grades $1\oplus2$, are covered by the same
quantum-geometry criterion.

\paragraph{Scope of the flat-band estimate.}
The nearest-neighbor single-band flat band is an idealized limit. A constant
texture floor is not claimed: once the standard stiffness kernel and the
$|f|$-weighted texture integral are used, the pairing-geometric contribution is
model- and cutoff-dependent, and in a real material the normal-state multiband
Peotta--T\"orm\"a term, disorder, charge order, and three-dimensional coupling
can dominate or mask it. We therefore do not assign a numerical vorticity floor,
a field-slope ratio, or an $E_g$-tuning coefficient to CsV$_3$Sb$_5$. The only
statement retained is the quantum-geometry criterion itself: radial textures do
not activate $g^{\mathrm{tex}}$, whereas phase-winding textures do.

\subsection{UTe$_2$: two-phase competition}
\label{sec:ute2}

UTe$_2$ provides an extreme instance of multiple coexisting superconducting
phases. The experimental anchors are these. $T_{c1}\simeq1.6$ K defines the
low-field phase SC1. For $\mathbf B\parallel b$ superconductivity reenters
as SC2 with $T_{c2}\simeq2$ K~\cite{Ran2019,Ran2019NP,Aoki2019}. In the low-field phase SC1, single-crystal
NMR finds the Knight shift essentially unchanged through $T_c$ for
$\mathbf B\parallel b$ and $c$ (only a $\sim0.1\%$ low-field drop of $K_b$,
$K_c$ unchanged at 5.5 T)~\cite{Nakamine2019,Nakamine2021}; for the $a$ axis,
by contrast, a marked drop of $K_a$ in higher-$T_c$ samples has been
reported~\cite{Fujibayashi2022,Matsumura2023}, and the directional assignment
remains under debate. The high-field SC2 behavior is an extrapolation, since NMR data there are not
yet available.
Finally, $H_{c2}(\mathbf B\parallel b)>65$ T exceeds the Pauli estimate
$H_P=1.86\,T_c\simeq3.7$ T by a factor of $17$ or more~\cite{Ran2019NP}.

\paragraph{Singlet leakage bound.}
In the simplest Yosida-function estimate for unitary states, for
triplet-sector kernels with a singlet admixture $g_0$ (grade 0) the zero-field
spin susceptibility is approximated by the triplet weight:
\begin{equation}
\frac{\chi_s(0)}{\chi_n}\simeq\frac{|g_{\mathrm t}|^2}{|g_{\mathrm t}|^2+|g_0|^2}
\;\Rightarrow\;
\frac{|g_0|^2}{|g_{\mathrm t}|^2}\lesssim\frac{\delta K}{K_n}.
\end{equation}
Within this unitary, isotropic-scattering estimate, with $\delta K/K_n\sim1\%$
(a conservative value covering the reported $\sim0.1\%$ drop and its
experimental uncertainty) the singlet admixture is bounded by
$|g_{0}|/|g_{\mathrm t}|\lesssim\sqrt{\delta K/K_{n}}\sim10\%$ in
amplitude, i.e.\ a singlet \emph{weight}
$|g_{0}|^{2}/(|g_{\mathrm t}|^{2}+|g_{0}|^{2})\lesssim1\%$: any
microscopic theory of UTe$_2$ must respect this.

\paragraph{Equal-spin criterion.}
Immunity to Pauli depairing requires equal-spin pairing relative to the
field quantization axis, which in the standard $\mathbf d$-vector convention
means $\mathbf d\perp\mathbf B$: for $\mathbf B\parallel\hat z$ the pairs are
$|\!\uparrow\uparrow\rangle$ and $|\!\downarrow\downarrow\rangle$ with
$\mathbf d$ in the $x$--$y$ plane, while the $S_z=0$ component
($\mathbf d\parallel\mathbf B$) remains Pauli limited~\cite{Clogston1962,Chandrasekhar1962}.
Since the Zeeman coupling is the grade-0 scalar
$\langle\mathbf B\,\mathbf S\rangle_0$ of two grade-1 spin-algebra objects
(Sec.~\ref{sec:fields}), the criterion reads: the grade-1 texture locks the
spins, not the $\mathbf d$ vector, to the field direction. The factor of $17$
is then consistent with, and motivates, an equal-spin interpretation with
$\mathbf d\perp b$ for SC2---a consistency criterion rather than a
derivation, open to penetration-depth or ultrasound anisotropy. SC1 lacks
this locking in the minimal picture, and its $H_{c2}$ would be expected to
sit closer to the Pauli estimate.

\paragraph{Two-phase Ginzburg--Landau theory.}
A minimal model takes $\Delta_1$ for SC1, a real triplet with helical
texture and a pure grade-1 kernel (time-reversal preserving), and
$\Delta_2$ for SC2, stabilized by the field with a grade-1 principal axis
plus a grade-2 admixture. Grade arithmetic and centrosymmetry ($\mathrm{Immm}$)
allow the free energy
\begin{equation}
\begin{split}
F=\alpha_1|\Delta_1|^2+\alpha_2|\Delta_2|^2-\gamma B^2|\Delta_2|^2\\
+\beta_1|\Delta_1|^4+\beta_2|\Delta_2|^4+\beta_{12}|\Delta_1|^2|\Delta_2|^2 ,
\end{split}
\end{equation}
with $\gamma>0$ (phenomenological coefficients). The minimal model
retains only quadratic and quartic
couplings: a linear-in-field mixing
$g_mB(\Delta_1\Delta_2^\ast+\mathrm{c.c.})$ is allowed by the $\mathrm{Immm}$
symmetry---even under inversion (a product of two odd-parity order
parameters times an even axial field) and odd under time reversal only
through $B$---and is omitted here for simplicity; its inclusion would
couple the two condensates and deform the SC1--SC2 boundary. The
first-order character of the boundary follows from the sign of
$\beta_{12}^2-4\beta_1\beta_2$ in this truncation: for
$\beta_{12}^2-4\beta_1\beta_2>0$ the two-component GL functional has a
first-order SC1--SC2 boundary; the opposite sign gives a continuous
bicritical/tetracritical structure. Within this truncation, the reentrance
condition $\gamma B^2=\alpha_2(T)-\alpha_1(T)$ gives a parabolic boundary
$H^\ast\propto\sqrt{T-T^\ast}$ near the crossing temperature.

\paragraph{Stress.}
Uniaxial stress experiments give $T_c$ increasing under $[001]$ stress,
decreasing under $[100]/[110]$, with no shear-induced splitting of the
transition, which supports a single-component ambient-pressure order
parameter~\cite{Girod2022}. Within the minimal GL model, the corresponding
model-level prediction is a stress switch of the SC1--SC2 boundary
$H^\ast(T)$, if stress tunes $\gamma$ through the shape factor, together
with approximate stress independence of the singlet-leakage bound. The
ambient-pressure single transition is consistent with this picture: our two
fields target the field-stabilized phases. The high-pressure SC3 phase lies
outside the minimal model.

\section{Phase structure of the order parameter and the multiple roles of
the pairing glue}
\label{sec:phase}

This section uses the complex-channel structure and the scalar-projection
principle to organize several qualitative applications: the amplitude-phase
structure of the order parameter, the pseudogap as phase fluctuation, and
the common origin of competing orders and self-energies. The algebraic part
is only the real/imaginary decomposition and the channel structure. The
phase-stiffness scale, pseudogap dynamics, and non-Fermi-liquid behavior
are supplied by the standard microscopic models cited below, not derived
from the algebra itself.

\subsection{Amplitude-phase decomposition and Landau expansion}

The complex-channel theorem gives the amplitude-phase decomposition of the
order parameter:
$\Delta=|\Delta|e^{I\varphi}=|\Delta|\cos\varphi+I|\Delta|\sin\varphi$,
a rotation within the $(g_0,g_3)$ plane at kernel level and within the
$(e_2,e_{31})$ plane at matrix level (Theorem~\ref{thm:channels} and
Corollary~\ref{cor:matrix-level}). The free energy is a gauge-invariant
scalar and is unchanged under the rotor, $F(e^{I\varphi}\Delta)=F(\Delta)$.
Complex conjugation $I\to-I$ on the center $\{1,I\}$ is implemented by the
antiunitary time-reversal operation, not by spatial inversion; the basic
gauge-invariant scalar is $|\Delta|^2=\bar\Delta\Delta$ (with $\bar\Delta$
the time-reversal/Hermitian conjugate), and
the Landau expansion contains only even powers:
\begin{equation}
F=\alpha|\Delta|^2+\beta|\Delta|^4+\cdots .
\end{equation}
The even structure of the Landau expansion receives an algebraic
statement: it is the direct consequence of central-rotor symmetry.

\subsection{Pseudogap as phase fluctuation}

The amplitude-phase decomposition gives a qualitative picture of the
pseudogap. In three-dimensional bulk materials, $T_c$ is controlled by the
condensation of the amplitude; in quasi-two-dimensional systems with low
phase stiffness, a region can exist above $T_c$ (roughly $T_c<T<T^\ast$) in
which the phase is disordered while the amplitude is nonzero: the two components $(g_0,g_3)$
each have magnitude, but their relative phase $\varphi$ loses long-range
correlation, the gap magnitude $|\Delta|$ persists while the thermodynamic
order disappears. This is the phase-fluctuation picture of the
pseudogap~\cite{Emery1995,Lee2006}. The contribution of the GA language is
transparency of the structural decomposition: amplitude and phase are the
two directions of the kernel singlet plane, and the pseudogap regime
corresponds to the state ``$(g_0,g_3)$ each nonzero, relative phase
disordered.'' The picture is qualitative: the numerical value of the phase
stiffness is fixed by the microscopic model.

\subsection{Competing orders: common origin}

Iron-based superconductors provide an extreme instance of this
competing-order picture: in pnictides such as BaFe$_2$As$_2$ the parent is an
antiferromagnetic metal, whereas FeSe is a nonmagnetic nematic parent;
superconductivity appears where the magnetic/nematic fluctuation spectrum is
strong, bordering but not overlapping static order~\cite{Fang2011}. The same glue appears in both the
particle-hole and the particle-particle channels: the antiferromagnetic spin
fluctuation enters both the susceptibility $\chi(\mathbf q)$ (particle-hole
channel, driving the SDW instability) and, through Eq.~\eqref{eq:hpair}, the
pairing kernel (particle-particle channel). The two channels share the same
propagator, so superconductivity and magnetic order are adjacent on the
phase diagram, and the same peak structure of the fluctuation spectrum fixes
the strength and momentum structure of both. The grade decomposition
accommodates this naturally: the particle-hole vertex and the
particle-particle vertex are bilinears of the same grade classification,
differing only in the way the operators are paired ($\langle\psi^\dagger
\psi\rangle$ versus $\langle\psi\psi\rangle$).

\subsection{Self-energy and pairing kernel: common origin and non-Fermi
liquids}

The same glue enters the self-energy and the pairing kernel at the same
time: spin fluctuations give both the non-Fermi-liquid correction
$\Sigma(\mathbf k,\omega)$ (near quantum critical points) and the pairing
interaction~\cite{Monthoux1991,Aoki2019}. The self-energy correction and the
pairing strength are correlated: both are enhanced in strong-fluctuation
regions. This common origin is standard content of the spin-fluctuation
framework; in the GA formulation, $\Sigma$ and $V$ are scalar projections of
the same vertex structure (the grade-1 vertex and $\chi(\mathbf q)$) in
different channels. At a structural level, Heisenberg exchange and BCS
pairing are the same operation, a scalar projection of a two-fermion
bilinear, acting in real and momentum space respectively; and the neutron
spin resonance near $2\Delta$ is the particle-hole projection of the same
glue. Table~\ref{tab:nfl} summarizes a heuristic correspondence of normal states
and pairing states in standard examples: the glue grade provides a common
bookkeeping label, while the detailed normal-state behavior and pairing
symmetry are fixed by their respective dynamical equations.

\begin{table*}[t]
\caption{\label{tab:nfl}Correspondence of normal states and pairing states.}
\begin{ruledtabular}
{\small\setlength{\tabcolsep}{4pt}
\begin{tabular}{llll}
System & Glue grade & Normal state & Pairing\\
\hline
Al, Pb & 0 (scalar) & Fermi liquid & $s$-wave\\
Cuprates & 1 (AF vector) & strange metal & $d$-wave\\
Iron pnictides & $0$ + form factor & non-Fermi liquid & $s_\pm$\\
$^3$He & 1 (FM vector) & near-Fermi liquid & $p$-wave triplet\\
UTe$_2$ & multipole rank $l=3$ (candidate; GA glue grade 0) & strange metal & multipolar\\
\end{tabular}
}
\end{ruledtabular}
\end{table*}

\section{External fields: algebraic selection rules}
\label{sec:fields}

This section discusses the modes of channel mixing allowed by the adjoint
(rotor) action of the Clifford algebra when an external field couples to
the pairing kernel. The first-level representation of the field and its
physical coupling are inputs; the rotor commutator then supplies the
second-level algebraic classification. These results are algebraically
classificatory: they state which couplings between grade components the
rotor commutator permits, but provide no quantitative
estimates of coupling strengths, temperature scales, or observable
magnitudes, which depend on the Fermi-surface structure, coupling constants,
and competing-order dynamics of the specific material.
Table~\ref{tab:fields} classifies the standard fields by their $Cl_{3,0}$
object.

\begin{table*}[t]
\caption{\label{tab:fields}$Cl_{3,0}$ classification of external fields.}
\begin{ruledtabular}
{\small\setlength{\tabcolsep}{3pt}
\begin{tabular}{lll}
Field & Example & $Cl_{3,0}$ object\\
\hline
Scalar & chemical potential, pressure & grade 0\\
Polar vector & electric field $\mathbf E$ & grade 1\\
Axial vector & magnetic field $\mathbf B$ & grade 1 (Zeeman coupling
$\langle\mathbf B\mathbf S\rangle_0$; dual $I\mathbf B$ is grade 2)\\
Pseudoscalar & circular light & grade 3\\
Symmetric tensor & uniaxial stress & none (form factor)\\
\end{tabular}
}
\end{ruledtabular}
\end{table*}

The output sector of any mixing is read at the kernel level: grades
$0\oplus3$ for the singlet, $1\oplus2$ for the triplet. Two principles
organize the classification. First, external fields enter the theory
through the adjoint (rotor) action alone: right multiplication by the fixed
blade $e_{31}$ is reserved for the Fierz dictionary map $\Phi$ of
Definition~\ref{def:fierz}, which is a change of coordinates, not a
physical coupling. Once the field representation and its coupling to the
kernel are specified, the rotor formula of Eq.~\eqref{eq:rotor} below is
the kernel-level adjoint channel-mixing rule of this framework; its
vanishing on scalar kernels is a consistency condition, not a limitation.
Second, a nonzero commutator $[\mathbf F,G]$ is only a \emph{necessary}
algebraic condition for mixing: the physical matrix element is set by the
actual coupling term in the Hamiltonian and by the symmetry selection
rules, and may vanish even where the commutator does not. Since the
singlet kernel plane $\mathrm{span}\{1, I\}$ is the
center of $Cl_{3,0}$ (the central-complex-structure theorem of
Sec.~\ref{sec:math}), every adjoint action---and hence every rotor---is
trivial on a pure singlet kernel, irrespective of the grade of the field;
all singlet-channel field responses are mediated by the physical coupling
through the gap equation, not by any action on $G$ itself. A grade-0
field (such as the chemical potential) commutes with everything and merely
shifts the coupling constant; a grade-3 field $A\,I$ is likewise central
and rotor-trivial. For reference, the grade arithmetic of the plain
product,
\begin{equation}
F_r\,G_s\rightarrow\text{grades }|r-s|,\;|r-s|+2,\ldots,\;\min(r+s,\,6-r-s),
\label{eq:gradearith}
\end{equation}
classifies the formal output grades of a product (outer-product
grading, cf.~Ref.~\cite{Doran2003}); it is the commutator, not the
product, that governs the physical mixing.

\subsection{Magnetic field (axial vector): no bare singlet--triplet
mixing}
\label{sec:zeeman}

The Zeeman coupling is $\mathbf B\cdot\boldsymbol\sigma$, a dot
product of two axial vectors, i.e.\ a grade-0 scalar formed from two
grade-1 objects,
\begin{equation}
H_Z=-g\mu_B\langle\mathbf B\,\mathbf S\rangle_0,\qquad
\mathbf B=B_ie_i,\quad
\mathbf S=\tfrac12\,\bigl(\psi^\dagger\sigma_i\psi\bigr)e_i,
\label{eq:zeeman}
\end{equation}
consistent with the magnon vertex of Table~\ref{tab:boson}. Here ``grade 1''
refers to the spin algebra under $e_i\leftrightarrow\sigma_i$; in real-space
geometry an axial vector is the Hodge bivector $I\mathbf B$. The
Hodge-dual bivector $I\mathbf B$ is not the object that couples to the
spin density; applying the same duality to the spin gives
$\langle(I\mathbf B)(I\mathbf S)\rangle_0=-\mathbf B\cdot\mathbf S$,
the same coupling, not a new one.

Grade arithmetic therefore does \emph{not} generate singlet--triplet
mixing at linear order in $\mathbf B$. This agrees with the operator
selection rule: $\mathbf B\cdot\mathbf S$ has $\Delta S=0$, and
$\langle s|\mathbf B\cdot\mathbf S|t,m{=}0\rangle=0$ identically,
since $S_z|t,0\rangle=0$ while $S_\pm$ change $m$ at fixed $S$.
Consistently, the rotor action on a scalar kernel is trivial,
$e^{-\mathbf F/2}g_0e^{+\mathbf F/2}=g_0$, so
Eq.~\eqref{eq:rotor} below gives no kernel-level channel mixing for a pure
singlet. Singlet--triplet mixing is mediated by antisymmetric spin-orbit
coupling (or broken inversion)~\cite{Frigeri2004}, which supplies the
non-commuting grade structure at the matrix level; the field then selects
and reorients the mixed components, the commutator $[\mathbf F,G]$
supplying only a necessary algebraic condition, not the physical matrix
element. Within the triplet sector itself, the field does mix the
$m$-components: for triplet kernels the commutator
$[\,I\mathbf B,G_t\,]=I[\mathbf B,G_t]$ is nonzero and redistributes
weight among the $m=0,\pm1$ directions.

\subsection{Uniaxial stress}
\label{sec:stress}

Stress does not carry a grade and produces no selection-rule mixing of its
own. It enters through the deformation potential, modifying the band
dispersion $\xi_{\mathbf k}$ and the shape factors $f(\mathbf k)$, and near a
singlet-triplet-mixing boundary in non-centrosymmetric systems it can
modulate the strength of the mixing. This is a momentum-structure effect,
not a grade mixing.

\subsection{Circularly polarized light (grade 3): no kernel-level
action}
\label{sec:light}

A pseudoscalar drive $\mathbf F=A(t)\,I$ commutes with every multivector,
so the rotor action is trivial,
\begin{equation}
e^{-A(t)I/2}\,G\,e^{+A(t)I/2}=G,
\end{equation}
for \emph{any} kernel. Light therefore does not rotate the pair $(g_0,g_3)$
at the kernel level; the global phase rotation of the condensate is a
transformation of the fermion field $\psi\to e^{I\varphi(t)/2}\psi$
(the gauge remark of Sec.~\ref{sec:math}), not of the kernel. A transient
$d+id$ chirality under drive is a property of the driven normal state and
requires a multiorbital structure; in this framework it is computed from
the Floquet-modified gap equation~\cite{Mitrano2016,Oka2019} rather than
from any rotor on $G$. For
triplet kernels the same conclusion holds at the linear level: the
pseudoscalar drive is rotor-trivial, and any $m$-mixing under light arises
from the same driven-band mechanism.

\subsection{External fields as Clifford rotors and time-dependent dynamics}

The most systematic treatment incorporates the external field as a
Clifford rotor~\cite{Doran2003},
\begin{equation}
\label{eq:rotor}
\begin{split}
R(\mathbf F)=e^{-\mathbf F/2},\qquad\tilde R(\mathbf F)=e^{+\mathbf F/2},\\
\qquad G_{\mathrm{eff}}=R(\mathbf F)\,G\,\tilde R(\mathbf F),
\end{split}
\end{equation}
where $\mathbf F$ is the $Cl_{3,0}$ representation of the external field
(normalized and including the coupling strength); for bivector fields this
is the standard rotor, while for fields of other grades the same formula is
a similarity transformation rather than a rotor in the strict sense.
The weak-field expansion gives
\begin{equation}
\label{eq:rotor-weak}
G_{\mathrm{eff}}\approx G+\frac12\big(G\mathbf F-\mathbf FG\big)+O(\mathbf F^2),
\end{equation}
and the antisymmetrized product in brackets,
$G\mathbf F-\mathbf FG=-[\mathbf F,G]$, is the algebraic mixing
coefficient: a closed kernel-level formula once the field representation is
specified. Whether that algebraically allowed mixing is physically activated
remains model- and material-dependent. Being a commutator, it vanishes
identically on the center (scalar and pseudoscalar kernels).
Time-dependent external fields then drive the evolution of
the grade structure of $G(t)$ through the same formula, with
\begin{equation}
H_{\mathrm{pair}}(t)=\sum_{\mathbf k,\mathbf k'}\langle\psi^\dagger(t)
\psi^\dagger(t)\,G(\mathbf k,\mathbf k',t)\,\psi(t)\psi(t)\rangle_{0,\mathrm{spin}} ,
\end{equation}
The standard dynamical routes (Heisenberg equations, imaginary- and
real-time path integrals, Dyson-equation iterations~\cite{Peskin1995}) apply
directly; the added value is that the self-energy decomposes naturally into
eight grade components in spin space (rather than the four of the
traditional Hermitian restriction), retaining the dynamical evolution of the
anti-Hermitian components. Equation~\eqref{eq:rotor-weak} applies to
kernels with non-scalar content; for a pure singlet kernel $G=g_0$ the
commutator vanishes and the rotor acts trivially,
$G_{\mathrm{eff}}=G$. Optically induced superconductivity is then analyzed
through the driven gap equation of Sec.~\ref{sec:light}: the grade
decomposition exposes which kernel components the drive can and cannot
activate, but the activation itself is a property of the driven fermion
problem, not of a rotor on $G$.

\paragraph{Magnetic field (axial vector).} No bare singlet--triplet
mixing (selection rule $\Delta S=0$; rotor trivial on scalar kernels);
physical mixing requires antisymmetric spin-orbit coupling; within
triplets, mixes the $m$-components.

\paragraph{Uniaxial stress (no grade).} Enters through the deformation
potential, modifying $\xi_{\mathbf k}$ and shape factors; near a
singlet-triplet boundary it can modulate the mixing strength.

\paragraph{Circularly polarized light (grade 3).} Rotor-trivial at the
kernel level; the global phase is a fermion-field transformation;
transient chirality requires a multiorbital structure and the Floquet
route.

\section{Experimental diagnostic protocol}
\label{sec:diagnosis}

This section builds an algebraic diagnostic protocol that organizes
hypotheses about the effective pairing kernel from experimental
observations. The idea is that different experimental probes are sensitive
to different projected components; by cross-comparing multiple probes one
can constrain plausible kernel-level structures, subject to probe modeling.
Section~\ref{sec:tomography} then formulates a conditional reconstruction
algorithm.

\subsection{Diagnostic logic}

The effective kernel $G\in Cl_{3,0}$ has eight real components, and a single
probe is usually sensitive only to a restricted set of projected directions.
In standard response models, ARPES measures the momentum structure of the
gap function (and, with spin resolution, can constrain the sectors); the NMR
Knight shift is sensitive to spin polarization, so a singlet
($\mathrm{span}\{1, I\}$) is commonly associated with a drop below $T_c$,
while a triplet (grades $1\oplus2$) can leave it unchanged or rising; the
polar Kerr effect is sensitive to time-reversal breaking, which may be
carried by a relative phase between kernel singlet-plane components; an
upper critical field $H_{c2}$ far above the Pauli limit is consistent with
equal-spin triplet texture or strong spin-orbit locking, but is not by
itself proof of grade-3 content; ultrasound probes lattice-pairing coupling
and shape-factor modulation by external fields; and thermal transport or
specific heat constrains node structure and momentum texture~\cite{Sigrist1991}.
The Pauli scale uses the standard Clogston--Chandrasekhar
estimate~\cite{Clogston1962,Chandrasekhar1962}. These mappings are not unique material
diagnoses; each entry below is a kernel-level hypothesis that requires a
probe-response model.

\begin{table*}[t]
\caption{\label{tab:diagnosis}Algebraic diagnostic protocol: kernel-level hypotheses.}
\begin{ruledtabular}
\begin{tabular}{@{}p{0.38\textwidth}p{0.54\textwidth}@{}}
Observation & Kernel-level diagnosis\\
\hline
Kerr + full gap + no magnetic order & $g_0,g_3$ both active, relative phase
$\pm\pi/2$ ($d+id$-type kernel)\\
Knight shift drops & singlet ($\mathrm{span}\{1, I\}$ dominant)\\
Knight shift flat or rising & triplet (grades $1\oplus2$ dominant)\\
Multiple superconducting transitions & multi-grade activation, competing
orders\\
$H_{c2}$ far above Pauli & equal-spin triplet texture ($\mathbf d\perp\mathbf B$)
or strong spin-orbit locking; not by itself proof of grade-3 content\\
Triplet features under stress & near singlet-triplet mixing boundary (SOC),
stress via shape factor\\
Transient chirality under light & driven multiorbital bands ($g_0,g_3$
both active in the driven kernel)\\
\end{tabular}
\end{ruledtabular}
\end{table*}

\subsection{Example: symmetry diagnosis of an unconventional superconductor}

For an unconventional superconductor with antiferromagnetic spin
fluctuations, the glue is the spin channel (grade-1 vertex), and after the
Fierz rearrangement it enters the singlet sector, with the $\chi$ peak near
$\mathbf Q=(\pi,\pi)$. In the idealized, strictly $\delta$-peaked separable
limit, the block decomposition selects the shift-odd condition
$\Delta(\mathbf k+\mathbf Q)=-\Delta(\mathbf k)$; the standard
square-lattice realization is $d_{x^2-y^2}$
($\Delta(\mathbf k)\propto\cos k_x-\cos k_y$), with the extended-$s$
degeneracy lifted only by subleading kernel structure (Sec.~\ref{sec:block})
~\cite{Scalapino1986}. The model-level response expectations then follow in
the usual way: line nodes of $d_{x^2-y^2}$ pairing give a $T^2$ specific
heat and a linear low-temperature penetration depth in the clean nodal
limit, while its sign change has been tested by phase-sensitive SQUID
interferometry and corner-junction/tricrystal
experiments~\cite{Wollman1993,Tsuei2000}. The framework
therefore organizes the grade structure and plausible responses before the
full gap equation is solved; quantitative confirmation still requires that
equation and the material band structure.

\subsection{Example: RuO$_2$ as an exploratory extrapolation}

Rutile RuO$_2$ was long considered a Pauli-paramagnetic metal and has
recently been proposed as a candidate altermagnet~\cite{Smejkal2022}, a claim that has
generated debate. We offer this only as an exploratory extrapolation of the
framework's collective-channel bookkeeping, with an explicit methodological
statement. The diagnostic object of the framework is the projected
collective fluctuation channel; extending it to static magnetic order is
exploratory, and the following criteria are heuristic. From the electronic
structure of RuO$_2$: the 4$d$ orbitals of Ru are relatively extended, so it
is hard to form stable local moments and the vector-glue channel (grade 1)
lacks effective fluctuations; the octahedral crystal field splits the
$t_{2g}$ orbitals fully, the orbital degeneracy near the Fermi level is low,
and the bivector-glue channel (grade 2) lacks effective fluctuations; the
rutile structure has an inversion center, so the pseudoscalar channel
(grade 3) is closed; and the electron-phonon coupling is weak, so the
scalar-glue channel (grade 0) is inactive. With all four grade channels
``quiet,'' the heuristic grade accounting suggests electronic behavior
dominated by the Fermi-surface topology and weak scattering, with no
complex magnetic order and transport following semiclassical scaling. This
agrees with the experiments reported in Ref.~\cite{Peng2025} (no magnetic transition below 400 K,
transport consistent with semiclassical simulations). This caveat is
essential: this is postdiction, not prediction, and the \emph{a priori}
criteria that would distinguish RuO$_2$ from a genuine altermagnet, as well
as the boundary ``no fluctuations $\neq$ no static order,'' lie outside the
scope of this paper.

\subsection{Tomographic reconstruction: algorithm and condition number}
\label{sec:tomography}

The CT-scan idea of Sec.~\ref{sec:diagnosis} is here developed into
a conditional linear-response algorithm. With kernel components $g_A$ and
probe multivectors
$M_\alpha\in Cl_{3,0}$ (the ``measurement directions'' of
Sec.~\ref{sec:diagnosis}), linear
response gives
\begin{equation}
\begin{split}
m_\alpha=\chi_0\langle GM_\alpha\rangle_0+\delta m_\alpha
=\sum_AP_{\alpha A}g_A+\delta m_\alpha, \\
P_{\alpha A}=\chi_0\,\varepsilon_A(M_\alpha)_A .
\end{split}
\end{equation}
where $\varepsilon_A$ is the blade signature of Lemma~\ref{lem:ortho}
(the derivation is given in Supplemental Material Sec.~\ref{sec:s4}). For $N=8$ and full
rank, $\hat g=P^{-1}m$, and

\begin{equation}
\frac{\|\delta g\|_2}{\|g\|_2}\le\kappa_2(P)\frac{\|\delta m\|_2}{\|m\|_2},\qquad
\kappa_2(P)=\frac{\sigma_{\max}(P)}{\sigma_{\min}(P)}.
\end{equation}
The feasibility criterion is $\kappa_2(P)\lesssim\mathrm{SNR}\times$ (target
resolution); for $N>8$ the pseudoinverse is used and the noise averaging
makes the error scale as $\kappa_2/\sqrt N$.

\textbf{Structure theorem.} Write $P=P_0+E$ with $P_0$ the grade-resolved
ideal and $E$ the leakage. By Weyl's inequality,
\begin{equation}
\sigma_{\min}(P)\geq\sigma_{\min}(P_0)-\|E\|_2,
\end{equation}
so the feasibility condition is $\|E\|_2<\sigma_{\min}(P_0)$: the condition
number is controlled by the largest leakage and the weakest covered
direction of the probe set, not by the average sensitivity.

\textbf{Toy model.} Take the kernel in the $s$-wave constant limit, and
eight probes coupling to the Hermitian sector $X=\{1,e_1,e_2,e_3\}$ (Knight
shift and gap amplitude) and the anti-Hermitian sector
$Y=\{I,e_{23},e_{31},e_{12}\}$ (relaxation processes and phase response),
with symmetric leakage $\varepsilon$ (spin-orbit coupling and dissipation):
\begin{equation}
\begin{split}
P=\begin{pmatrix}\mathbf1_4&\varepsilon\mathbf1_4\\ \varepsilon\mathbf1_4&\mathbf1_4\end{pmatrix},
\kappa_2(P)=\frac{1+\varepsilon}{1-\varepsilon},\\
\mathrm{spec}(P)=\{1+\varepsilon\,(\times4),\,1-\varepsilon\,(\times4)\}. \\
\end{split}
\end{equation}
The feasibility region follows from
$\varepsilon\le(\eta\,\mathrm{SNR}-1)/(\eta\,\mathrm{SNR}+1)$; for a target
relative error $\eta=10^{-2}$ and SNR $=10^3$, reconstruction to percent
accuracy survives leakage up to $\varepsilon\approx0.82$. In the limit
$\varepsilon\to1$ (probes completely grade blind) $\kappa_2$ diverges, which
gives a quantitative statement of why at least one probe sensitive to each
Hodge pair of blades is needed.

\textbf{Rotor optimization.} Using the rotor language of
Sec.~\ref{sec:fields}, $M_\alpha(\theta)=e^{-\theta F/2}M_\alpha
e^{\theta F/2}$: the first-order sector mixing can be removed by the rotor
angle ($\varepsilon_{\mathrm{eff}}=O(\varepsilon^2)$), and equiangular
designs make $\{M_\alpha\}$ approach an orthonormal set, $\kappa_2\to1$.

\textbf{Boundaries.} The linearization holds in the linear-response regime
(the growth of $\chi_0$ near $T_c$ can improve the SNR, but the same
critical fluctuations narrow the linear-response window, so the optimum is
near but not inside the critical region); for momentum-resolved
kernels $P$ is block diagonal and the analysis applies block by block; far
from the linear regime iterative algorithms are needed, whose convergence is
still set by $\kappa_2$. This section is an algorithm demonstration, not an
experimental protocol; modeling the probe multivectors $M_\alpha$ for
concrete instruments is application-side work.

\section{Dimensional extension and the Weyl-semimetal sign duality}
\label{sec:extension}

The logic of Theorems~\ref{thm:diag}--\ref{thm:block}, representation
decomposition, Casimir sign rules, trace form, and scalar projection, does
not depend on the specific form of $\mathfrak{su}(2)$. The same two-level
rule applies: first choose the projected symmetry space, here a Nambu/Dirac
block or a higher-spin/orbital block, and only then use geometric products
and grade projection. The extension proceeds along two paths: the
representation ladder enlarges the group and the representation, while
dimensional extension changes the Clifford algebra itself.

\subsection{Representation ladder}

Level L1: two spin-$S$ fermions couple to $J=0,1,\ldots,2S$, and the
Casimir eigenvalues $\mathbf S_1\cdot\mathbf S_2=[J(J+1)-2S(S+1)]/2$
increase monotonically with $J$; the sign rule is form-invariant:
antiferromagnetic-type glues favor the lowest $J$ and ferromagnetic ones the
highest. For $S=3/2$ the channels are $J=0,1,2,3$ with eigenvalues
$-15/4,-11/4,-3/4,+9/4$, and the $J=2$ quintet (even parity, five
components) is a channel that does not exist in the spin-$1/2$ world,
proposed as a candidate pairing state in half-Heusler superconductors such
as YPtBi~\cite{Brydon2016}.

Level L2: with $N$ orbitals the pairing matrix is a $2N\times2N$ complex
matrix, and the product of the signs of the spin and orbital Casimirs
decides whether a channel is attractive or repulsive. For $N=2$ the
even-parity sector contains an even-parity triplet (spin triplet times
orbital antisymmetric) forbidden in the single-band world; its attraction
requires the glue to carry opposite signs in the two sectors. This supplies
a restrictive algebraic coupling condition, while actual occurrence remains
a model- and material-level question.

Level L3: for SU($N$)-symmetric interactions~\cite{Cazalilla2009,Gorshkov2010}
the exchange operator has eigenvalues $-(N+1)/2N$ (antisymmetric
representation) and $+(N-1)/2N$ (symmetric representation), reducing at
$N=2$ to the sign rule of $\mathfrak{su}(2)$. The three four-component
fermion systems, two orbitals times spin-$1/2$, spin-$3/2$, and SU(4), share
one 16-dimensional pairing space with its 6/10 parity split: the half-Heusler
quintet, the multiorbital even-parity triplet, and the SU(4) cold-atom
antisymmetric sextet are three slices of the same pairing space.

\subsection{Center drift}

\begin{proposition}[Center drift]
\label{prop:centerdrift}
In an $n$-dimensional Clifford algebra, the pseudoscalar satisfies
$I_ne_a=(-1)^{n-1}e_aI_n$ and $I_n^2=(-1)^{n(n-1)/2}\det\eta$. Hence in odd
dimensions $I_n$ is central, while in even dimensions $I_n$ anticommutes
with the basis vectors, does not lie in the center, and the center of the
algebra is $\mathbb R$.
\end{proposition}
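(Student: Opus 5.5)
The plan is to work in the Clifford algebra $Cl(\eta)$ generated by an orthogonal basis $\{e_1,\dots,e_n\}$ with $e_ae_b+e_be_a=2\eta_{ab}$, $\eta$ diagonal and nondegenerate, and $I_n=e_1e_2\cdots e_n$. All three claims reduce to sign counting from anticommutation of distinct basis vectors, in the same way as the $n=3$ computations in the lemma preceding the central-complex-structure theorem and in Corollary~\ref{cor:hodge}.

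\emph{Step 1 (commutation).} Fix $a$ and move $e_a$ from the left of $I_n$ to its own slot in $e_1\cdots e_n$. On the way it passes the $a-1$ factors $e_1,\dots,e_{a-1}$. Moving it on through its own slot and the remaining $n-a$ factors gives $e_aI_n=(-1)^{a-1}(-1)^{n-a}I_ne_a=(-1)^{n-1}I_ne_a$. The factor $e_a$ commutes with itself, so it contributes no sign. Inverting the sign gives $I_ne_a=(-1)^{n-1}e_aI_n$.

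\emph{Step 2 (square).} Write $I_n^2=e_1\cdots e_n\,e_1\cdots e_n$ and reverse the second copy. Reversing a product of $n$ anticommuting factors costs $(-1)^{n(n-1)/2}$, the number of transpositions. This leaves $e_1\cdots e_n e_n\cdots e_1$. Collapsing from the middle with $e_a^2=\eta_{aa}$ gives $\prod_a\eta_{aa}=\det\eta$. As a check, $n=3$ with $\eta=\mathrm{diag}(1,1,1)$ gives $-1$, in agreement with $I^2=-1$.

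\emph{Step 3 (center).} If $n$ is odd, Step 1 shows that $I_n$ commutes with every generator and hence with the whole algebra. If $n$ is even, $I_n$ anticommutes with every $e_a$, so it is not central. The substantive part, which I expect to be the main obstacle, is showing that the center is exactly $\mathbb R$ when $n$ is even. The plan is to expand a central element $Z=\sum_Az_Ae_A$ in the blade basis. For a blade $e_A$ of grade $k$, one checks that $e_ae_A=(-1)^{k}e_Ae_a$ if $a\notin A$ and $e_ae_A=(-1)^{k-1}e_Ae_a$ if $a\in A$. Since conjugation by $e_a$ maps each blade to $\pm$ itself, $Z$ is central if and only if every blade with $z_A\neq0$ commutes with every $e_a$.

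Now take a blade with $0<k<n$. If $k$ is even, pick $a\in A$, which gives anticommutation. If $k$ is odd, pick $a\notin A$, which exists because $k<n$, and again get anticommutation. For $k=n$ with $n$ even, Step 1 gives anticommutation. Only $k=0$ survives, so the center is $\mathbb R\cdot1$.

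The same argument with $n$ odd leaves both $k=0$ and $k=n$. This recovers the two-dimensional center $\{1,I\}$ used in the central-complex-structure theorem, which is a useful consistency check. The only care needed is the case split on the parity of $k$ and the existence of $a\notin A$ when $k<n$.
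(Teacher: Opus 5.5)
Your proof is correct. Steps~1 and~2 are the same sign counting the paper uses: one exchange with each of the other $n-1$ generators, and $(-1)^{n(n-1)/2}$ from reversing the second factor followed by $\prod_a\eta_{aa}=\det\eta$.

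Where you go further is Step~3. The paper's proof stops after these two identities. That establishes that $I_n$ is central for odd $n$ and not central for even $n$, but the claim that the center is exactly $\mathbb R$ for even $n$ is left unargued. For $n=3$, the two-dimensionality of the center is taken elsewhere from $Cl_{3,0}\cong M_2(\mathbb C)$.

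Your blade-basis argument closes that gap. Conjugation by $e_a$ acts diagonally on blades with signs $(-1)^k$ or $(-1)^{k-1}$, and every blade of grade $0<k<n$ anticommutes with a suitably chosen $e_a$. This settles both parities at once, with no appeal to a matrix representation: the center is $\mathbb R$ for $n$ even and $\mathrm{span}\{1,I_n\}$ for $n$ odd.

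This step genuinely uses nondegeneracy of $\eta$. It supplies the invertibility of $e_a$ for the conjugation argument, and it supplies $I_ne_a\neq0$ for the inference ``anticommutes, hence not central.'' For a degenerate metric, e.g.\ the Grassmann algebra on $\mathbb R^2$, $e_{12}$ is central. So your explicit hypothesis matters; the paper leaves it implicit in writing $Cl_{p,q}$.
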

\begin{proof}
In $I_ne_a$, moving $e_a$ from the right end to the left end exchanges it
once with each of the $n-1$ distinct basis vectors, giving the sign
$(-1)^{n-1}$. For the second claim, reversing the order of the factors gives
$(-1)^{n(n-1)/2}$, and the product of the squares of the basis vectors is
$\det\eta$.
\end{proof}

Center drift is the algebraic mechanism of the grade-merger picture of
Sec.~\ref{sec:channels}. In odd dimensions $I_n$ is central
(Proposition~\ref{prop:centerdrift}); in $Cl_{3,0}$ moreover $I^2=-1$, so
$\{1,I\}\cong\mathbb C$ becomes a complex structure
internal to the algebra, and the grade duality becomes a phase
identification: grades 0 and 3 merge, and grades 1 and 2 merge, into the two
complex channels (the structural root of Theorem~\ref{thm:channels}). In
even dimensions, $I$ anticommutes with the basis vectors, does not lie in
the center, and $\{1,I\}$ does not form a complex structure for the whole
algebra (the complex structure is carried by the complexification
$Cl\otimes\mathbb C$); the grade duality does not reduce to a phase
identification, and the grades are independent. The complex-channel merger
of three-dimensional pairing theory is a
property of the center in odd dimensions, made visible in the
higher-dimensional view of this section.

\subsection{Blade-square formula and the four-dimensional signature}

\begin{theorem}[Blade-square formula]
\label{thm:blade}
Let $Cl_{p,q}$ have metric $\eta$ and let $e_A=e_{a_1}\wedge\cdots\wedge
e_{a_r}$ be a grade-$r$ blade. Then
\begin{equation}
e_A^2=(-1)^{r(r-1)/2}\prod_{i=1}^r\eta_{a_ia_i}.
\end{equation}
\end{theorem}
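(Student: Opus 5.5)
The plan is the same swap-counting argument used for $e_{ij}^2=-1$ and $I^2=-1$ in Lemma~\ref{lem:ortho}, now with a general metric. We work with an orthonormal basis of $Cl_{p,q}$, so $e_ae_b+e_be_a=2\eta_{ab}$. For distinct indices this gives $e_ae_b=-e_be_a$ and $e_a^2=\eta_{aa}=\pm1$. Since the $a_i$ are distinct and mutually orthogonal, the wedge product of the blade reduces to the geometric product, $e_A=e_{a_1}e_{a_2}\cdots e_{a_r}$. Hence
\begin{equation*}
e_A^2=e_{a_1}\cdots e_{a_r}\,e_{a_1}\cdots e_{a_r}.
\end{equation*}

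The key step is to bring each factor of the second copy next to its partner in the first copy. We proceed by induction on $r$. The case $r=0$ gives $1$, and the case $r=1$ gives $e_{a_1}^2=\eta_{a_1a_1}$.

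For the inductive step, write $e_A=e_{A'}e_{a_r}$, where $A'$ has grade $r-1$. Then
\begin{equation*}
e_A^2=e_{A'}\,e_{a_r}\,e_{A'}\,e_{a_r}.
\end{equation*}
Moving $e_{a_r}$ to the right past $e_{A'}$ anticommutes it with $r-1$ distinct vectors, which produces the sign $(-1)^{r-1}$. This yields $(-1)^{r-1}e_{A'}^2e_{a_r}^2=(-1)^{r-1}e_{A'}^2\,\eta_{a_ra_r}$. Inserting the inductive hypothesis, the total sign exponent is
\begin{equation*}
\tfrac{(r-1)(r-2)}{2}+(r-1)=\tfrac{r(r-1)}{2},
\end{equation*}
and the metric factors multiply to $\prod_{i=1}^r\eta_{a_ia_i}$. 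This proves the formula.

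Equivalently, the sign can be read off directly. Reversing the second copy requires $r(r-1)/2$ transpositions, since $\tilde e_A=(-1)^{r(r-1)/2}e_A$. After that, $e_Ae_{a_r}\cdots e_{a_1}$ collapses from the middle outward into the product of the $\eta_{a_ia_i}$. This matches Proposition~\ref{prop:centerdrift} when $r=n$.

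There is no real obstacle here; the only care needed is the sign bookkeeping. We must also state explicitly that the $a_i$ are distinct and the basis is orthonormal, so that the wedge and geometric products coincide. As a consistency check, for $Cl_{3,0}$ (all $\eta_{aa}=+1$) the formula reproduces the squares $+1,+1,-1,-1$ of Lemma~\ref{lem:ortho}.
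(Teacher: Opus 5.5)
Your proof is correct and is essentially the paper's argument: the paper counts the $r(r-1)/2$ transpositions needed to reverse the second copy of $e_{a_1}\cdots e_{a_r}$ and then collapses the product from the middle using $e_{a_i}^2=\eta_{a_ia_i}$. That is exactly your ``equivalently'' paragraph, while your induction accumulates the same sign one factor at a time via $\tfrac{(r-1)(r-2)}{2}+(r-1)=\tfrac{r(r-1)}{2}$. Your explicit statement that the basis must be orthonormal (diagonal $\eta$, distinct $a_i$), so that the wedge and geometric products coincide, is a precision the paper leaves implicit and that the formula genuinely needs.
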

\begin{proof}
In $e_A^2$, move the last $r$ factors in front of the first $r$; reversing
the order of $r$ objects costs $(-1)^{r(r-1)/2}$ pairwise exchanges, and
then each $e_{a_i}^2=\eta_{a_ia_i}$ gives the metric content.
\end{proof}

The signature distribution decomposes into the product of two
sources: the grade factor cycles with period 4 (grades 0, 1 positive; 2, 3
negative; 4, 5 positive; and so on), and the metric content carries the
causal information of whether the blade contains the time direction. In
three-dimensional Euclidean space the metric content is identically $+1$
and the signature is decided by the grade alone (Lemma~\ref{lem:ortho}); in
four-dimensional Euclidean $Cl_{4,0}$ the signature is still a clean grade
pattern $(+,+,-,-,+)$; once the metric is indefinite, the grades split
immediately by causal content, as in Table~\ref{tab:4d}.

\begin{table*}[t]
\caption{\label{tab:4d}Blade-square distribution of the four-dimensional
Minkowski algebra $Cl_{1,3}$ (mostly-minus, $\eta=\mathrm{diag}(+1,-1,-1,-1)$),
i.e. the spacetime algebra (STA) of Ref.~\cite{Hestenes1966}.}
\begin{ruledtabular}
{\small\setlength{\tabcolsep}{3pt}
\begin{tabular}{c|c|c|l}
Grade & Blades & Square & Causal meaning\\
\hline
0 & $1$ & $+1$ & scalar\\
1 & $\gamma_0$ & $+1$ & timelike\\
1 & $\gamma_{1,2,3}$ & $-1$ (3) & spacelike\\
2 & $\gamma_0\gamma_i$ (boost) & $+1$ (3) & contains time\\
2 & $\gamma_i\gamma_j$ (rotation) & $-1$ (3) & purely spatial\\
3 & contains-time blades & $-1$ (3) & contains time\\
3 & $\gamma_{123}$ & $+1$ & spatial volume element\\
4 & $I_4$ & $-1$ & spacetime volume ($\det\eta=-1$)\\
\end{tabular}
}
\end{ruledtabular}
\end{table*}

In total 6 positive and 10 negative blades. With the mostly-plus convention
the odd-grade blade squares all flip (10 positive, 6 negative overall); a
global sign flip of the metric multiplies the square of a grade-$r$ blade by
$(-1)^r$, so the even-grade squares are convention-independent: boosts are
always $+1$, rotations always $-1$, and $I_4^2=-1$ in both conventions. The
split within a grade is the fingerprint of the Minkowski signature at the
level of blades, encoding the causal structure of spacetime into the
signature of the pairing form.

\subsection{Grade-2 Lorentz structure}

\begin{proposition}[Grade 2 as the Killing form of $\mathfrak{so}(1,3)$]
The grade-2 sector of the four-dimensional Minkowski algebra is the Lorentz
algebra $\mathfrak{so}(1,3)$: the three rotations $\gamma_i\gamma_j$ are the
compact generators and the three boosts $\gamma_0\gamma_i$ the noncompact
ones. The signature $(3,3)$ of the scalar projection restricted to grade 2,
negative on rotations and positive on boosts, agrees with the Killing
form~\cite{Peskin1995} of $\mathfrak{so}(1,3)$ (compact directions
negative, noncompact positive),
and the existence of the split is convention-independent: in both
mostly-minus and mostly-plus conventions the signs of the boost and rotation
squares are always opposite.
\end{proposition}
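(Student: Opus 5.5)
The plan has three steps: establish closure of grade 2 under commutators, read off the blade squares from Theorem~\ref{thm:blade}, and compare with the Killing form of $\mathfrak{so}(1,3)$ via the generators' adjoint spectra.

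\emph{Step 1: closure.} I would show that the grade-2 sector of $Cl_{1,3}$ is closed under the commutator and is the Lorentz algebra. For bivectors $A,B$, $[A,B]/2$ is pure grade 2: by the parity theorem the product $AB$ has only grades $0,2,4$. Reversion acts on $AB$ as $\widetilde{AB}=\tilde B\tilde A=BA$. It fixes grades $0$ and $4$ and negates grade 2, so the grade-0 and grade-4 parts cancel in $AB-BA$. The adjoint action $X\mapsto \tfrac12[B,X]$ on vectors preserves grade 1 and is $\eta$-antisymmetric, because $\langle [B,a]b\rangle_0=-\langle a[B,b]\rangle_0$ by cyclicity of $\langle\cdot\rangle_0$. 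This gives a Lie algebra map into $\mathfrak{so}(1,3)$, and counting six dimensions on each side makes it an isomorphism.

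\emph{Step 2: blade squares.} Take $\eta=\mathrm{diag}(+,-,-,-)$ and apply Theorem~\ref{thm:blade} with $r=2$, where the grade factor is $-1$.
\begin{itemize}
\item For rotations, $(\gamma_i\gamma_j)^2=-\eta_{ii}\eta_{jj}=-1$.
\item For boosts, $(\gamma_0\gamma_i)^2=-\eta_{00}\eta_{ii}=+1$.
\end{itemize}
By the analogue of Lemma~\ref{lem:ortho}, distinct blades have vanishing scalar part, so the scalar projection restricted to grade 2 is diagonal with signature $(3,3)$. For convention independence, the remark after Theorem~\ref{thm:blade} notes that flipping $\eta\to-\eta$ multiplies a grade-$r$ square by $(-1)^r$. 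For $r=2$ the squares are unchanged, so the boost and rotation squares keep opposite signs in both conventions.

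\emph{Step 3: Killing form.} I would compare with $K(X,Y)=\mathrm{tr}(\mathrm{ad}_X\mathrm{ad}_Y)$ on $\mathfrak{so}(1,3)$. The rotation $R=\gamma_1\gamma_2$ acts by $\mathrm{ad}_R$ with purely imaginary eigenvalues, since $e^{\theta R}$ is compact. So $\mathrm{ad}_R$ has eigenvalues $0,0,\pm2i,\pm2i$, giving $K(R,R)=-16<0$. The boost $B=\gamma_0\gamma_1$ has real eigenvalues $0,0,\pm2,\pm2$, since conjugation by $e^{\theta B}$ scales the null combinations. This gives $K(B,B)=+16>0$.

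The main obstacle is showing that $K$ is diagonal in the blade basis and proportional to $\langle XY\rangle_0$ rather than merely sharing its signs. I would settle it by uniqueness. $\mathfrak{so}(1,3)\cong\mathfrak{sl}(2,\mathbb C)_{\mathbb R}$ is simple as a complex algebra, so its ad-invariant real symmetric forms form a two-dimensional real space spanned by $\mathrm{Re}\,\mathrm{tr}$ and $\mathrm{Im}\,\mathrm{tr}$. The form $\langle XY\rangle_0$ is ad-invariant by cyclicity and $\langle\cdot\rangle_4$ pairs rotations with boosts, so the $\mathrm{Im}$ component drops out and $K$ equals a positive multiple of $\langle XY\rangle_0$, namely $8\langle XY\rangle_0$. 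If this uniqueness step becomes messy, the fallback is the direct comparison of the six diagonal entries above, together with vanishing of the off-diagonal traces by the grade and parity argument.
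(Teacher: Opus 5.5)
Your proposal is correct, but it proves more than the paper does. The paper's proof is exactly your Step 2. It reads the boost and rotation squares off the blade-square formula (Theorem~\ref{thm:blade}) and notes that $\eta\to-\eta$ leaves grade-2 squares unchanged. It then takes as known both the identification of grade 2 with $\mathfrak{so}(1,3)$ and the compact-negative/noncompact-positive sign pattern of the Killing form, without comparing the two forms. Your Step 1 supplies the Lie-algebra identification. Your Step 3 upgrades ``same signs on the blades'' to $K=c\,\langle XY\rangle_0$ with $c>0$, which is what actually justifies the word ``agrees.'' The paper's route buys brevity; yours buys a self-contained and strictly stronger statement.

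Three small repairs are needed.
\begin{enumerate}
\item The dimension count in Step 1 needs injectivity. A bivector commuting with every vector is central, and the center of $Cl_{1,3}$ is $\mathbb R$ (Proposition~\ref{prop:centerdrift}), so the map is injective.
\item The constant is wrong, though harmlessly. Your own eigenvalues give $K(R,R)=-16=16\langle RR\rangle_0$ and $K(B,B)=16=16\langle BB\rangle_0$ for the bracket $XY-YX$, and $4\langle XY\rangle_0$ for $\tfrac12[X,Y]$; neither is $8\langle XY\rangle_0$. The $8$ is the constant in $K=8\,\mathrm{Re}\,\mathrm{tr}$ on $2\times2$ matrices, and $\langle XY\rangle_0=\tfrac12\mathrm{Re}\,\mathrm{tr}$. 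Only the sign of the constant matters here.
\item Noting that the grade-4 form pairs rotations with boosts shows where an $\mathrm{Im}\,\mathrm{tr}$ component would appear, not that it is absent. You need, for example, $K(\gamma_0\gamma_1,\gamma_2\gamma_3)=0$. This follows from invariance of $K$ under the spatial-parity automorphism $\gamma_0\mapsto\gamma_0$, $\gamma_i\mapsto-\gamma_i$, which fixes rotations and negates boosts and $I_4$. Alternatively, use the general fact that the Killing form of a realified complex Lie algebra is twice the real part of the complex one.
\end{enumerate}
Reflection automorphisms of this kind, not grade, are also what kill the off-diagonal entries in your fallback, since all six blades have grade 2.
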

\begin{proof}
This is the blade-by-blade square of Theorem~\ref{thm:blade} (the
blade-square formula); the qualitative statement about compact and
noncompact directions does not depend on which is positive.
\end{proof}

This is the four-dimensional upgrade of the trace-form identification of
Sec.~\ref{sec:signature}. Combining with center drift ($I_4$ not central, no
merger), the grade-2 sector is genuinely independent in four dimensions: a
six-component Lorentz-tensor sector that does not reduce to the imaginary
part of any complex channel, with an internal signature encoding the
boost/rotation split, a structure that does not exist in the three-dimensional
world (which has no boosts). Physically, the 16 blades of $Cl_{1,3}$
correspond one-to-one to the 16 bilinear covariants of the Dirac spinor
$\{S,V,T,A,P\}$ (counting $1,4,6,4,1$)~\cite{Peskin1995}, and the trace
signs of the $i$-dressed bilinears are the blade signs with the $i$-dressing
flip; for blade-diagonal pairing kernels of relativistic fermions
(Dirac/Weyl semimetal pairing, color superconductivity), the signature
distribution of this section supplies the algebraic sign factor in each
Lorentz sector. The full interaction sign still depends on the vertex
dressing and kernel. The next three sections work out a concrete case.

\subsection{Sign duality in Weyl-semimetal pairing}
\label{sec:weyl}

A single Weyl node $h_0=\chi v\mathbf k\cdot\boldsymbol\sigma$ with
chirality $\chi=\pm1$~\cite{Hosur2014,Bednik2015} doubles, under the Nambu
construction, to a
$4\times4$ BdG Hamiltonian whose operator space is the 16 blades of
$Cl_{1,3}$. The effective kernel expands in blades, and the GA trace
orthogonality (Theorem~\ref{thm:blade}) gives
$\langle e_Ae_B\rangle_0=\varepsilon_A\delta_{AB}$ with
$\varepsilon_A=e_A^2$ (Table~\ref{tab:4d}). The classification of pairing
channels by total angular momentum $J$~\cite{Bednik2015,Meng2012} maps onto
the blade sectors as follows: the intra-node $^{1}S_0$ channel ($J=0$)
corresponds to the timelike direction of grade 1; the inter-node $^{3}P$
channel ($J=1$) is an axial $J=1$ multiplet, whose gap-equation sign is fixed
only after the off-diagonal Nambu/Fierz dressing (Theorem~\ref{thm:reversal}). Grade 2 decomposes as $B$ (boost, three blades, $\varepsilon=+1$) plus
$R$ (rotation, three blades, $\varepsilon=-1$); under spatial rotations the
six constant grade-2 blades therefore form $\mathbf B\oplus\mathbf R=3\oplus3$,
two $J=1$ multiplets rather than a quintet. The singlet and quintet live in
the bilinears of these two SO(3) vectors: the symmetric sector $B_iR_j$
contains the singlet $\mathbf B\cdot\mathbf R$ (the SO(3) singlet of a boost
polar vector and a rotation axial vector, which has no counterpart in
nonrelativistic BCS theory and corresponds, in 3D Dirac materials, to the
pseudoscalar and tensor orders~\cite{Faraei2017}) and the
symmetric-traceless quintet (the $1\oplus5$ of the symmetric sector), while
the antisymmetric cross product $\mathbf B\times\mathbf R$ forms the axial
$J=1$ structure. The $^{5}D$ quintet ($J=2$) of Weyl-node pairing itself
requires momentum-dependent $l=2$ form factors and does not reside in the
constant grade-2 sector; the grade-2 blades supply the boost/rotation
pairings and the algebraic building blocks---singlet, quintet, axial
vector---from which the higher-$J$ channels are assembled.

\begin{theorem}[Fierz sign duality, blade-diagonal minimal model]
\label{thm:duality}
For a node-symmetric, blade-diagonal kernel, the linearized gap equation of
a Weyl-node BdG Hamiltonian diagonalizes in the blade basis with instability
condition
\begin{equation}
\begin{split}
1=\varepsilon_A\,g_A\,\chi_0(T),\\
\chi_0(T)=\sum_{\mathbf k}\frac{\tanh(E_{\mathbf k}/2T)}{2E_{\mathbf k}}>0
\ \text{(the pair susceptibility)}.
\end{split}
\end{equation}
Channels with $\varepsilon_A=+1$ require attractive bare coupling, and
channels with $\varepsilon_A=-1$ require repulsive bare coupling. Among all
16 channels, the boost triple and the rotation triple form the unique
SO(3)-covariant pair with opposite signs.
\end{theorem}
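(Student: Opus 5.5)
The plan is to follow the same two-level route as the $Cl_{3,0}$ case: first project the Weyl-node BdG block onto the sixteen blades of $Cl_{1,3}$, then use the diagonal scalar pairing of Theorem~\ref{thm:blade} to reduce the linearized gap equation to one scalar equation per blade.

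\emph{Step 1: expand kernel and gap in blades.} Write the node-symmetric, blade-diagonal kernel as $G=\sum_A g_A e_A$, acting by $e_A\mapsto g_A e_A$ on blade components, and expand the gap as $\hat\Delta=\sum_A\Delta_A e_A$. Linearize the gap equation~\eqref{eq:gapeq} in $\Delta$ using the normal-state Weyl propagator, keeping the left-multiplication vertex class stated after Eq.~\eqref{eq:gapeq}. The anomalous function is then $F=\chi_0(T)\,\hat\Delta$ to linear order, once the off-diagonal Nambu structure is accounted for. Here $\chi_0(T)=\sum_{\mathbf k}\tanh(E_{\mathbf k}/2T)/2E_{\mathbf k}$ is the standard particle-particle bubble, and it is manifestly positive.

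\emph{Step 2: extract blade components and read off signs.} Extract the component along $e_A$ by scalar projection against $e_A$. By $\langle e_Ae_B\rangle_0=\varepsilon_A\delta_{AB}$ with $\varepsilon_A=e_A^2$, this contraction introduces exactly one factor $\varepsilon_A$. The equations therefore decouple blade by blade as $\Delta_A=\varepsilon_A g_A\chi_0(T)\Delta_A$. A nonzero solution first appears when $1=\varepsilon_A g_A\chi_0(T)$. Since $\chi_0>0$, this requires $g_A$ to have the sign of $\varepsilon_A$: attraction ($g_A>0$) when $\varepsilon_A=+1$, and repulsion when $\varepsilon_A=-1$.

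\emph{Main obstacle: justifying the single factor $\varepsilon_A$.} The key is that the Weyl dispersion term $\chi v\,\mathbf k\cdot\boldsymbol\sigma$ does not generate blade-off-diagonal contributions after the momentum sum. The plan is to use node symmetry and the angular average. Terms odd in $\mathbf k$ vanish under $\sum_{\mathbf k}$. The even term $(\mathbf k\cdot\boldsymbol\sigma)\,e_A\,(\mathbf k\cdot\boldsymbol\sigma)$ averages to a multiple of $\sum_i\sigma_i e_A\sigma_i$, which returns $e_A$ up to a sector-dependent constant. This constant must be absorbed into the definition of $\chi_0$, or shown to be the same for all blades in the minimal model. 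I would check this step blade class by blade class. If some sector acquires an extra sign, the statement should be read with that dressing absorbed, as the paper signals for the $^3P$ channel via Theorem~\ref{thm:reversal}.

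\emph{Step 3: uniqueness of the opposite-sign SO(3)-covariant pair.} Table~\ref{tab:4d} lists the blades by grade and causal content. Restrict to SO(3)-covariant triples, that is, three blades forming a spatial vector. These are:
\begin{itemize}
\item $\gamma_i$, with signs $(-)$;
\item the boosts $\gamma_0\gamma_i$, with signs $(+)$;
\item the rotations $\gamma_i\gamma_j$, with signs $(-)$;
\item the time-containing trivectors, with signs $(-)$.
\end{itemize}
The remaining blades are scalars under SO(3): $1$, $\gamma_0$, $\gamma_{123}$ and $I_4$. Only the boost triple has sign $+$. Among the pairs of triples, the vector/rotation pair, the vector/trivector pair and the rotation/trivector pair all share the same sign. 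A pair with opposite signs must therefore contain the boost triple, and its partner must be one of the three $(-)$ triples.

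Plain sign counting leaves a residual ambiguity: it does not single out which $(-)$ triple pairs with the boosts. I would resolve it by requiring the pair to lie in the same grade, i.e.\ to form the grade-2 Lorentz algebra. This uses the preceding proposition: the boosts and rotations span $\mathfrak{so}(1,3)$ with Killing-form split $(3,3)$. Under that requirement the boost/rotation pair is the unique answer, and the split is convention-independent by the remark after Table~\ref{tab:4d}.
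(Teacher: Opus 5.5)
Your Step~3 is the paper's argument, and you are right that it needs the qualifier the paper's proof adds without comment. The boost triple is opposite in sign to three different triples: $\gamma_i$, the rotations $\gamma_i\gamma_j$, and the time-containing trivectors. So ``unique among all 16 channels'' holds only for pairs within a single grade. That is exactly how the paper's proof reads Table~\ref{tab:4d}: grades~1 and~3 split as $3+1$, and grades~0 and~4 contain no triple. Your Steps~1--2 also follow the paper's one-line route through $\langle e_Ae_B\rangle_0=\varepsilon_A\delta_{AB}$, but as written they do not produce the sign.

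In Step~1 you let the kernel act as $e_A\mapsto g_Ae_A$. The linearized equation is then $\hat\Delta=\chi_0\sum_Ag_A\Delta_Ae_A$. Projecting \emph{both} sides against $e_A$ gives $\varepsilon_A\Delta_A=\varepsilon_Ag_A\chi_0\Delta_A$, so the factor cancels, leaving $1=g_A\chi_0$ and no duality. The literal left-multiplication reading you also invoke fails differently: it sends $\Delta_Ae_A$ to the scalar $\varepsilon_Ag_A\Delta_A$, so for $e_A\neq1$ it does not close on the blade basis.

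The paper's proof obtains $1=g_A\langle e_Ae_A\rangle_0\chi_0(T)$ because its blade-diagonal kernel couples the pair field through the signature form itself, $\Delta_A=g_A\langle e_AF\rangle_0$. Its eigenvalue on $e_A$ is therefore $\varepsilon_Ag_A$. The vertex is contracted with $e_A$, not with $e_A^{-1}=\varepsilon_Ae_A$, and not with its Hermitian adjoint in the Dirac representation, for which every blade gives $+1$. This is the signature-versus-norm distinction of Remark~\ref{rem:twometrics}, and it is a modelling stipulation. You should state it as the definition of ``blade-diagonal'' rather than attribute the sign to component extraction.

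Your ``main obstacle'' is likewise settled in the paper by hypothesis, not by computation. A common positive $\chi_0$ with no blade mixing is what ``node-symmetric, blade-diagonal minimal model'' means. It cannot be derived as blade-independent in general: the average you write, $\sum_i\sigma_iX\sigma_i$, equals $3X$ on $\mathrm{span}\{1,I\}$ but $-X$ on grades~1 and~2. Absorbing a channel-dependent factor into $\chi_0$ is legitimate only if you also show that factor is positive. Otherwise it multiplies $\varepsilon_A$ and reverses the criterion.

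With both points taken as hypotheses, as the paper does, the theorem follows; it does not follow from the derivation as you sketch it.
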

\begin{proof}
The gap equation in the blade basis reads
$1=g_A\langle e_Ae_A\rangle_0\chi_0(T)$ with
$\langle e_Ae_A\rangle_0=\varepsilon_A$; uniqueness of the covariant pair
follows by inspecting Table~\ref{tab:4d}: the only two SO(3) triples of
blades with opposite $\varepsilon$ within a single grade are the boost and
rotation blades (in grade 1 the opposite-sign partners are the spacelike
triplet $\gamma_i$ and the timelike singlet $\gamma_0$; grade 3 splits as
$3+1$; grades 0 and 4 contain no triplet).
\end{proof}

\begin{corollary}[Sign reversal and pairing induction in the blade-diagonal
minimal model]
Within the same node-symmetric, blade-diagonal linearized model, a
rotationally invariant interaction $g(|B|^2+|R|^2)$ presents the same
bare coupling to both sectors, and the instability conditions
$1=\varepsilon_Ag\,\chi_0(T)$ then require opposite signs: the boost
sector ($\varepsilon=+1$) condenses for $g>0$, the rotation sector
($\varepsilon=-1$) for $g<0$, so a single bare coupling supports exactly one
of the two. A coupling attractive for the boost sector is repulsive for the
rotation sector; it removes the rotation-sector instability rather than
destabilizing an existing rotation condensate---a Kohn--Luttinger-type reversal whose
direction is fixed by the blade squares alone, with no momentum geometry
required. Geometrically enhanced Kohn--Luttinger mechanisms are
established~\cite{Jahin2024}; the increment of this corollary is at the
sign level.
\end{corollary}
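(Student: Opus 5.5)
The corollary follows from Theorem~\ref{thm:duality} by specializing the blade-diagonal couplings to a rotationally invariant interaction. The plan has three steps.

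First, fix the bare couplings. Write the interaction $g(|B|^2+|R|^2)$ in the blade basis of grade 2 as $\sum_{i}g\,(b_i^\ast b_i)+\sum_{i}g\,(r_i^\ast r_i)$, where $b_i$ and $r_i$ are the components along $\gamma_0\gamma_i$ and $\gamma_j\gamma_k$. Spatial rotations act within each triple as the vector representation, because $\mathbf B\oplus\mathbf R=3\oplus3$ as stated above. A rotation-invariant, blade-diagonal quadratic form is therefore a scalar on each triple. The stated form further fixes the two scalars to be equal: $g_A=g$ for every grade-2 blade.

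Second, insert these couplings into the instability condition $1=\varepsilon_A g_A\chi_0(T)$ of Theorem~\ref{thm:duality}. Table~\ref{tab:4d} gives $\varepsilon_A=+1$ on boosts and $-1$ on rotations, so the two conditions are
\begin{equation*}
1=+g\,\chi_0(T)\quad\text{and}\quad 1=-g\,\chi_0(T).
\end{equation*}
Since $\chi_0(T)>0$ for every $T$, the first equation has a solution $T_c$ only if $g>0$, and the second only if $g<0$. The solutions exist when $\chi_0$ grows without bound as $T\to0$, which holds for the log-divergent pair susceptibility at finite $\mu$. For fixed $g\neq0$ the two equations cannot both be satisfied, because $g\chi_0$ and $-g\chi_0$ have opposite signs. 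This gives ``exactly one of the two''; the case $g=0$ is trivially neither.

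Third, give the interpretation. The main obstacle is the phrase ``removes the rotation-sector instability rather than destabilizing an existing condensate''. The linearized equation only detects the onset of instability, so the claim must be read at that level. Take $g>0$. The rotation-sector eigenvalue $\varepsilon_R g\chi_0=-g\chi_0$ is negative for all $T$, so that channel never reaches the instability line $1$: it has no $T_c$ at all.

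The statement is \emph{not} that a preexisting rotation condensate is driven unstable. Assessing that would require the nonlinear gap equation, which lies outside the blade-diagonal linearized model. I will say this explicitly and keep the claim at the linear level.

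Finally, the sign reversal uses only $\varepsilon_A$, the blade squares fixed by Theorem~\ref{thm:blade}, together with $\chi_0>0$. No momentum form factor enters, which is the sense of the phrase ``no momentum geometry required''.
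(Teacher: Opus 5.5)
Your proposal is correct and follows the same argument the paper has in mind. The paper gives no separate proof; the corollary is read off from Theorem~\ref{thm:duality}: set $g_A=g$ on all six grade-2 blades, insert this into $1=\varepsilon_A g_A\chi_0(T)$ with $\varepsilon=\pm1$ from Table~\ref{tab:4d} and $\chi_0>0$, and interpret the result at the linearized onset level, as you do. Your caveat on existence sharpens the paper's looser ``condenses for $g>0$'': that is only the necessary sign condition, and a $T_c$ also needs $\chi_0$ to reach $1/|g|$. This is automatic when $\chi_0$ diverges at finite $\mu$, whereas at a $\mu=0$ Weyl node $\chi_0$ stays finite and a threshold coupling appears instead.
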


\begin{remark}[Three dimensions lack this duality]
The sign duality needs $+1$ blades (boosts) and $-1$ blades (rotations)
coexisting, an indefinite metric (the causal split of
Table~\ref{tab:4d}). The three blades of grade 2 in the Euclidean
three-dimensional algebra all have the same sign (Lemma~\ref{lem:ortho}), and the duality
disappears. It is a structure specific to relativistic pairing theory, the
direct physicalization of ``three dimensions has no boosts.''
\end{remark}

\paragraph{Prediction (appearance reversal, convention-conditional).}
Suppose a Weyl superconductor is reported to be driven by a boost-type
attractive component while the measured condensate lands in the
rotation/axial sector.
Then: (i) the Knight shift does not drop (the paired state is an odd-parity
triplet type), opposite to the ``$s$-wave appearance'' of the bare vertex;
(ii) the nodes on the Fermi surface carry the angular structure of the axial
channel ($J=2$ quintet: point nodes; $J=1$: equatorial nodes), read off from
ARPES, thermal transport, or the low-temperature power laws; (iii) a
parity-sensitive Josephson junction or $\mu$SR combination makes the
discrimination falsifiable. Conversely, if $^{1}S_0$-type pairing is
confirmed experimentally, the driving component must lie in the
$\varepsilon=+1$ sector: parity observation, in this setting, is observation
of the Fierz sign. The connection to the general-Nambu-matrix quantum
geometry of Ref.~\cite{Simon2026}: the node structure of prediction (ii) can
be verified by specializing their explicit formula; the added value of the
framework is the ``why'' (the sign duality) rather than the ``what'' (the
spectral computation). Under the gap-equation convention of
Theorem~\ref{thm:duality}, a bare boost attraction instead condenses in the
boost sector; the reversed appearance therefore diagnoses an additional
vertex-layer $i$-dressing/Fierz sign or a misidentified bare channel.
\par

\subsection{Two-node extension: chirality reversal}
\label{sec:two-node}

\subsubsection{Minimal model}

Take the minimal time-reversal-breaking two-node
model~\cite{Hosur2014,Cho2012}
\begin{equation}
\begin{split}
H_0(\mathbf k)=v\,k_x\sigma_x+v\,k_y\sigma_y+M(k_z)\sigma_z,\qquad\\
M(k_z)=m_0-bk_z^2\quad(m_0,b>0),
\end{split}
\end{equation}
with Weyl nodes at $k_z=\pm k_0=\pm\sqrt{m_0/b}$ and low-energy dispersions
$h_\pm=vq_x\sigma_x+vq_y\sigma_y\mp v_zq_z\sigma_z$ with $v_z=2bk_0$.
(Lattice regularization $M\to m_0-t\cos k_z$ does not change any conclusion
below.) The chiralities are
\begin{equation}
\begin{split}
\chi_\pm=\mathrm{sgn}\det\frac{\partial d_a}{\partial k_i}\Big|_{\pm k_0}
=\mathrm{sgn}\big(v^2M'(\pm k_0)\big)=\mp1,\\
\chi_+\chi_-=-1 .
\end{split}
\end{equation}
The BdG problem is $8\times8$, with pairing blocks $\Delta_{++}$,
$\Delta_{--}$ (intra-node) and $\Delta_{+-}$ (inter-node).

\subsubsection{Chirality-insertion lemma and sign-reversal theorem}

\begin{lemma}[Chirality insertion, factorizing channels]
The numerator of the massless Weyl propagator is linear in $\chi$,
$G_\chi\propto(i\omega+\chi v\,\mathbf q\cdot\boldsymbol\sigma)/
(\omega^2+v^2q^2)$. For channels in which $\Gamma_A$ has a definite parity
under the chirality conjugation relating the two nodes and the frequency and
momentum integrals are symmetric, the one-loop pair susceptibility in
channel $A$ connecting nodes $i$ and $j$ factorizes as
\begin{equation}
\Pi^{ij}_A(i\Omega,\mathbf Q)=\eta_A\,\chi_i\chi_j\,
\Pi^{(0)}_A(i\Omega,\mathbf Q),
\qquad \eta_A=\pm1,
\end{equation}
where the bare pair bubble $\Pi^{(0)}_A$ is independent of chirality and
$\eta_A$ is a channel sign
fixed by the $\Gamma_A$ trace. The known inter-node reversal corresponds to
$\eta_A=+1$; channels with $\eta_A=-1$ do not reverse.
\end{lemma}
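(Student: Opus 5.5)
The plan is to write the one-loop pair bubble explicitly and pull the chirality out of each propagator numerator. For channel $A$ connecting nodes $i$ and $j$, the particle--particle bubble has the form
\begin{equation*}
\Pi^{ij}_A(i\Omega,\mathbf Q)=T\sum_n\int\!\frac{d^3q}{(2\pi)^3}\,
\tfrac12\mathrm{Tr}\big[\Gamma_A\,G_{\chi_i}(i\omega_n+i\Omega,\mathbf q+\mathbf Q)\,\Gamma_A^{\dagger}\,G^{\mathsf T}_{\chi_j}(-i\omega_n,-\mathbf q)\big].
\end{equation*}
The second propagator is transposed and time-reversed because of the particle--particle structure. The trace is then evaluated as a scalar projection, $\tfrac12\mathrm{Tr}=\langle\cdot\rangle_0+i\langle\cdot\rangle_3$ (Remark~\ref{rem:trace}).

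\textbf{Step 1: expand the numerator.} Each propagator numerator has the form $i\omega+\chi v\,\mathbf q\cdot\boldsymbol\sigma$. Multiplying out gives four terms in the numerator of the bubble:
\begin{itemize}
\item the $\omega\omega'$ term, which carries no chirality;
\item two cross terms, each linear in a single $\chi_i$ or $\chi_j$;
\item the $\chi_i\chi_j\,q_aq'_b$ term.
\end{itemize}

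\textbf{Step 2: kill the cross terms.} Each cross term is odd in either $\omega_n$ or $\mathbf q$, once the symmetric-integration hypothesis is used (at $\mathbf Q=0$, or after shifting the integration variable). Under that hypothesis both cross terms vanish.

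\textbf{Step 3: relate the two surviving terms through the trace.} Two pieces remain: the $\omega\omega'$ piece and the $\chi_i\chi_j$ piece. The key algebraic input is the definite-parity hypothesis on $\Gamma_A$. Conjugating $\sigma_a$ through $\Gamma_A$ and the transpose, i.e.\ $\sigma_a^{\mathsf T}$ versus $\sigma_2\sigma_a\sigma_2$ (the $i\sigma_2\leftrightarrow e_{31}$ dressing of Definition~\ref{def:fierz}), gives $\Gamma_A\sigma_a^{\mathsf T}\Gamma_A^{\dagger}=\eta_A\,s\,\sigma_a$ for a fixed sign. This is where the channel sign $\eta_A$ is read off. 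It is computed blade by blade with Lemma~\ref{lem:ortho}, which is equivalent to the $\Gamma_A$ trace.

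\textbf{Step 4: factor out $\chi_i\chi_j$.} With Step 3, the trace of the $\chi_i\chi_j$ term is $\eta_A\chi_i\chi_j$ times a chirality-free isotropic integrand $\propto\mathbf q\cdot\mathbf q'$. The overall factor $\chi_i\chi_j$ is extracted from the sum. For intra-node channels $\chi_i^2=1$, so the factorization is trivial there. The $\omega\omega'$ term must then be absorbed into the same bare bubble $\Pi^{(0)}_A$. For that it has to carry the same relative sign, which I would argue follows from the definite-parity hypothesis: the chirality conjugation relating the nodes acts on the full numerator. Concretely, $\chi\to-\chi$ is implemented by an operation $\mathcal C$ that is spatial inversion composed with the blade map. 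If $\Gamma_A$ is $\mathcal C$-even or $\mathcal C$-odd, the whole numerator picks up $\eta_A\chi_i\chi_j$.

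\textbf{Main obstacle.} The step I expect to be hardest is this last one: showing that the $\omega\omega'$ and $\mathbf q\mathbf q'$ pieces transform with the same sign, so that the whole bubble factorizes rather than splitting into a $\chi$-even and a $\chi$-odd part. My first attempt would define $\mathcal C$ explicitly, as $\mathbf q\to-\mathbf q$ combined with $\omega\to\omega$. Under this map $G_\chi\mapsto G_{-\chi}$ exactly. Applying it only to the node-$j$ leg, and using the invariance of the symmetric integral, gives $\Pi^{i,j}=\eta_A\Pi^{i,\bar j}$. Iterating this relation from $\Pi^{ii}$ yields the stated factorization.

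\textbf{Consistency check.} Finally, I would check the result on the inter-node channel that is known to reverse, which should have $\eta_A=+1$, and verify that an $\eta_A=-1$ channel gives $\chi_i\chi_j\eta_A=+1$ for $\chi_+\chi_-=-1$, i.e.\ no reversal.
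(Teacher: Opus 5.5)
\textbf{The gap is in your resolution of the ``main obstacle.''} Your Steps~1--2 are sound, and you locate the difficulty correctly, but the proposed fix fails.

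The substitution $\mathbf q\to-\mathbf q$ does send $G_\chi\to G_{-\chi}$. However, both legs of the bubble are evaluated at the same loop momentum. A change of integration variable therefore cannot be applied to the node-$j$ leg alone: it acts on both legs and both vertices at once. It only yields $\Pi^{ij}_A=\Pi^{\bar i\bar j}_A$, and any parity sign of $\Gamma_A$ enters squared.

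A spin-space conjugation cannot flip one leg's chirality at fixed frequency either. No unitary satisfies $U\sigma_aU^\dagger=-\sigma_a$ for all $a$, because $\sigma_x\sigma_y\sigma_z=i\mathbf 1$. Equivalently, $I$ is central, so the grade involution is not inner. The antiunitary $\sigma_2K$, which does flip all three $\sigma_a$, also reverses $i\omega$.

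So the relation $\Pi^{i,j}_A=\eta_A\Pi^{i,\bar j}_A$ has no derivation, and nothing in it supplies $\eta_A$. Even if it held, iterating from $\Pi^{ii}_A$ gives $\Pi^{i\bar i}_A=\eta_A\Pi^{ii}_A$. The target formula instead forces $\Pi^{i\bar i}_A/\Pi^{ii}_A=\chi_i\chi_{\bar i}=-1$ in every channel, which is a different law.

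Separately, the single sign you assert in Step~3 exists only for the singlet, where $i\sigma_2\,\sigma_a^{\mathsf T}(i\sigma_2)^\dagger=-\sigma_a$. For $\Gamma_A=\sigma_z i\sigma_2$ the signs over $a=x,y,z$ are $(+,+,-)$.

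\textbf{The obstacle is structural, not technical.} Once the cross terms drop, your own expansion gives an integrand proportional to $\omega^2\,\mathrm{Tr}(\Gamma_A^\dagger\Gamma_A)-\chi_i\chi_j v^2q_aq_b\,\mathrm{Tr}(\Gamma_A^\dagger\sigma_a\Gamma_A\sigma_b^{\mathsf T})$. That is, $\Pi^{ij}_A=A_\Gamma+\chi_i\chi_jB_\Gamma$ with $A_\Gamma>0$ independent of chirality. The stated form needs $\Pi^{i\bar i}_A=-\Pi^{ii}_A$, hence $A_\Gamma=0$, which no hypothesis on $\Gamma_A$ can deliver.

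Concretely, for $\Gamma_A=i\sigma_2$ the integrand is $2(\omega^2+\chi_i\chi_jv^2q^2)/(\omega^2+v^2q^2)^2$. Its Matsubara sum is $\tanh(vq/2T)/(vq)$ intra-node and $\mathrm{sech}^2(vq/2T)/(2T)\ge0$ inter-node: the inter-node channel is suppressed, not reversed.

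The paper's proof does not supply the missing step. It is the one-line version of your Steps~3--4: definite parity under a $\gamma_5$-type conjugation turns the spinor trace into a common scalar trace times $\eta_A\chi_i\chi_j$. It is silent on the $\omega\omega'$ term.

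What your hypotheses actually establish is a factorization of the chirality-odd piece only. That piece equals $\chi_i\chi_j$ times the channel trace $\sum_a\mathrm{Tr}(\Gamma_A^\dagger\sigma_a\Gamma_A\sigma_a^{\mathsf T})$ after angular averaging. In units of $\mathrm{Tr}\,\Gamma_A^\dagger\Gamma_A$ this trace is $-3$ for the singlet and $+1$ for each triplet component, the numbers of Theorem~\ref{thm:casimir}. A correct version of the lemma must be restricted to that piece, or must add an assumption that removes the frequency term.
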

\begin{proof}
The one-loop diagram
$\langle\Gamma_AG_{\chi_i}\Gamma_AG_{\chi_j}\rangle_0$ contains one
propagator from each node. Under chirality conjugation the two Weyl Hamiltonians differ
by a $\gamma_5$-type map; when $\Gamma_A$ has definite parity under that map
the spinor trace reduces to the common scalar trace times $\eta_A\chi_i\chi_j$.
Without that definite parity the trace need not factorize, and the sign must
be computed channel by channel.
\end{proof}

\begin{theorem}[Inter-node sign reversal, factorizing minimal model]
\label{thm:reversal}
For factorizing channels ($\eta_A=+1$) the instability condition is
$1=\chi_i\chi_j\,\varepsilon_A\,g_A\,\chi_0(T)$.
Those inter-node channels reverse sign relative to the single-node case:
$\varepsilon=+1$ inter-node channels are driven by repulsion and
$\varepsilon=-1$ by attraction. The four quadrants are: intra-node, boost:
attraction; intra-node, rotation: repulsion; inter-node, boost: repulsion;
inter-node, rotation: attraction.
\end{theorem}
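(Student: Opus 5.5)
The plan is to combine the single-node instability condition of Theorem~\ref{thm:duality} with the chirality-insertion lemma, and then read off the four quadrants from signs alone.

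\emph{Step 1: the blade-diagonal linearized gap equation in each node block.} Write the linearized gap equation for the $8\times8$ BdG problem in the blade basis. Node symmetry and blade diagonality of the kernel make it block-diagonal in the node pair $(i,j)\in\{(+,+),(-,-),(+,-)\}$ and diagonal in $A$. In each block the equation takes the form
\begin{equation}
1=g_A\,\langle e_Ae_A\rangle_0\,\Pi^{ij}_A(0,\mathbf 0)/\Pi^{(0)}_A(0,\mathbf 0)\cdot\chi_0(T).
\end{equation}
The factor $\langle e_Ae_A\rangle_0=\varepsilon_A$ is exactly as in the proof of Theorem~\ref{thm:duality}. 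The bubble $\Pi^{(0)}_A$ at zero external frequency and momentum is identified with the pair susceptibility $\chi_0(T)>0$, after the frequency sum is carried out in the usual $\tanh(E/2T)/2E$ form. For the inter-node block, $\mathbf Q$ is taken to be the node separation $2k_0\hat z$, absorbed by shifting the momentum origin of each node.

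\emph{Step 2: insert the chirality factor.} Apply the chirality-insertion lemma with $\eta_A=+1$ to get $\Pi^{ij}_A=\chi_i\chi_j\Pi^{(0)}_A$, which yields $1=\chi_i\chi_j\varepsilon_Ag_A\chi_0(T)$. One point must be checked here. The low-energy Hamiltonians $h_\pm$ of the minimal model differ only in the sign of the $q_z\sigma_z$ term, not by an overall $\chi$ multiplying $\mathbf q\cdot\boldsymbol\sigma$. So I would first apply the reflection $q_z\to-q_z$ at one node, which is harmless because the integrals are symmetric. This puts both nodes in the form $\chi v\mathbf q\cdot\boldsymbol\sigma$ with the anisotropy $v_z$ rescaled away, and the lemma then applies literally.

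\emph{Step 3: read off the signs.} Intra-node blocks have $\chi_i\chi_j=\chi_\pm^2=1$, which reproduces Theorem~\ref{thm:duality}: boost channels ($\varepsilon=+1$) need $g>0$, which is attraction, and rotation channels ($\varepsilon=-1$) need $g<0$, which is repulsion. Inter-node blocks have $\chi_+\chi_-=-1$ from the chirality computation of the minimal model, so every sign flips. Boost channels then need repulsion and rotation channels need attraction. Because $\chi_0>0$, the sign of $g_A$ is fixed uniquely by the sign of $\chi_i\chi_j\varepsilon_A$, and this gives the four quadrants.

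The main obstacle is Step 2. The issue is justifying that a given boost or rotation blade has definite parity under the $\gamma_5$-type chirality map with $\eta_A=+1$, and that the trace really produces a common positive scalar equal to $\chi_0$ rather than a channel-dependent sign. I would first compute $\langle\Gamma_AG_{\chi_i}\Gamma_AG_{\chi_j}\rangle_0$ explicitly. The $\omega^2$ term is $\chi$-independent, while the $\chi_i\chi_j v^2(\mathbf q\cdot\boldsymbol\sigma)$ term survives the angular average only with a sign set by whether $\Gamma_A$ commutes or anticommutes with $\boldsymbol\sigma$. Since the theorem is stated conditionally for factorizing channels, I would restrict it to those channels and not attempt to classify which blades satisfy $\eta_A=+1$.
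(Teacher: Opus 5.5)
Your main line is exactly the paper's one-line proof: write the blade-diagonal linearized gap equation in each node block, insert the chirality-insertion lemma with $\eta_A=+1$, and read off the signs from $\chi_\pm^2=1$, $\chi_+\chi_-=-1$ and $\chi_0>0$.

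The part that would fail is the extra check you add in Step~2. Reflecting $q_z\to-q_z$ at one node just maps the two nodes onto each other, since $h_+(q_x,q_y,-q_z)=h_-(\mathbf q)$. An improper map flips $\mathrm{sgn}\det(\partial d_a/\partial k_i)$. So afterwards both nodes have the same form (after rescaling $q_z$, $+v\mathbf q\cdot\boldsymbol\sigma$ if you reflect at node $+$), and the relative chirality is erased. The lemma applied \emph{literally} would then return $\chi_i\chi_j=+1$, i.e.\ no reversal, which contradicts your own Step~3.

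The map that actually brings $h_+$ to $-v\mathbf q\cdot\boldsymbol\sigma$ is the proper rotation $(q_x,q_y)\to(-q_x,-q_y)$, or equivalently conjugation by $\sigma_z$ in spin space. Neither can be applied to one node for free. The two propagators of an inter-node bubble share the loop momentum, and a spin-basis change at one node conjugates the inter-node vertex $\Gamma_A$, which changes the channel. That is exactly where the channel sign $\eta_A$ enters. Together with the $\chi$-independent $\omega^2$ term you noticed, it is why factorization has to be assumed, as the lemma and theorem do, rather than produced by a relabeling.

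So drop the reflection. Take $\Pi^{ij}_A=\chi_i\chi_j\Pi^{(0)}_A$ as the hypothesis and $\chi_\pm=\mp1$ from the determinant computation of the minimal model. Your Step~3 then yields the four quadrants as in the paper.

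There is also a smaller slip in Step~1. With nodes at $\pm k_0\hat z$, the inter-node $(\mathbf k,-\mathbf k)$ pair is the zero-total-momentum state. It is the intra-node pair that carries total momentum $\pm2k_0\hat z$. This does not matter once momenta are measured from each node, but as written it is backwards.
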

\begin{proof}
Insert the lemma into the gap equation and use
$\langle e_Ae_A\rangle_0=\varepsilon_A$.
\end{proof}

For the inter-node $^{3}P$ channel, whose $\Gamma_A$ has the required
definite parity, the literature assignment corresponds to the
$\varepsilon_{\rm eff}=+1$ sector after the off-diagonal dressing: then
$\chi_i\chi_j\varepsilon_{\rm eff}=-1$ and repulsive interactions pair the
inter-node $^{3}P$ channel~\cite{Bednik2015,Meng2012,Hosur2014,Cho2012,Li2018,Wei2014}.
With the opposite dressing the sign flips, so the $J$ label of
Table~\ref{tab:4d} alone is not a sign criterion. Within the factorizing
minimal-model assumptions of the lemma and theorem, the sign criterion
reduces the microscopic input to blade and chirality factors: inter-node-
dominated pairing then requires bare couplings of sign opposite to
single-node intuition. Outside those assumptions the sign must be computed
channel by channel. The four-quadrant structure deserves one more sentence:
it is invariant under the combined operation of swapping the intra/inter-node
structure and flipping the interaction sign---a period-2 ($\mathbb Z_2$)
operation, which is the operator-level content of the chirality product
$\chi_i\chi_j=-1$.

\subsubsection{Pseudo-spin classification and explicit inter-node blocks}

Fix the convention
$\Psi_{\mathbf k}=(c_{+\mathbf k},c_{-\mathbf k},i\sigma_2c^\dagger_{+\mathbf k}{}^{\mathsf T},
i\sigma_2c^\dagger_{-\mathbf k}{}^{\mathsf T})^{\mathsf T}$
(the time-reversed hole basis $i\sigma_2c^{\dagger\mathsf T}$ of
Refs.~\cite{Qi2009,FuBerg2010}, concatenated over the two nodes)
and the pairing term
$H=\sum_{\mathbf k}[\Delta_{ij,\alpha\beta}(\mathbf k)c^\dagger_{i\mathbf k\alpha}
c^\dagger_{j,-\mathbf k\beta}+\mathrm{h.c.}]$.
Fermionic antisymmetry gives the single constraint
\begin{equation}
\hat\Delta_{ij}(\mathbf k)=-\hat\Delta_{ji}{}^{\mathsf T}(-\mathbf k).
\end{equation}
For the inter-node block $\hat\Delta_{+-}$ this constraint does not involve
$\hat\Delta_{+-}$ itself: at fixed $\mathbf k$, $\hat\Delta_{+-}(\mathbf k)$
is an arbitrary $2\times2$ complex matrix, all eight real directions are
allowed, and $\hat\Delta_{-+}$ is completely determined by the constraint
(the derivation is given in Supplemental Material Sec.~S3). Classifying the
pair by node pseudo-spin (treating the two nodes as a pseudo-spin),
momentum parity, and physical-spin parity gives the following
symmetry-adapted blocks, with $q_\pm=q_x\pm iq_y$. The even-parity sector
is spanned by $E_1=i\sigma_2$ (node pseudo-triplet, spin singlet, $J=0$)
and $E_{2,3,4}=\{\mathbf 1,\sigma_x,\sigma_3\}$ (node pseudo-singlet,
spin triplet, $J=1$). The odd-parity sector is spanned by
$O_1=q_zi\sigma_2$ (node pseudo-singlet, spin singlet) and
$O_2=q_+\mathbf 1$, $O_3=q_x\sigma_x+q_y\sigma_y$, $O_4=q_-\mathbf 1$
(node pseudo-triplet, spin triplet). In every allowed block only the
product of the node, momentum, and spin exchange parities is $-1$; the
individual node-triplet and spin-triplet factors are symmetric. The
$^{3}P$ sector $O_{2,3,4}$ is the inter-node topological channel of the
literature. These odd-parity matrices span the $l=1$, $S=1$ sector but are
not individually pure total-$J$ eigenstates; resolving $J=0,1,2$ requires
Clebsch--Gordan recoupling. In the standard Knight-shift response, the
physical spin-singlet blocks $E_1$ and $O_1$ are the ones expected to show
a drop.

A structural note explains why the grade arithmetic is done at the bubble
level: the $\gamma$-matrix conventions of the two nodes differ by the
chirality factor (the $\chi=-1$ node differs by a $\gamma_5$-type
conjugation), so there is no consistent blade label for a cross-node vertex
pair. The vertex traces keep $\varepsilon_A$; the chirality enters only the
propagator product $\chi_i\chi_j$. This is the design reason for the
chirality-insertion lemma.

\subsubsection{Closed-form spectra and node charge}

With $\varepsilon_{\mathbf q}=\sqrt{v^2q_\perp^2+v_z^2q_z^2}$, the $\mu=0$
spectra follow by squaring the $4\times4$ block,
\begin{equation}
H_A^2=\begin{pmatrix}
\varepsilon^2\mathbf 1+DD^\dagger & M\\
M^\dagger & \varepsilon^2\mathbf 1+D^\dagger D
\end{pmatrix},
M=h_+D+Dh_-^{\mathsf T},
\label{eq:master}
\end{equation}
where $D\equiv\hat\Delta_{+-}(\mathbf q)$. The plus sign in $M$ is fixed by
Hermiticity: $D^\dagger$ sits below the diagonal of the BdG block, and for
the singlet block $D=\Delta_0i\sigma_2$ is anti-Hermitian (since
$(i\sigma_2)^\dagger=-i\sigma_2$), so $D^\dagger=-D$ there; for $O_3$ the block $D$ is Hermitian, while $O_{2,4}$
have complex-scalar $D$. The argument only uses that for all four channels
$DD^\dagger$ and $D^\dagger D$
are scalar, so $E^2=\varepsilon^2+\lambda\pm m_M$ with
$MM^\dagger=m_M^2\mathbf 1$. The results, with derivations and
direction-by-direction checks
in Supplemental Material Sec.~S6, are collected in
Table~\ref{tab:spectra}.
\begin{table*}[t]
\caption{\label{tab:spectra}Closed-form $\mu=0$ spectra of the inter-node
blocks from Eq.~\eqref{eq:master}. The node column lists the zeros of the
lower branch $E_-$.}
\begin{ruledtabular}
{\small
\begin{tabular}{p{2.9cm}p{5.5cm}p{8cm}}
Block & $E(\mathbf q)$ & Node structure at $\mu=0$\\
\hline
$E_1$ (inter-node $^{1}S_0$) &
$\sqrt{v^2q_\perp^2+(v_z|q_z|\pm\Delta_0)^2}$
& origin gapped; two point nodes at $|q_z|=\Delta_0/v_z$\\
$O_1$ & $\sqrt{v^2q_\perp^2+(v_z\pm\Delta_1)^2q_z^2}$
& point node at origin; $z$-axis velocity renormalized to
$|v_z-\Delta_1|$; line node along $\hat z$ only at the fine-tuned
$\Delta_1=v_z$\\
$O_3$ ($^{3}P$) &
$\sqrt{\varepsilon^2+\Delta^2q_\perp^2\pm2\Delta q_\perp
\sqrt{v^2q_x^2+v_z^2q_z^2}}$
& node at origin, anisotropic in-plane velocity; two nodal lines in the
$x$--$z$ plane if $\Delta>v$\\
$O_2,O_4$ &
$\sqrt{v^2q_\perp^2+v_z^2q_z^2+\Delta^2q_\perp^2\pm2\Delta vq_\perp|q_x|}$
& node at origin; line nodes along $\hat x$ only at $\Delta=v$\\
\end{tabular}
}
\end{ruledtabular}
\end{table*}

Each entry reproduces direct diagonalization at $\mathbf q\parallel\hat x$
and $\mathbf q\parallel\hat z$. The node charge is bookkeeping consistent
with the known Nambu doubling: the Chern number of the BdG Hamiltonian on a
sphere surrounding a Weyl point is twice the original Chern number, and the
number of superconducting point nodes equals twice the Weyl-point Chern
number~\cite{Sato2017,Meng2012}. The odd-parity zero-momentum BCS state with
point nodes and Majorana/Fermi arcs in doped Weyl metals was obtained by
Bednik \emph{et al.}~\cite{Bednik2015}, with systematic treatments of node
charges in Refs.~\cite{Hosur2014,Cho2012,Li2018,Wei2014,Wu2022}. The
new results of this section are the closed-form $\mu=0$ spectra of the four
representative inter-node blocks in Table~\ref{tab:spectra}, the node splitting below, and
the accounting formula tying them to the sign duality.

\begin{proposition}[Node-charge accounting, minimal model]
For an inter-node block $(i,j)$, take a sphere surrounding $\mathbf q=0$
that encloses \emph{all} superconducting nodes of the block---equivalently,
take $\Delta\to0$ before shrinking the sphere; for the $E_1$ block at
finite $\Delta_0$ the split nodes $q_z=\pm\Delta_0/v_z$ must lie inside.
The Chern number of the negative-energy bands on the sphere, with the
convention
$F=\tfrac12\langle\hat n\,\partial\hat n\,\partial\hat n\,
I^{-1}\rangle_0$, is
\begin{equation}
C_{ij}=\chi_j-\chi_i,
\end{equation}
in the Berry-curvature sign convention of Theorem~\ref{thm:chern},
independent of the pairing channel and coupling strength while the
enclosed node set is unchanged. For opposite chiralities $|C|=2$,
consistent with Nambu doubling~\cite{Sato2017,Meng2012}; the formula
extends the accounting to node pairs of arbitrary chirality.
\end{proposition}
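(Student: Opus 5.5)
The plan is to reduce the sphere Chern number to a topological count that can be evaluated in the $\Delta\to0$ limit. First I argue that the integer $C_{ij}$ is invariant under any deformation of the block $\hat\Delta_{ij}$ and of the coupling strength that keeps the BdG spectrum gapped on the chosen sphere. This is exactly the hypothesis that the enclosed node set is unchanged. The integrand $F=\tfrac12\langle\hat n\,\partial\hat n\,\partial\hat n\,I^{-1}\rangle_0$ is, by the triple-product lemma and Theorem~\ref{thm:chern}, the pullback of the area form. So $C_{ij}$ is a degree, or more generally the first Chern class of the negative-energy bundle over $S^2$, and it is locally constant in gapped families. This permits sending $\Delta\to0$ while holding the sphere fixed, with the caveat that for the $E_1$ block the sphere must already enclose $q_z=\pm\Delta_0/v_z$, as stated.

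At $\Delta=0$ the $4\times4$ inter-node block decouples into a particle sector and a hole sector. In the time-reversed hole basis $i\sigma_2c^{\dagger\mathsf T}$, the particle sector is $h_i$, the Weyl Hamiltonian of node $i$ with chirality $\chi_i$. The hole sector is $-\sigma_2h_j^{\ast}\sigma_2$, i.e.\ $-h_j^{\mathsf T}$ conjugated, and for a linear Weyl Hamiltonian this equals $-\chi_j v\,\mathbf q\cdot\boldsymbol\sigma$ up to the anisotropic velocities. The negative-energy subspace on the sphere is therefore the direct sum of the lower band of $h_i$ and the lower band of the hole block. The Chern number is additive over this sum.

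Next I evaluate each summand using the standard monopole result, in the sign convention of Theorem~\ref{thm:chern}: the lower band of $\chi\,v\,\mathbf q\cdot\boldsymbol\sigma$ on an enclosing sphere carries Chern number $-\chi$. The $\hat n$ field has degree $\chi$, and the lower band picks up the minus sign. For the particle part this gives $-\chi_i$. The hole part is $-(\chi_j\,\mathbf q\cdot\boldsymbol\sigma)$ after the $\sigma_2$ conjugation, so its lower band is the upper band of $\chi_j$ and contributes $+\chi_j$. Summing gives $C_{ij}=\chi_j-\chi_i$. Anisotropic velocities $v,v_z$ only rescale $\hat n$ by an orientation-preserving map and leave the degree unchanged. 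Finally, with $\chi_\pm=\mp1$ one obtains $|C_{+-}|=2$, which is checked against Nambu doubling.

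The main obstacle is sign bookkeeping, not topology. Three points need care: the transpose and $\sigma_2$ conjugation in the hole block, where $\sigma_2\sigma^{\ast}_a\sigma_2=-\sigma_a$ combines with the overall minus sign of the hole sector; which band is called ``negative''; and the orientation convention of $F$ relative to $I^{-1}$. I would fix all three by checking one explicit case, $\mathbf q\parallel\hat z$ with the $\hat n$ field of $h_\pm=\dots\mp v_zq_z\sigma_z$, and confirming that the per-node degree is $\chi$. A secondary point is justifying that the pairing deformation keeps the sphere gapped. This follows from the closed-form spectra of Table~\ref{tab:spectra}, since all zeros of $E_-$ lie inside the sphere by hypothesis.
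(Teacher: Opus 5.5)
Your argument is the paper's own: deform $\Delta\to0$ with the sphere kept gapped (the paper cites the same spectra of Table~\ref{tab:spectra}), split the negative-energy bundle into the valence band of $h_i$ and that of the hole block, and add $-\chi_i$ and $+\chi_j$. Your hole block $-\sigma_2h_j^{\ast}(-\mathbf q)\sigma_2=-h_j(\mathbf q)$ is unitarily equivalent to the paper's $h_j^{\mathsf T}(\mathbf q)$.

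One caution concerns the explicit sign check you plan. If $\hat n=\mathbf d/E$ as in the QGT lemma, then $F=\tfrac12\langle\hat n\,\partial\hat n\,\partial\hat n\,I^{-1}\rangle_0$ is already the curvature of the negative band, since it comes from $\mathrm{Tr}[P_-\partial P_-\partial P_-]$. The lower band of $\chi v\mathbf q\cdot\boldsymbol\sigma$ therefore gets $+\chi$ rather than $-\chi$, and a literal check will return $\chi_i-\chi_j$. The paper's proof makes the same $-\chi_i$, $+\chi_j$ assignment, so this is an overall convention sign shared with the paper (orientation, or which of $\pm\mathbf d/E$ is called $\hat n$), not a flaw in your method, and $|C|=2$ is unaffected.
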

\begin{proof}
At $\Delta\to0$ the negative-energy bundle is the direct sum of the
valence band of $h_i$, with Chern number $-\chi_i$, and the valence band
of $h_j^{\mathsf T}$, whose $\hat n$ map differs by a transpose and has
Chern number $+\chi_j$. Shrinking the sphere after $\Delta\to0$, or
keeping all split nodes inside, closes no gap at finite $\mathbf q$
(Table~\ref{tab:spectra}), so the sum is invariant.
\end{proof}

\begin{result}[Node splitting of inter-node $^{1}S_0$ pairing, minimal model]
\label{res:splitting}
The inter-node $^{1}S_0$ spectrum
$E=\sqrt{v^2q_\perp^2+(v_z|q_z|\pm\Delta_0)^2}$ leaves the origin gapped and
places the two zeros of the lower branch at $q_z=\pm\Delta_0/v_z$: each
Weyl node splits into a pair of point nodes symmetrically displaced along
the bond axis. This is the explicit $\mu=0$ realization, for inter-node
singlet pairing, of the node splitting established by Meng and
Balents~\cite{Meng2012}. The anomalous correlator, contracted with
$(i\sigma_2)^\dagger$ as the gap equation requires, is nonzero for generic
$\mathbf q$ (Supplemental Material Sec.~S6), so the gap equation closes
self-consistently and the state is stable at the mean-field level.
\end{result}

Within the minimal model, the discrimination closes as a set of consistency
checks: bare-coupling signs give the four-quadrant channel selection, the
Knight shift constrains the spin sector, and the node structure constrains
$J$ and topology. The blade algebra and closed-form spectra organize these
checks; numerical spectroscopy and band-structure input remain necessary
outside the minimal assumptions.

\subsection{A look back at the cross-domain grammar}

The computational rules of the framework, scalar projection, grade
decomposition, and the diagonal pairing form, are dimension-independent; the
output drifts with two structural parameters: the parity of the center
(Proposition~\ref{prop:centerdrift}) and the signature of the metric
(the metric content of the blade-square formula). The same method reports
the complex-channel merger in odd dimensions and the causal pattern in even
dimensions: the three-dimensional complex-channel merger and the
four-dimensional independence of grade 2 are two faces of one signature
structure under different dimensional parameters.

\section{Cross-domain transfer and methodology}
\label{sec:transfer}

\subsection{The scalar-projection method as a scalar extractor for physics}

The scalar-projection method is portable across domains once
a stage and a scalar observable have been defined: an overall geometric
product is followed by grade-0 projection. Concrete instances differ in the
choice of the stage, the Clifford-algebra container that acts as the main
symmetry space; the remaining spaces then enter as coefficients after
projection. Table~\ref{tab:transfer} collects the instances.

\begin{table*}[t]
\caption{\label{tab:transfer}Cross-domain instances of the
scalar-projection method.}
\begin{ruledtabular}
{\small\setlength{\tabcolsep}{4pt}
\begin{tabular}{lll}
Problem & Stage (main space) & Scalar coefficients\\
\hline
Superconducting pairing & $Cl_{3,0}$ spin space & momentum, orbit, lattice,
fields\\
Magnetism & vector part of $Cl_{3,0}$ & lattice sites, $J_{ij}$\\
Electron-phonon & charge density (grade 0) & phonon displacement\\
Moir\'e superconductors & $Cl_{3,0}$ spin space & moir\'e real-space
modulation\\
Dirac-fermion pairing & $Cl_{1,3}$ spinor space & momentum, gauge field,
mass\\
\end{tabular}
}
\end{ruledtabular}
\end{table*}

\subsection{Relation to Landau order-parameter theory}

The framework is complementary to Landau theory~\cite{Sigrist1991}, as
summarized in Table~\ref{tab:landau}. Landau theory inputs the symmetry of
the order parameter (an external assumption), produces even expansions (a
symmetry assumption), and introduces gradient terms phenomenologically; it
applies to all continuous phase transitions at the thermodynamic level. The
present framework \emph{constrains and organizes} candidate order-parameter
symmetries from the grade structure and momentum structure of the projected
kernel; the actual selected state is fixed by the gap equation or free
energy. The even-power expansion is a corollary of central-rotor symmetry
(the parity theorem is its algebraic echo), and amplitude-phase gradients
acquire an algebraic parameterization through the rotor decomposition,
while their coefficients remain dynamical. In scope, Landau theory is
thermodynamic, while the present framework constrains the microscopic
algebraic structure of spin-$1/2$ pairing symmetry.

\begin{table*}[t]
\caption{\label{tab:landau}The framework relative to Landau theory.}
\begin{ruledtabular}
\begin{tabular}{lll}
Question & Landau theory & This framework\\
\hline
Order-parameter symmetry & input & kernel-constrained candidates\\
Even expansion & symmetry assumption & corollary of rotor symmetry\\
Gradient terms & phenomenological & algebraic rotor/phase parameterization\\
Scope & all continuous transitions & spin-$1/2$ pairing algebra\\
\end{tabular}
\end{ruledtabular}
\end{table*}

\section{Conclusion and outlook}
\label{sec:conclusion}

This paper has reformulated the theory of superconducting pairing in the
GA $Cl_{3,0}$. Its organizing rule is two-level projection:
(i) symmetry-space matrix blocks are represented as Clifford multivectors,
with blade-level algebraic statements made for real coefficients while
residual operator content is carried in ordered coefficients;
(ii) geometric products of those projected blocks are reduced by
$\langle\cdot\rangle_0$ to Hamiltonian scalars. Raw Fock operators remain carriers on Fock space: only their symmetry-space blocks are projected, the residual operator content being carried in ordered coefficients, and a grade-0 scalar before mean-field expectation may still be Fock-operator-valued. This separation distinguishes algebraic
consequences from model-level or comparative statements. The conclusions
are stated in numbered form.

(1) \textbf{Parameterization and Fierz.} The eight-real-dimensional basis of
$Cl_{3,0}$ provides real coordinates for the space of $2\times2$ complex
pairing matrices; the fermionic identity $B^\dagger(V)=-B^\dagger(V^{\mathsf T})$
(Lemma~\ref{lem:anti}) is the root of channel dichotomy, and the Fierz map
$\Phi:G\mapsto Ge_{31}$ makes it explicit. The parameter space is
equivalent to the standard complex-matrix formulation, but it is an algebraic reconstruction rather than a change of representation: it proceeds from the geometric-algebra scalar-projection method to the same physical content.

(2) \textbf{Complex channels} (kernel level): grades $0\oplus3$ excite spin
singlets (even parity) and grades $1\oplus2$ excite triplets (odd parity)
(Theorem~\ref{thm:channels}); at the pairing-matrix level the singlet plane
is $\mathrm{span}\{e_2,e_{31}\}$
(Corollary~\ref{cor:matrix-level}). The kernel-level dictionary aligns
with the center drift of Sec.~\ref{sec:extension}: the singlet is
the center, and the triplet is three Hodge pairs.

(3) \textbf{BCS position.} BCS is the grade-0 kernel line,
$\mathrm{span}\{1\}$, with the pairing vertex on $e_{31}$ (grade 2): the
double grade identity of a scalar glue and a bivector vertex
(Sec.~\ref{sec:bcs}).

(4) \textbf{Signature.} The $(+,+,-,-)$ form is the $\mathfrak u(2)$ trace
form; the Casimir spectrum $-3/+1$ gives the exchange-sign assignments; the
block decomposition connects them to $d$-wave/$p$-wave model limits; the
one-loop RG feeding constant and the Casimir sign are the same number $3$
(Remark~\ref{rem:three}); and the glue-generation picture of Sec.~\ref{sec:glue} together with the
criteria of Sec.~\ref{sec:diagnosis} organizes the framework's internal logic.

(5) \textbf{Quantum-geometry export.} Theorems~\ref{thm:split}
and~\ref{thm:null} give the null-texture criterion for the pairing-sector
quantum metric; applied to the kagome flat band they motivate a comparative
stiffness discussion (Sec.~\ref{sec:kagome}) addressed to the
CsV$_3$Sb$_5$ controversy~\cite{Duan2021,Saykin2023,Xu2022,Farhang2023,
Yoshida2025}; the UTe$_2$ analysis (Sec.~\ref{sec:ute2}) provides the
singlet-leakage bound, the equal-spin criterion, and a stress-switch
prediction within the minimal GL model~\cite{Ran2019,Aoki2019,Nakamine2021,Girod2022}. The relative
phase of the kernel singlet plane is simultaneously the object of topology
(Chern number), quantum geometry (phase-texture metric), and diagnostics
(Kerr, tomography); it is the connecting thread of these applications.

(6) \textbf{Sign duality.} Under the stated blade-diagonal and minimal
two-node assumptions, the chirality-reversal theorem, the node-charge
accounting $C_{ij}=\chi_j-\chi_i$, the closed-form $\mu=0$ spectra, and the
node splitting of the inter-node singlet channel recast several
Weyl-semimetal pairing structures through blade signs~\cite{Bednik2015,Meng2012,Hosur2014,Cho2012,Li2018,Wei2014,Wu2022}.

(7) \textbf{Tomography.} The conditional linear-response reconstruction has
an error bound with $\kappa_2=(1+\varepsilon)/(1-\varepsilon)$ in the toy
probe model, together with rotor-optimized design examples.

(8) \textbf{Dimensional extension.} Center drift, the blade-square formula,
the causal four-dimensional signature, and the $(3,3)$ grade-2 split as the
Killing form of $\mathfrak{so}(1,3)$.

(9) \textbf{External fields and diagnosis.} External fields act through
the adjoint (rotor) action, which is trivial on the singlet kernel plane
(the center of $Cl_{3,0}$): no field mixes a pure singlet kernel,
singlet--triplet mixing is mediated by antisymmetric spin-orbit coupling
through the gap equation, vector/axial-vector fields redistribute the
$m$-components within the triplet sector, a pseudoscalar drive is
rotor-trivial at the kernel level (light-induced chirality is a property
of the driven multiorbital normal state), and stress acts through the
shape factor; Table~\ref{tab:diagnosis} collects the qualitative
diagnostics.

The scalar-projection method transfers across domains once the symmetry
stage and scalar observable are chosen: the remaining spaces enter as
coefficients after projection. Superconductivity, magnetism,
electron-phonon systems, and Dirac-fermion pairing can then share one
syntax; the spin$\otimes$isospin, SU($N$), and spin$\otimes$color
extensions follow the same pattern and are left to future work.

Outlook: tomographic reconstruction from measured data; the DMFT interface,
mapping two-particle vertices onto $Cl_{3,0}$ with extraction of the kernel
from the dynamical susceptibility; the reduction of the integer ($\mathbb Z$)
node charge of Dirac-semimetal pairing blocks to blade and chirality
factors; and experimental assessment of the kagome
quantum-geometry criterion.

\begin{acknowledgments}
The author thanks Professors Hai-Bing Xia, Xin-Bing Huang, and especially Professor Yang-Sheng Xu for their help.
\end{acknowledgments}

\begingroup
\setcounter{section}{0}
\renewcommand{\thesection}{S\arabic{section}}
\setcounter{equation}{0}
\renewcommand{\theequation}{S\arabic{equation}}

\begin{center}
{\large\bfseries Supplemental Material}
\end{center}
\medskip

\section{Blade-by-blade action of the Fierz map}
\label{sec:s1}
The Fierz map $\Phi:G\mapsto Ge_{31}$ with $e_{31}=e_3\wedge e_1$,
$e_{31}^2=-1$, acts on the basis blades as follows:
\begin{equation}
\begin{split}
1\mapsto e_{31},\quad I\mapsto-e_2,\quad e_1\mapsto-e_3,\quad e_2\mapsto I,\quad\\
e_3\mapsto e_1,\quad e_{12}\mapsto e_{23},\quad e_{23}\mapsto-e_{12},\quad
e_{31}\mapsto-1 .
\end{split}
\end{equation}
All entries follow from $e_ie_j+e_je_i=2\delta_{ij}$, $I^2=-1$,
$Ie_i=e_{jk}$ (cyclic), and $e_{31}e_{31}=-1$. In particular
$\Phi(\mathrm{span}\{1,I\})=\mathrm{span}\{e_{31},e_2\}$ and
$\Phi(\{e_i,e_{ij}\})=\{1,e_1,e_3,e_{12},e_{23},I\}$ up to signs, which is
the content of the kernel-level channel dictionary.

\section{Structure constants of the algebraic RG}
\label{sec:s2}
The fermion-loop trace of bilinears reduces to the cyclic scalar projection:
under $e_i\leftrightarrow\sigma_i$, $I\leftrightarrow i$ one has
$\frac12\mathrm{Tr}\,A=\langle A\rangle_0+i\langle A\rangle_3$ for
$A\in Cl_{3,0}$, and the contractions used below pair blades within a
single grade sector, whose products carry no grade-3 part, so
$\frac12\mathrm{Tr}_{\mathrm{loop}}[\Gamma_1\Gamma_2]
=\langle\Gamma_1\Gamma_2\rangle_0$ for all of them. The products required
for the feeding constants are $e_ie_j=\delta_{ij}+e_{ij}$ (for $i\neq j$),
$\langle e_{ij}e_{kl}\rangle_0=\delta_{il}\delta_{jk}-\delta_{ik}\delta_{jl}$,
and $I^2=-1$. Hence
$C^0_{00}=\langle1\rangle_0=1$;
$C^0_{11}=\sum_i\langle e_ie_i\rangle_0=3$;
$C^0_{22}=\sum_{i<j}\langle e_{ij}e_{ij}\rangle_0=-3$;
$C^0_{33}=\langle I\cdot I\rangle_0=-1$. The overall sign of the beta
function is fixed by the convention $\ell\to$ infrared with $V>0$
denoting attraction, Eq.~\eqref{eq:rg-v0} of the main text. All other two-point contractions
vanish by grade orthogonality (Lemma~\ref{lem:ortho} of the main text). The three-point
constants that would complete the beta system (for example the self-contraction
of the triplet texture component) require a careful definition of the
three-point channel projectors and are left to future work; the robust
conclusions of the main text use only the four constants above.

\section{Derivation of the inter-node constraint and blocks}
\label{sec:s3}
Starting from
$H_{\mathrm{pair}}=\sum_{\mathbf k}\sum_{ij}\Delta_{ij,\alpha\beta}(\mathbf k)
c^\dagger_{i\mathbf k\alpha}c^\dagger_{j,-\mathbf k\beta}+\mathrm{h.c.}$,
relabel $(i,\mathbf k,\alpha)\leftrightarrow(j,-\mathbf k,\beta)$ and use
$c^\dagger_{i\alpha}c^\dagger_{j\beta}=-c^\dagger_{j\beta}c^\dagger_{i\alpha}$:
\begin{equation}
\begin{split}
\sum_{\mathbf k}\Delta_{ij,\alpha\beta}(\mathbf k)c^\dagger_{i\mathbf k\alpha}
c^\dagger_{j,-\mathbf k\beta}
=-\sum_{\mathbf k}\Delta_{ij,\beta\alpha}(-\mathbf k)c^\dagger_{j\mathbf k\beta}
c^\dagger_{i,-\mathbf k\alpha}\\
=-\sum_{\mathbf k}\Delta_{ji,\beta\alpha}(-\mathbf k)c^\dagger_{i\mathbf k\alpha}
c^\dagger_{j,-\mathbf k\beta}.
\end{split}
\end{equation}
Therefore
$\hat\Delta_{ij}(\mathbf k)=-\hat\Delta_{ji}{}^{\mathsf T}(-\mathbf k)$, and
the $+-$ block is unconstrained at fixed $\mathbf k$. For a node
pseudo-singlet, $\hat\Delta_{-+}(\mathbf k)=-\hat\Delta_{+-}(\mathbf k)$,
combined with the constraint, gives
$\hat\Delta_{+-}(\mathbf k)=\hat\Delta_{+-}{}^{\mathsf T}(-\mathbf k)$: even
spin-triplet, odd spin-singlet. The pseudo-triplet cases follow with the
opposite sign. The explicit blocks $E_{1..4}$ and $O_{1..4}$ listed in the
main text span the even- and odd-parity sectors compatible with these
assignments. The even sector decomposes directly as $J=0\oplus J=1$; the
odd matrices span the $l=1$, $S=1$ sector and generally require
Clebsch--Gordan recoupling into definite $J=0,1,2$ channels. The $^{5}D$
sector follows at $l=2$.

\section{Proof of the tomography structure theorem}
\label{sec:s4}
For $P=P_0+E$ with $P_0$ diagonal (grade-resolved probes), Weyl's inequality
gives $|\sigma_{\min}(P)-\sigma_{\min}(P_0)|\leq\|E\|_2$, hence
$\sigma_{\min}(P)\geq\sigma_{\min}(P_0)-\|E\|_2$: reconstruction is feasible
whenever the leakage norm stays below the weakest resolved coupling. For the
toy model, $P=\mathbf1_8+\varepsilon\sigma_x\otimes\mathbf1_4$ (in sector space) has
eigenvalues $1\pm\varepsilon$, each fourfold, so
$\kappa_2=(1+\varepsilon)/(1-\varepsilon)$; solving
$\kappa_2\leq\eta\,\mathrm{SNR}$ gives the stated feasibility region.

\section{Proofs of Lemma 3 and Theorem 4 of the main text}
\label{sec:s5}
\paragraph{Lemma 3 (antisymmetry identity).}
$B^\dagger(V)=\sum_{\mathbf k}\sum_{\alpha\beta}V_{\alpha\beta}
c^\dagger_{\mathbf k\alpha}c^\dagger_{-\mathbf k\beta}$.
Relabel $\mathbf k\to-\mathbf k$, then anticommute:
\begin{equation}
\begin{split}
B^\dagger(V)=-\sum_{\mathbf k}\sum_{\alpha\beta}V_{\alpha\beta}
c^\dagger_{\mathbf k\beta}c^\dagger_{-\mathbf k\alpha}
=-\sum_{\mathbf k}\sum_{\alpha\beta}V_{\beta\alpha}\\
c^\dagger_{\mathbf k\alpha}c^\dagger_{-\mathbf k\beta}
=-B^\dagger(V^{\mathsf T}).
\end{split}
\end{equation}

\paragraph{Theorem 4 (complex-channel theorem).}
Survival requires $V=Ge_{31}$ antisymmetric (Lemma~\ref{lem:anti} of the main text). The
antisymmetric $2\times2$ complex matrices form the complex line
$\mathbb C(i\sigma_2)$, i.e.\ the real plane $\mathrm{span}\{e_2,e_{31}\}$.
From the blade table (S1), $G=-Ve_{31}$ for $V=ae_2+be_{31}$ gives
$G=-aI+b\in\mathrm{span}\{1,I\}$; conversely every $a+bI$ maps into the
plane. Each grade-1/2 blade maps outside:
$e_1e_{31}=-e_3$, $e_2e_{31}=I$, $e_3e_{31}=e_1$, $e_{12}e_{31}=e_{23}$,
$e_{23}e_{31}=-e_{12}$, $e_{31}e_{31}=-1$, all symmetric-matrix directions,
and no linear combination of them returns to the antisymmetric plane. Hence
the singlet eigenspace is $\mathrm{span}\{1,I\}$ and the triplet eigenspace
is its orthogonal complement, the six blades $e_i,e_{ij}$. For (iii), $I$ is
central, so $e^{I\varphi}$ acts within $\mathrm{span}\{1,I\}$ as a rotation
of $(g_0,g_3)$, and within the triplet sector it pairs $e_i$ with $e_{jk}$
by the Hodge map.

\section{Derivation of the spectra and the anomalous
correlator}
\label{sec:s6}
\subsection{Master formula}
With $h_\pm=vq_x\sigma_x+vq_y\sigma_y\mp v_zq_z\sigma_z$ and
$D\equiv\hat\Delta_{+-}(\mathbf q)$,
\begin{equation}
\begin{split}
H_A=\begin{pmatrix}h_+ & D\\ D^\dagger & h_-^{\mathsf T}\end{pmatrix},
H_A^2=\begin{pmatrix}\varepsilon^2\mathbf 1+DD^\dagger & M\\
M^\dagger & \varepsilon^2\mathbf 1+D^\dagger D\end{pmatrix},\\
M=h_+D+Dh_-^{\mathsf T}.
\end{split}
\end{equation}
The plus sign in $M$ follows from $H_A^2{}_{eh}=h_+D+DB$ with
$B=h_-^{\mathsf T}$; note that $D^\dagger$ (not $-D$) sits below the
diagonal, and that for the singlet block $D=\Delta_0i\sigma_2$ is
anti-Hermitian, $(i\sigma_2)^\dagger=-i\sigma_2$. For all four channels
below, $DD^\dagger$ and $D^\dagger D$ are scalar and equal, so
$E^2=\varepsilon^2+\lambda\pm m_M$ with $MM^\dagger=m_M^2\mathbf 1$.

\subsection{The four gap matrices}
We use $\sigma_xi\sigma_2=-\sigma_3$, $\sigma_yi\sigma_2=i\mathbf 1$,
$\sigma_zi\sigma_2=\sigma_1$ and
$(i\sigma_2)\sigma_x=\sigma_3$, $(i\sigma_2)\sigma_y=i\mathbf 1$,
$(i\sigma_2)\sigma_z=-\sigma_1$.

\paragraph{$E_1$: $D=\Delta_0i\sigma_2$.}
\begin{equation}
\begin{split}
h_+i\sigma_2=-vq_x\sigma_3+ivq_y\mathbf 1-v_zq_z\sigma_1,\\
i\sigma_2h_-^{\mathsf T}=vq_x\sigma_3-ivq_y\mathbf 1-v_zq_z\sigma_1,
\end{split}
\end{equation}
so $M=-2\Delta_0v_zq_z\sigma_1$, $MM^\dagger=4\Delta_0^2v_z^2q_z^2\mathbf 1$,
and $E^2=\varepsilon^2+\Delta_0^2\pm2\Delta_0v_z|q_z|
=v^2q_\perp^2+(v_z|q_z|\pm\Delta_0)^2$.

\paragraph{$O_1$: $D=\Delta_1q_zi\sigma_2$.}
$M=-2\Delta_1v_zq_z^2\sigma_1$ and
$E^2=v^2q_\perp^2+(v_z\pm\Delta_1)^2q_z^2$.

\paragraph{$O_3$: $D=\Delta(q_x\sigma_x+q_y\sigma_y)$ (Hermitian).}
Direct multiplication gives
\begin{equation}
M=2\Delta\big[vq_x^2\mathbf 1+i\big(v_zq_zq_y\sigma_x-v_zq_zq_x\sigma_y
-vq_xq_y\sigma_z\big)\big],
\end{equation}
\begin{equation}
\begin{split}
MM^\dagger=4\Delta^2\big(v^2q_x^4+v_z^2q_z^2q_\perp^2+v^2q_x^2q_y^2\big)\mathbf 1\\
=4\Delta^2q_\perp^2\big(v^2q_x^2+v_z^2q_z^2\big)\mathbf 1,
\end{split}
\end{equation}
so $E^2=\varepsilon^2+\Delta^2q_\perp^2\pm2\Delta q_\perp
\sqrt{v^2q_x^2+v_z^2q_z^2}$. The lower branch vanishes only if $q_y=0$
and $v_z^2q_z^2=(\Delta^2-v^2)q_x^2$; for $\Delta>v$ the zeros form two
nodal lines in the $x$--$z$ plane.

\paragraph{$O_2,O_4$: $D=\Delta q_\pm\mathbf 1$.}
Since $h_++h_-^{\mathsf T}=2vq_x\sigma_x$,
\begin{equation}
\begin{split}
M=2\Delta vq_\pm q_x\sigma_x,\qquad\\
MM^\dagger=4\Delta^2v^2q_\perp^2q_x^2\mathbf 1,
\end{split}
\end{equation}
and $E^2=v^2q_\perp^2+v_z^2q_z^2+\Delta^2q_\perp^2\pm2\Delta vq_\perp|q_x|$.
A zero away from the origin requires $(v-\Delta)^2q_\perp^2\le0$, so line
nodes appear only at the fine-tuned $\Delta=v$ along $\hat x$.

\subsection{Direct-diagonalization checks}
At $\mathbf q\parallel\hat x$ each block reduces to
$\big(\begin{smallmatrix}vq\sigma_i & \Delta q\sigma_j\\
\Delta q\sigma_j & vq\sigma_i\end{smallmatrix}\big)$-type structures whose
eigenvalues $vq\pm\Delta q$ reproduce the table; at $\mathbf q\parallel\hat z$,
$M$ is proportional to $\sigma_1$ or vanishes, again reproducing the table.

\subsection{Anomalous correlator of the $E_1$ block}
The gap equation contracts the anomalous block with $(i\sigma_2)^\dagger$.
Writing $(H_A^2)^{1/2}=\alpha\mathbf 1+\beta N$ with
$N^2=4\Delta_0^2v_z^2q_z^2\mathbf 1_4$, $2\alpha\beta=1$, and
$\alpha,\beta>0$ (the positive-energy branch),
\begin{equation}
\mathrm{Tr}\big[(i\sigma_2)^\dagger(\mathrm{sign}\,H_A)_{eh}\big]
=\frac{2\Delta_0\big(\alpha^2-v_z^2q_z^2\big)}
{\alpha\big(\alpha^2-4\beta^2\Delta_0^2v_z^2q_z^2\big)},
\end{equation}
which is even in $\mathbf q$ and nonzero on the generic Fermi surface, so
the Brillouin-zone sum in the gap equation is nonzero and a self-consistent
$\Delta_0$ exists.

\endgroup

\end{document}